\newtheorem{Def}{Definition}[section]
\newtheorem{Thm}[Def]{Theorem}
\newtheorem{Prp}[Def]{Proposition}
\newtheorem{Lemma}[Def]{Lemma}
\newcommand{\Proof}{\begin{proof}}
\newcommand{\QED}{\end{proof} \noindent}
\newcommand{\Sl}{\mbox{$\prec \!\!$ \nolinebreak}}
\newcommand{\Sr}{\mbox{\nolinebreak $\succ$}}
\newcommand{\1}{\mbox{\rm 1 \hspace{-1.05 em} 1}}
\DeclareFontFamily{U}{mathx}{\hyphenchar\font45}
\DeclareFontShape{U}{mathx}{m}{n}{
	<5> <6> <7> <8> <9> <10>
	<10.95> <12> <14.4> <17.28> <20.74> <24.88>
	mathx10
}{}
\DeclareSymbolFont{mathx}{U}{mathx}{m}{n}
\DeclareMathAccent{\widecheck}{0}{mathx}{"71}
\DeclareMathAccent{\wideparen}{0}{mathx}{"75}
\begin{document}

\title{An Integral Spectral Representation of the Massive Dirac Propagator \\ in the Kerr Geometry in Eddington--Finkelstein-type Coordinates \vspace{0.1cm}}

\author{Felix Finster\vspace{0.1cm}\let\thefootnote\relax\footnote{e-mail: felix.finster@mathematik.uni-regensburg.de}}

\affiliation{Universit\"at Regensburg, Fakult\"at f\"ur Mathematik, 93040 Regensburg, Germany \vspace{0.1cm}} 

\author{Christian R\"oken\let\thefootnote\relax\footnote{e-mail: christian.roeken@mathematik.uni-regensburg.de}}
 
\affiliation{Universit\"at Regensburg, Fakult\"at f\"ur Mathematik, 93040 Regensburg, Germany \vspace{0.1cm}} 

\affiliation{Departamento de Geometr\'ia y Topolog\'ia, Facultad de Ciencias - Universidad de Granada, Campus de Fuentenueva s/n, 18071 Granada, Spain \vspace{0.4cm}}

\date{August 2018}

\begin{abstract}
\vspace{0.4cm} \noindent \textbf{\footnotesize ABSTRACT.} \, We consider the massive Dirac equation in the non-extreme Kerr geometry in horizon-penetrating advanced Eddington--Finkelstein-type coordinates and derive a functional analytic integral representation of the associated propagator using the spectral theorem for unbounded self-adjoint operators, Stone's formula, and quantities arising in the analysis of Chandrasekhar's separation of variables. This integral representation describes the dynamics of Dirac particles outside and across the event horizon, up to the Cauchy horizon. In the derivation, we first write the Dirac equation in Hamiltonian form and show the essential self-adjointness of the Hamiltonian. For the latter purpose, as the Dirac Hamiltonian fails to be elliptic at the event and the Cauchy horizon, we cannot use standard elliptic methods of proof. Instead, we employ a new, general method for mixed initial-boundary value problems that combines results from the theory of symmetric hyperbolic systems with near-boundary elliptic methods. In this regard and since the time evolution may not be unitary because of Dirac particles impinging on the ring singularity, we also impose a suitable Dirichlet-type boundary condition on a time-like inner hypersurface placed inside the Cauchy horizon, which has no effect on the dynamics outside the Cauchy horizon. We then compute the resolvent of the Dirac Hamiltonian via the projector onto a finite-dimensional, invariant spectral eigenspace of the angular operator and the radial Green's matrix stemming from Chandrasekhar's separation of variables. Applying Stone's formula to the spectral measure of the Hamiltonian in the spectral decomposition of the Dirac propagator, that is, by expressing the spectral measure in terms of this resolvent, we obtain an explicit integral representation of the propagator.
\end{abstract}

\setcounter{tocdepth}{2}

\vspace{0.1cm}

\maketitle

\tableofcontents

\section{Introduction}

\noindent In \cite{FKSM3}, an integral spectral representation of the propagator of the massive Dirac equation in the non-extreme Kerr geometry outside the event horizon is derived in Boyer--Lindquist coordinates. It has been used to study the long-time behavior (including decay rates) and the escape probability of Dirac particles in rotating Kerr black hole spacetimes \cite{FKSM2}. The shortcoming of this integral spectral representation is, however, that it yields a solution of the associated Cauchy problem only outside the event horizon. In the present paper, we construct a generalized integral spectral representation that describes the complete dynamics of Dirac particles outside, across, and inside the event horizon, up to the Cauchy horizon. The methods used in the derivation of our integral spectral representation are quite different from those employed in \cite{FKSM3}, as is now outlined. We work with horizon-penetrating advanced Eddington--Finkelstein-type coordinates \cite{R}, i.e., with an analytic extension of Boyer--Lindquist coordinates that simultaneously covers both the exterior and the interior black hole region without exhibiting singularities at the horizons and features a proper time function $\tau$. Furthermore, we employ a regular Carter tetrad, i.e., a symmetric Newman--Penrose null frame, which reflects, on the one hand, the discrete time and angle reversal isometries of the Kerr geometry and, on the other hand, its Petrov type. After computing the corresponding spin coefficients by solving the first Maurer--Cartan equation of structure, we explicitly determine the massive Dirac equation in Hamiltonian form 
\begin{equation*} 
\textnormal{i} \partial_{\tau} \psi(\tau, r, \theta, \phi) = H \psi(\tau, r, \theta, \phi) \, ,
\end{equation*}
where $H$ denotes the Hamiltonian, $\psi$ is a Dirac $4$-spinor, and $(\tau, r, \theta, \phi)$ are the advanced Eddington--Finkelstein-type coordinates. Moreover, we introduce a suitable scalar product on the associated space of solutions, and show for smooth and compactly supported Dirac $4$-spinors that the Hamiltonian is symmetric with respect to this scalar product. We also establish that it coincides with the canonical scalar product obtained by integrating the normal component (defined with respect to the level sets of $\tau$) of the Dirac current. As we apply the spectral theorem in the derivation of the propagator, we need to establish the essential self-adjointness of the Dirac Hamiltonian. To this end, we first impose a Dirichlet-type boundary condition on a time-like inner boundary surface placed beyond the Cauchy horizon. This boundary condition prevents Dirac particles from impinging on the curvature singularity without affecting their dynamics in the region outside the Cauchy horizon, consequently leading to a unitary time evolution. Then, we apply the method of proof for the essential self-adjointness of the Dirac Hamiltonian for mixed initial-boundary value problems that are not uniformly elliptic introduced in \cite{FRö}. Subsequently, it is possible to derive an integral representation of the Dirac propagator via the spectral theorem for unbounded self-adjoint operators 
\begin{equation*}
\psi = e^{- \textnormal{i} \tau H}\, \psi_0 = \int_{\mathbb{R}} e^{- \textnormal{i} \omega \tau} \, \psi_0 \, \textnormal{d}P_{\omega} \, , 
\end{equation*}
where $\textnormal{d}P_{\omega}$ is the spectral measure of $H$, $\omega$ is the spectral parameter, and $\psi_0 := \psi(\tau = 0, r, \theta, \phi)$ is smooth initial data with compact support. We compute the spectral measure of the Dirac Hamiltonian employing Stone's formula \cite{ReedSimon}, which yields an explicit expression in terms of the resolvent of the Hamiltonian $\textnormal{Res}(H) := (H - \omega_{\textnormal{c}})^{-1}$, for which $\omega_{\textnormal{c}} \in \mathbb{C}\backslash \mathbb{R}$. Furthermore, we determine the resolvent applying quantities obtained in the analysis of the systems of radial and angular ordinary differential equations (ODEs) that arise in Chandrasekhar's separation of variables, that is, after factoring out the azimuthal angle modes, we project the Dirac Hamiltonian onto a finite-dimensional, invariant spectral eigenspace of the angular operator, which leaves us with a matrix-valued first-order ordinary differential operator in the radial variable. The resolvent of this operator can be calculated by means of the Green's matrix of the radial ODE system. For this purpose, we derive generalized Jost-type equations \cite{ReedSimon3} for this system and study specific aspects of their solutions, namely the existence, uniqueness, and boundedness. We moreover use the asymptotic radial solutions at infinity, the event horizon, and the Cauchy horizon for guidance in the implicit construction of the fundamental solutions of the radial ODE system required for the computation of the corresponding Green's matrix. Eventually, by summing over all azimuthal angle modes, we obtain the full resolvent of the Dirac Hamiltonian in separated form. The resulting horizon-penetrating generalization of the integral spectral representation of the Dirac propagator describes the complete dynamics of massive Dirac particles outside and across the event horizon of the non-extreme Kerr geometry, up to the Cauchy horizon.\\[0.5ex]

\noindent \textbf{Main Theorem.} \textit{The massive Dirac propagator in the non-extreme Kerr geometry in horizon-penetrating advanced Eddington--Finkelstein-type coordinates can be expressed via the integral spectral representation 
\begin{equation*}
\psi(\tau, r, \theta, \phi) = \frac{1}{2 \pi \textnormal{i}} \, \sum_{k \in \mathbb{Z}} \, e^{- \textnormal{i} k \phi} \int_{\mathbb{R}} e^{- \textnormal{i} \omega \tau} \, \lim_{\epsilon \searrow 0} \, \bigl[(H_k - \omega - \textnormal{i} \epsilon)^{- 1} - (H_k - \omega + \textnormal{i} \epsilon)^{- 1}\bigr](r, \theta; r', \theta') \, \psi_{0, k}(r', \theta') \, \textnormal{d}\omega \, ,
\end{equation*} 
where $\psi_{0, k}$ is the initial data for fixed $k$-modes and $(H_k - \omega \mp \textnormal{i} \epsilon)^{- 1}$ are the unique resolvents of the Dirac Hamiltonian $H_k$ for fixed $k$-modes on the upper and lower complex half-planes.} \\[0.5ex]

\noindent In a final step, we compute the limit $\epsilon \searrow 0$ of the difference of resolvents leading to a rigorously simplified form of the propagator.

The article is organized as follows. In Section \ref{prel}, we provide the mathematical framework for the Kerr geometry and for the massive Dirac equation. Moreover, we recall required results from the asymptotic analysis of the radial ODE system and from the spectral analysis of the angular ODE system arising in Chandrasekhar's separation of variables without giving proofs. We derive the Hamiltonian formulation and a suitable scalar product for the space of solutions of the associated Cauchy problem in Sections \ref{III} and \ref{IV}, respectively. Furthermore, we verify the symmetry of the Hamiltonian with respect to this scalar product in Appendix \ref{appA}. In Section \ref{V}, we show the essential self-adjointness of the Hamiltonian. Finally, we construct the resolvent of the Hamiltonian and the integral spectral representation of the propagator in Section \ref{VI}. The fundamental radial solutions required for the computation of the resolvent are determined in Appendix \ref{appB}.

\section{Preliminaries} \label{prel}

\noindent We recall the necessary basics on the non-extreme Kerr geometry in horizon-penetrating advanced Eddington--Finkelstein-type coordinates, the general relativistic, massive Dirac equation in the Newman--Penrose formalism, and Chandrasekhar's separation of variables.

The non-extreme Kerr geometry is a connected, orientable and time-orientable, smooth, asymptotically flat Lorentzian 4-manifold $(\mathfrak{M}, \boldsymbol{g})$ with topology $S^2 \times \mathbb{R}^2$, for which the metric $\boldsymbol{g}$ is stationary and axisymmetric and given in horizon-penetrating advanced Eddington--Finkelstein-type coordinates $(\tau, r, \theta, \phi)$ with $\tau \in \mathbb{R}, r \in \mathbb{R}_{> 0}, \theta \in [0, \pi]$, and $\phi \in [0, 2 \pi)$ \cite{R} by 
\begin{equation} \label{Kerrmetric}
\begin{split}
\boldsymbol{g} & = \biggl(1 - \frac{2 M r}{\Sigma} \biggr) \textnormal{d}\tau \otimes \textnormal{d}\tau - \frac{2 M r}{\Sigma} \Bigl(\bigl[\textnormal{d}r - a \sin^2{(\theta)} \, \textnormal{d}\phi\bigr] \otimes \textnormal{d}\tau 
+ \textnormal{d}\tau \otimes \bigl[\textnormal{d}r - a \sin^2{(\theta)} \, \textnormal{d}\phi\bigr]\Bigr) \\ \\
& \hspace{0.4cm} - \biggl(1 + \frac{2 M r}{\Sigma}\biggr) \, \bigl(\textnormal{d}r - a \sin^2{(\theta)} \, \textnormal{d}\phi\bigr) \otimes \bigl(\textnormal{d}r - a \sin^2{(\theta)} \, \textnormal{d}\phi\bigr) - \Sigma \, \textnormal{d}\theta \otimes \textnormal{d}\theta - \Sigma \sin^2{(\theta)} \, \textnormal{d}\phi \otimes  \textnormal{d}\phi \, ,
\end{split}
\end{equation}
where $M$ is the mass and $a M$ the angular momentum of the black hole, with $0 \leq a < M$, and $\Sigma = \Sigma(r, \theta) := r^2 + a^2 \cos^2{(\theta)}$. The event and Cauchy horizons are located at $r_{\pm} := M \pm \sqrt{M^2 - a^2}$, respectively. The advanced Eddington--Finkelstein-type coordinates are an analytic extension of the common Boyer--Lindquist coordinates $(t, r, \theta, \varphi)$ with $t \in \mathbb{R}, r \in \mathbb{R}_{> 0}, \theta \in [0, \pi]$, and $\varphi \in [0, 2 \pi)$ \cite{BL}, covering both the exterior and interior black hole regions while being regular at the horizons. In terms of the Boyer--Lindquist coordinates, the advanced Eddington--Finkelstein-type time and azimuthal angle coordinates read
\begin{equation}\label{AEFSTC}
\begin{split}
\tau & := t + \frac{r_+^2 + a^2}{r_+ - r_-} \, \ln{|r - r_+|} - \frac{r_-^2 + a^2}{r_+ - r_-} \, \ln{|r - r_-|} \\ \\
\phi & := \varphi + \frac{a}{r_+ - r_-} \, \ln{\biggl|\frac{r - r_+}{r - r_-}\biggr|} \, . 
\end{split}
\end{equation}
This horizon-penetrating coordinate system possesses a proper time function, unlike the original advanced Eddington--Finkelstein (null) coordinates \cite{Edd, Fink}. It is advantageous to describe the Kerr geometry in the Newman--Penrose formalism using a regular Carter tetrad \cite{Carter, R}
\begin{equation}\label{CT}
\begin{split}
\boldsymbol{l} & = \frac{1}{\sqrt{2 \Sigma} \, r_+} \, \bigl([\Delta + 4 M r] \, \partial_{\tau} + \Delta \, \partial_r + 2 a \, \partial_{\phi}\bigr) \\ \\
\boldsymbol{n} & = \frac{r_+}{\sqrt{2 \Sigma}} \, (\partial_{\tau} - \partial_r) \\ \\
\boldsymbol{m} & = \frac{1}{\sqrt{2 \Sigma}} \bigl(\textnormal{i} a \sin{(\theta)} \, \partial_{\tau} + \partial_{\theta} + \textnormal{i} \csc{(\theta)} \, \partial_{\phi}\bigr) \\ \\
\boldsymbol{\overline{m}} & = - \frac{1}{\sqrt{2 \Sigma}} \bigl(\textnormal{i} a \sin{(\theta)} \, \partial_{\tau} - \partial_{\theta} + \textnormal{i} \csc{(\theta)} \, \partial_{\phi}\bigr) \, , 
\end{split}
\end{equation}
with $\Delta = \Delta(r) := (r - r_+) (r - r_-) = r^2 - 2 M r + a^2$ being the horizon function, because this frame is adapted to the two principal null directions of the Weyl tensor and to the fundamental discrete time and angle reversal isometries. Thus, since the Kerr geometry is algebraically special and of Petrov type D, one has the computational advantage that the four spin coefficients $\kappa, \sigma, \lambda$, and $\nu$ as well as the four Weyl scalars $\Psi_0, \Psi_1, \Psi_3$, and $\Psi_4$ vanish \cite{BON}, and that specific spin coefficients are linearly dependent. Substituting the Carter tetrad (\ref{CT}) into -- and solving -- the first Maurer--Cartan equation of structure, we obtain the spin coefficients \cite{R}
\begin{equation}\label{SC}
\begin{split}
\kappa & = \sigma = \lambda = \nu = 0 \, , \,\,\,\,\,\,\,\, \gamma = - \frac{r_+}{2^{3/2} \sqrt{\Sigma} \, \bigl(r - \textnormal{i} a \cos{(\theta)}\bigr)} \, , \,\,\,\,\,\,\,\, \epsilon = \frac{r^2 - a^2 - 2 \textnormal{i} a \cos{(\theta)} \, (r - M)}{2^{3/2} \sqrt{\Sigma} \, r_+ \bigl(r - \textnormal{i} a \cos{(\theta)}\bigr)}  \\ \\
\pi & = - \tau = \frac{\textnormal{i} a \sin{(\theta)}}{\sqrt{2 \Sigma} \, \bigl(r - \textnormal{i} a \cos{(\theta)}\bigr)} \, , \,\,\,\,\,\,\,\, \mu = - \frac{r_+}{\sqrt{2 \Sigma} \, \bigl(r - \textnormal{i} a \cos{(\theta)}\bigr)} \, , \,\,\,\,\,\,\,\, \varrho = - \frac{\Delta}{\sqrt{2 \Sigma} \, r_+ \bigl(r - \textnormal{i} a \cos{(\theta)}\bigr)} \\ \\
\alpha & = - \beta = - \frac{1}{(2 \Sigma)^{3/2}} \, \bigl[\bigl(r^2 + a^2\bigr) \cot{(\theta)} - \textnormal{i} r a \sin{(\theta)}\bigr] \, .
\end{split}
\end{equation}

Next, introducing a spin bundle $S\mathfrak{M} = \mathfrak{M} \times \mathbb{C}^4$ on $\mathfrak{M}$ with fibers $S_{\boldsymbol{x}}\mathfrak{M} \simeq \mathbb{C}^4$, $\boldsymbol{x} \in \mathfrak{M}$, we can formulate the general relativistic, massive Dirac equation (without an external potential) 
\begin{equation} \label{mde} 
\bigl(\gamma^{\mu} \nabla_{\mu} + \textnormal{i} m\bigr) \psi(x^{\mu}) = \boldsymbol{0} \, , \,\,\,\,\, \mu \in \{0, 1, 2, 3\} \, ,
\end{equation}
where $\boldsymbol{\nabla}$ is the metric connection on $S\mathfrak{M}$, $\gamma^{\mu}$ are the Dirac matrices, $\psi$ is the Dirac $4$-spinor defined on the fibers $S_{\boldsymbol{x}}\mathfrak{M}$, and $m$ is the invariant fermion mass. In the Newman--Penrose formalism -- by employing a local dyad spinor frame -- (\ref{mde}) becomes the coupled first-order system of partial differential equations 
\begin{equation}\label{DE}
\begin{split}
(n^{\mu} \partial_{\mu} + \overline{\mu} - \overline{\gamma}) \, \mathscr{G}_1 - (\overline{m}^{\, \mu} \partial_{\mu} + \overline{\beta} - \overline{\tau}) \, \mathscr{G}_2 & = \frac{\textnormal{i} m}{\sqrt{2}} \, \mathscr{F}_1 \\
(l^{\mu} \partial_{\mu} + \overline{\varepsilon} - \overline{\varrho}) \, \mathscr{G}_2 - (m^{\mu} \partial_{\mu} + \overline{\pi} - \overline{\alpha}) \, \mathscr{G}_1 & = \frac{\textnormal{i} m}{\sqrt{2}} \, \mathscr{F}_2 \\ 
(l^{\mu} \partial_{\mu} + \varepsilon - \varrho) \, \mathscr{F}_1 + (\overline{m}^{\, \mu} \partial_{\mu} + \pi - \alpha) \, \mathscr{F}_2 & = \frac{\textnormal{i} m}{\sqrt{2}} \, \mathscr{G}_1 \\ 
(n^{\mu} \partial_{\mu} + \mu - \gamma) \, \mathscr{F}_2 + (m^{\mu} \partial_{\mu} + \beta - \tau) \, \mathscr{F}_1 & = \frac{\textnormal{i} m}{\sqrt{2}} \, \mathscr{G}_2 
\end{split}
\end{equation}
with $\psi = (\mathscr{F}_1, \mathscr{F}_2, - \mathscr{G}_1, - \mathscr{G}_2)^{\textnormal{T}}$ \cite{ChandraBook}. Inserting the Carter tetrad (\ref{CT}) and the associated spin coefficients (\ref{SC}) into the system (\ref{DE}), and applying the transformation
\begin{equation}\label{SpinTrafo}
\psi' = \mathscr{P} \psi = (\mathscr{H}_1, \mathscr{H}_2, - \mathscr{J}_1, - \mathscr{J}_2)^{\textnormal{T}} \, , \quad \gamma'^{\mu} = \mathscr{P} \gamma^{\mu} \mathscr{P}^{- 1} \, ,
\end{equation}
where
\begin{equation} \label{spinortr}
\mathscr{P} := \textnormal{diag}\Bigl(\sqrt{r - \textnormal{i} a \cos{(\theta)}}, \sqrt{r - \textnormal{i} a \cos{(\theta)}}, \sqrt{r + \textnormal{i} a \cos{(\theta)}}, \sqrt{r + \textnormal{i} a \cos{(\theta)}} \, \Bigr) \, ,
\end{equation}
we find
\begin{equation}\label{KDE}
\begin{split}
& r_+ \bigl(\partial_{\tau} - \partial_r\bigr) \mathscr{J}_1 + \bigl(\textnormal{i} a \sin{(\theta)} \, \partial_{\tau} - \partial_{\theta} + \textnormal{i} \csc{(\theta)} \, \partial_{\phi} - 2^{- 1} \cot{(\theta)}\bigr) \mathscr{J}_2 = \textnormal{i} m \bigl(r + \textnormal{i} a \cos{(\theta)}\bigr) \mathscr{H}_1 \\ \\
& r_+^{- 1} \bigl([\Delta + 4 M r] \, \partial_{\tau} + \Delta \, \partial_r + 2 a \, \partial_{\phi} + r - M\bigr) \mathscr{J}_2 - \bigl(\textnormal{i} a \sin{(\theta)} \, \partial_{\tau} + \partial_{\theta} + \textnormal{i} \csc{(\theta)} \, \partial_{\phi} + 2^{- 1} \cot{(\theta)}\bigr) \mathscr{J}_1 \\
& = \textnormal{i} m \bigl(r + \textnormal{i} a \cos{(\theta)}\bigr) \mathscr{H}_2 \\ \\
& r_+^{- 1} \bigl([\Delta + 4 M r] \, \partial_{\tau} + \Delta \, \partial_r + 2 a \, \partial_{\phi} + r - M\bigr) \mathscr{H}_1 - \bigl(\textnormal{i} a \sin{(\theta)} \, \partial_{\tau} - \partial_{\theta} + \textnormal{i} \csc{(\theta)} \, \partial_{\phi} - 2^{- 1} \cot{(\theta)}\bigr) \mathscr{H}_2 \\
& = \textnormal{i} m \bigl(r - \textnormal{i} a \cos{(\theta)}\bigr) \mathscr{J}_1 \\ \\
& r_+ \bigl(\partial_{\tau} - \partial_r\bigr) \mathscr{H}_2 + \bigl(\textnormal{i} a \sin{(\theta)} \, \partial_{\tau} + \partial_{\theta} + \textnormal{i} \csc{(\theta)} \, \partial_{\phi} + 2^{- 1} \cot{(\theta)}\bigr) \mathscr{H}_1 = \textnormal{i} m \bigl(r - \textnormal{i} a \cos{(\theta)}\bigr) \mathscr{J}_2 \, ,
\end{split}
\end{equation}
which is the starting point for the derivation of the Hamiltonian formulation of the massive Dirac equation on a Kerr background geometry in horizon-penetrating coordinates presented in the next section. We note in passing that the system (\ref{KDE}) corresponds to the transformed Dirac equation 
\begin{equation}\label{NSFDE}
- \sqrt{\Sigma} \, \gamma^0 \mathscr{P}^{\dagger} \mathscr{P}^{- 1} \Bigl(\gamma'^{\mu} \bigl[\nabla_{\mu} + \mathscr{P} \, \partial_{\mu} \bigl(\mathscr{P}^{- 1}\bigr)\bigr] + \textnormal{i} m\Bigr) \psi' = \boldsymbol{0} \, ,
\end{equation} 
where $\gamma^0 := \textnormal{diag}(1, 1, - 1, - 1)$. This will become relevant in the following construction of both the Hamiltonian formulation and the scalar product.

Finally, for the explicit computation of the resolvent of the Dirac Hamiltonian, we require specific results arising from Chandrasekhar's separation of variables of the system (\ref{KDE}). More precisely, we need the asymptotic solutions of the radial ODE system at infinity, at the event horizon, and at the Cauchy horizon, as well as certain information about the eigenvalues and eigenfunctions of the angular ODE system. In the following, these results are recalled. For a detailed analysis and proofs see \cite{R}. Substituting the separation ansatz
\begin{equation} \label{Chandrasepans}
\begin{split}
\mathscr{H}_1 & = e^{- \textnormal{i} (\omega \tau + k \phi)} \, \mathscr{R}_+(r) \mathscr{T}_+(\theta) \\
\mathscr{H}_2 & = e^{- \textnormal{i} (\omega \tau + k \phi)} \, \mathscr{R}_-(r) \mathscr{T}_-(\theta) \\
\mathscr{J}_1 & = e^{- \textnormal{i} (\omega \tau + k \phi)} \, \mathscr{R}_-(r) \mathscr{T}_+(\theta) \\
\mathscr{J}_2 & = e^{- \textnormal{i} (\omega \tau + k \phi)} \, \mathscr{R}_+(r) \mathscr{T}_-(\theta) \, ,
\end{split}
\end{equation}
in which $\omega \in \mathbb{R}$ and $k \in \mathbb{Z} + 1/2$, into (\ref{KDE}) yields the first-order radial and angular ODE systems 
\begin{align}
R(r_{\star}) \, \widetilde{\mathscr{R}} & = \displaystyle \frac{\sqrt{|\Delta|}}{r^2 + a^2} \left(\begin{array}{cc}
0 & 1 \\
\textnormal{sign}(\Delta) & 0  \label{RADODESYS} \\
\end{array}\right) \xi \, \widetilde{\mathscr{R}} \\ 
A(\theta) \, \mathscr{T} & = \xi \, \mathscr{T} \, , \nonumber
\end{align} 
where
\begin{equation*}
r_{\star} := r + \frac{r_+^2 + a^2}{r_+ - r_-} \, \ln{|r - r_+|} - \frac{r_-^2 + a^2}{r_+ - r_-} \, \ln{|r - r_-|}   
\end{equation*}
is the Regge--Wheeler coordinate,
\begin{align}
R(r_{\star}) & := \1_{\mathbb{C}^2} \, \partial_{r_{\star}} + \frac{\textnormal{i}}{r^2 + a^2} \left(\begin{array}{cc}
- \omega (\Delta + 4 M r) - 2 a k & \hspace{0.3cm} - \sqrt{|\Delta|} \, m r \\ \\ 
\sqrt{|\Delta|} \,\, \textnormal{sign}(\Delta) \, m r & \omega \Delta \\
\end{array}\right) \nonumber \\ \nonumber \\
A(\theta) & := \left(\begin{array}{cc}
m a \cos{(\theta)} & - \partial_{\theta} - 2^{- 1} \cot{(\theta)} + a \omega \sin{(\theta)} + k \, \textnormal{csc}(\theta) \\ \\
\partial_{\theta} + 2^{- 1} \cot{(\theta)} + a \omega \sin{(\theta)} + k \, \textnormal{csc}(\theta) & - m a \cos{(\theta)} 
\end{array}\right)  \label{angop}
\end{align}
are matrix-valued radial and angular operators, $\widetilde{\mathscr{R}} := \bigl(\sqrt{|\Delta|} \, \mathscr{R}_+, r_+ \, \mathscr{R}_-\bigr)^{\textnormal{T}}$ and $\mathscr{T} := (\mathscr{T}_+, \mathscr{T}_-)^{\textnormal{T}}$ are radial and angular vector-valued functions, and $\xi$ is the constant of separation. The asymptotic solutions and the decay properties of the associated errors of the radial ODE system at infinity, the event horizon, and the Cauchy horizon are specified in the lemmas below.
\begin{Lemma} \label{L1}
Every nontrivial solution $\widetilde{\mathscr{R}}$ of (\ref{RADODESYS}) for $|\omega| \geq m$ is asymptotically as $r \rightarrow \infty$ of the oscillatory form
\begin{equation*} 
\widetilde{\mathscr{R}}(r_{\star}) = \widetilde{\mathscr{R}}_{\infty}(r_{\star}) + E_{\infty}(r_{\star}) = D_{\infty}  \left(\begin{array}{c}
\mathfrak{f}_{\infty}^{(1)} \, e^{\textnormal{i} \phi_+(r_{\star})} \\ 
\mathfrak{f}_{\infty}^{(2)} \, e^{- \textnormal{i} \phi_-(r_{\star})}
\end{array}\right) + E_{\infty}(r_{\star}) \, ,
\end{equation*}	
where 	
\begin{equation*} 
D_{\infty} = 
\left(\begin{array}{cc}
\cosh{(\Omega)} & \sinh{(\Omega)} \\ 
\sinh{(\Omega)} & \cosh{(\Omega)}
\end{array}\right) 
\quad \textnormal{with} \quad
\Omega := \frac{1}{4} \ln{\biggl(\frac{\omega - m}{\omega + m}\biggr)} \, ,  
\end{equation*}
the functions 
\begin{equation} \label{asypha}
\phi_{\pm}(r_{\star}) = \textnormal{sign}(\omega) \biggl[- \sqrt{\omega^2 - m^2} \, r_{\star} + M \biggl(\pm \, 2 \omega - \frac{m^2}{\sqrt{\omega^2 - m^2}}\biggr) \ln{(r_{\star})}\biggr] 
\end{equation}
are the asymptotic phases, and $\boldsymbol{\mathfrak{f}}_{\infty} = \bigl(\mathfrak{f}_{\infty}^{(1)}, \mathfrak{f}_{\infty}^{(2)}\bigr)^{\textnormal{T}} \not= \boldsymbol{0}$ is a vector-valued constant. The error $E_{\infty}$ has polynomial decay
\begin{equation*} 
\|E_{\infty}(r_{\star})\| = \bigl\|\widetilde{\mathscr{R}}(r_{\star}) - \widetilde{\mathscr{R}}_{\infty}(r_{\star})\bigr\| \leq \frac{a}{r_{\star}} 
\end{equation*}
for a suitable constant $a \in \mathbb{R}_{> 0}$. For the case $|\omega| < m$, the non-trivial solution $\widetilde{\mathscr{R}}$ has both contributions that show exponential decay $\sim e^{- \sqrt{m^2 - \omega^2} \, r_{\star}}$ and exponential growth $\sim e^{\sqrt{m^2 - \omega^2} \, r_{\star}}$.
\end{Lemma}
\begin{Lemma} \label{L2a3}
Every nontrivial solution $\widetilde{\mathscr{R}}$ of (\ref{RADODESYS}) is asymptotically as $r \searrow r_{\pm}$ of the form
\begin{equation*} 
\widetilde{\mathscr{R}}(r_{\star}) = \widetilde{\mathscr{R}}_{r_{\pm}}(r_{\star}) + E_{r_{\pm}}(r_{\star}) = \left(\begin{array}{c}
\mathfrak{g}_{r_{\pm}}^{(1)} \, e^{\bigl(2 \textnormal{i} \bigl[\omega + k \Omega^{(\pm)}_{\textnormal{Kerr}}\bigr] r_{\star}\bigr)} \\ 
\mathfrak{g}_{r_{\pm}}^{(2)}
\end{array}\right) + E_{r_{\pm}}(r_{\star})
\end{equation*}
with the constants $\boldsymbol{\mathfrak{g}}_{r_{\pm}} = \bigl(\mathfrak{g}_{r_{\pm}}^{(1)}, \mathfrak{g}_{r_{\pm}}^{(2)}\bigr)^{\textnormal{T}} \not= \boldsymbol{0}$ and $\Omega^{(\pm)}_{\textnormal{Kerr}} := a/(2 M r_{\pm})$, as well as an error with exponential decay 
\begin{equation*}
\|E_{r_{\pm}}(r_{\star})\| = \bigl\|\widetilde{\mathscr{R}}(r_{\star}) - \widetilde{\mathscr{R}}_{r_{\pm}}(r_{\star})\bigr\| \leq p_{\pm} \, e^{\pm q_{\pm} r_\star}
\end{equation*}
for suitable constants $p_{\pm}, q_{\pm} \in \mathbb{R}_{> 0}$.
\end{Lemma}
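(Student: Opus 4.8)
\emph{Proof strategy.} The plan is to rewrite the radial system~(\ref{RADODESYS}), in the Regge--Wheeler variable $r_\star$, as an exponentially small perturbation of a diagonal constant-coefficient system, and then to carry out an asymptotic integration of Levinson type, implemented via variation of constants and a Gronwall estimate so as to produce the claimed exponential error bound.

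\emph{Step 1 (reduction).} From $\textnormal{d}r_\star/\textnormal{d}r = (r^2+a^2)/\Delta$ together with the identities $r_+ r_- = a^2$ and $r_+ + r_- = 2M$ (so that $r_\pm^2 + a^2 = 2 M r_\pm$), I pass to the limit $r \searrow r_\pm$ in the coefficient matrix of $O_{r_\star}$ appearing in~(\ref{RADODESYS}). Since $\Delta, \sqrt{|\Delta|} \to 0$ there, the off-diagonal mass entries and the entire right-hand side of~(\ref{RADODESYS}) drop out, and the diagonal reduces to $A_\pm := -\textnormal{i}(r_\pm^2+a^2)^{-1}\,\textnormal{diag}(-4M\omega r_\pm - 2ak,\,0) = \textnormal{diag}\bigl(2\textnormal{i}[\omega + k\Omega^{(\pm)}_{\textnormal{Kerr}}],\,0\bigr)$, using $\Omega^{(\pm)}_{\textnormal{Kerr}} = a/(2Mr_\pm)$. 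Hence~(\ref{RADODESYS}) takes the form $\partial_{r_\star}\widetilde{\mathscr{R}} = (A_\pm + B(r_\star))\,\widetilde{\mathscr{R}}$, where the remainder $B$ collects precisely the terms proportional to $\Delta$, to $\sqrt{|\Delta|}$, or to $r - r_\pm$.

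\emph{Step 2 (exponential smallness of $B$).} Inverting the relation $r \leftrightarrow r_\star$ near $r_\pm$ yields $|r - r_\pm| \le c_\pm \, e^{\pm 2\kappa_\pm r_\star}$ as $r_\star \to \mp\infty$, with surface gravity $\kappa_\pm := (r_+ - r_-)/(2[r_\pm^2 + a^2])$, and therefore $|\Delta|, \sqrt{|\Delta|} \le c_\pm' \, e^{\pm \kappa_\pm r_\star}$; consequently $\|B(r_\star)\| \le C_\pm \, e^{\pm \kappa_\pm r_\star}$, which is integrable up to the respective endpoint.

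\emph{Step 3 (asymptotic integration).} Substituting $\widetilde{\mathscr{R}} = e^{A_\pm r_\star}\mathscr{S}$ gives $\partial_{r_\star}\mathscr{S} = \widetilde{B}\mathscr{S}$ with $\widetilde{B} := e^{-A_\pm r_\star} B \, e^{A_\pm r_\star}$; since $A_\pm$ is anti-Hermitian, $\|e^{\pm A_\pm r_\star}\| = 1$ and $\|\widetilde{B}\| = \|B\|$ remains exponentially decaying. A Gronwall estimate bounds $\mathscr{S}$ near the endpoint, the Cauchy criterion then furnishes $\mathfrak{g}_{r_\pm} := \lim_{r_\star \to \mp\infty}\mathscr{S}(r_\star)$, and estimating $\mathscr{S}(r_\star) - \mathfrak{g}_{r_\pm}$ against the tail integral of $\|\widetilde{B}\|$ produces $\|\mathscr{S}(r_\star) - \mathfrak{g}_{r_\pm}\| \le p_\pm \, e^{\pm q_\pm r_\star}$ with $q_\pm = \kappa_\pm$. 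If $\mathfrak{g}_{r_\pm}$ vanished, the same tail integral (which is $< 1$ sufficiently close to the endpoint) would force $\mathscr{S} \equiv 0$ by a contraction argument, contradicting the assumed non-triviality of $\widetilde{\mathscr{R}}$; hence $\mathfrak{g}_{r_\pm} = (\mathfrak{g}^{(1)}_{r_\pm}, \mathfrak{g}^{(2)}_{r_\pm})^{\textnormal{T}} \neq 0$. Undoing the substitution and using $\|e^{A_\pm r_\star}\| = 1$ gives $\widetilde{\mathscr{R}}(r_\star) = e^{A_\pm r_\star}\mathfrak{g}_{r_\pm} + E_{r_\pm}(r_\star)$ with $\|E_{r_\pm}(r_\star)\| \le p_\pm \, e^{\pm q_\pm r_\star}$, and $e^{A_\pm r_\star}\mathfrak{g}_{r_\pm} = \bigl(\mathfrak{g}^{(1)}_{r_\pm}\,e^{2\textnormal{i}[\omega + k\Omega^{(\pm)}_{\textnormal{Kerr}}]r_\star},\,\mathfrak{g}^{(2)}_{r_\pm}\bigr)^{\textnormal{T}}$ is exactly $\widetilde{\mathscr{R}}_{r_\pm}$.

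\emph{Main obstacle.} The analytically substantive step is Step 2: inverting the logarithmic relation $r_\star = r_\star(r)$ near each horizon accurately enough to upgrade the $o(1)$ decay of $B$ to a genuine exponential bound, and consolidating the three different rates (from the $\Delta$-, $\sqrt{|\Delta|}$-, and $(r - r_\pm)$-terms) into a single admissible $q_\pm$. Once this is in place, the event- and Cauchy-horizon cases are the same argument up to the sign of $r_\star$ and of the exponent.
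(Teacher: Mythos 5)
The paper itself does not prove Lemma~\ref{L2a3}; it refers the reader to the companion paper~\cite{R}, where the asymptotic analysis of the separated radial ODE system is carried out in detail, so there is no in-text proof to compare against. Your argument is a self-contained Levinson-type asymptotic integration and it is correct. The identification of the limiting generator $A_\pm = \textnormal{diag}\bigl(2\textnormal{i}[\omega + k\Omega^{(\pm)}_{\textnormal{Kerr}}],\,0\bigr)$ is right, using $r_\pm^2 + a^2 = 2Mr_\pm$; the exponential bounds on $|r - r_\pm|$, $|\Delta|$, and $\sqrt{|\Delta|}$ in terms of the surface-gravity rate you call $\kappa_\pm$ follow correctly by inverting the tortoise coordinate near each horizon; and since $\omega$ is real, $A_\pm$ is anti-Hermitian, so $e^{A_\pm r_\star}$ is unitary, as you use to transport the perturbation estimate. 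You also correctly take $q_\pm = \kappa_\pm$, the slowest of the three competing decay rates, namely the one coming from the $\sqrt{|\Delta|}$-terms (the $\Delta$- and $(r - r_\pm)$-contributions decay at the faster rate $2\kappa_\pm$). The contraction argument showing $\mathfrak{g}_{r_\pm} \neq 0$ from nontriviality of $\widetilde{\mathscr{R}}$ is standard and sound. This is in all likelihood the same asymptotic-diagonalization method used in~\cite{R}, so your route agrees with the source the authors rely on.
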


\noindent The spectral properties of the eigenvalues and eigenfunctions of the angular ODE system are summarized in the following proposition.
\begin{Prp} For any $\omega \in \mathbb{R}$ and $k \in \mathbb{Z} + 1/2$, the differential operator (\ref{angop}) has a complete set of orthonormal eigenfunctions $(\mathscr{T}_l)_{l \in \mathbb{Z}}$ in $L^2 \bigl((0, \pi), \sin{(\theta)} \, \textnormal{d}\theta\bigr)^2$. The corresponding eigenvalues $\xi_l$ are real-valued and non-degenerate, and can thus be ordered as $\xi_l < \xi_{l + 1}$. Moreover, the eigenfunctions are pointwise bounded and smooth away from the poles,
\begin{equation*}
\mathscr{T}_l \in L^\infty\bigl((0, \pi)\bigr)^2 \cap C^\infty \bigl((0, \pi)\bigr)^2 \, .
\end{equation*}
Both the eigenfunctions $\mathscr{T}_l$ and the eigenvalues $\xi_l$ depend smoothly on $\omega$.
\end{Prp}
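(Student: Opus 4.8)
The plan is to reduce the eigenvalue problem for the operator \eqref{angop} to one for a self-adjoint first-order system of Dirac type on $L^2\bigl((0,\pi),\textnormal{d}\theta\bigr)^2$ and then to combine Weyl's limit-point/limit-circle theory with analytic perturbation theory. First I would conjugate with the unitary $U\colon L^2\bigl((0,\pi),\sin\theta\,\textnormal{d}\theta\bigr)^2 \to L^2\bigl((0,\pi),\textnormal{d}\theta\bigr)^2$, $(U\mathscr{T})(\theta):=\sqrt{\sin\theta}\,\mathscr{T}(\theta)$. Since $U\bigl(\partial_\theta+\tfrac12\cot\theta\bigr)U^{-1}=\partial_\theta$, the $\tfrac12\cot\theta$ terms of $O_\theta$ are exactly absorbed, and $\widehat{O}_\theta:=UO_\theta U^{-1}$ takes the form $J\,\partial_\theta+V(\theta)$ with the constant skew-symmetric matrix $J=\left(\begin{smallmatrix}0&-1\\1&0\end{smallmatrix}\right)$ and $V(\theta)$ the real-symmetric matrix with diagonal entries $\pm\sqrt{2}\,\mu_\star a\cos\theta$ and off-diagonal entries $a\omega\sin\theta+k\,\textnormal{csc}(\theta)$; in particular $\widehat{O}_\theta$ is formally self-adjoint on $C^\infty_c\bigl((0,\pi)\bigr)^2$.

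Next I would analyze the two singular endpoints. Near $\theta=0$ (and analogously near $\theta=\pi$) the equation $\widehat{O}_\theta\mathscr{T}=\xi\mathscr{T}$ reads, to leading order, $\mathscr{T}'=\theta^{-1}\,\textnormal{diag}(-k,k)\,\mathscr{T}+(\textnormal{bounded})\,\mathscr{T}$, so the Frobenius exponents are $\pm k$; since $k\in\mathbb{Z}+\tfrac12$ we have $|k|\geq\tfrac12$, hence $\theta^{-|k|}\notin L^2$ near $\theta=0$ while $\theta^{|k|}\in L^2$, i.e.\ exactly a one-dimensional space of solutions is square-integrable near each endpoint. Thus both endpoints are in the limit-point case, so by Weyl's alternative $\widehat{O}_\theta$ is essentially self-adjoint and has a unique self-adjoint extension $A$ (its closure). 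The remaining structural claims then follow from standard spectral theory: the resolvent of $A$ is compact (its integral kernel, assembled from the regular fundamental solutions, is Hilbert--Schmidt; alternatively one uses the compact embedding $H^1\hookrightarrow L^2$ after excising the singular solution branch at each endpoint), so $\sigma(A)$ is purely discrete and, by self-adjointness, real, with eigenfunctions forming an orthonormal basis of $L^2\bigl((0,\pi),\textnormal{d}\theta\bigr)^2$ and hence, conjugating back with $U^{-1}$, of $L^2\bigl((0,\pi),\sin\theta\,\textnormal{d}\theta\bigr)^2$; each eigenvalue is non-degenerate because an eigenfunction must lie in the one-dimensional square-integrable solution space at $\theta=0$; and $A$ is unbounded both above and below, since $\sigma_3\widehat{O}_\theta\sigma_3=-\widehat{O}_\theta+2\sqrt{2}\,\mu_\star a\cos\theta\,\sigma_3$ exhibits $A$ as unitarily equivalent to $-A$ modulo a bounded operator, so boundedness below of $A$ would force $A$ to be bounded, which is impossible for a nontrivial first-order differential operator. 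One may therefore enumerate $\sigma(A)=\{\xi_l\}_{l\in\mathbb{Z}}$ with $\xi_l<\xi_{l+1}$.

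For the regularity of the eigenfunctions I would use that the coefficients of $O_\theta$ are real-analytic on $(0,\pi)$: rewriting the eigenvalue equation as a linear first-order system $\mathscr{T}'=(\textnormal{analytic in }\theta)\,\mathscr{T}$ and bootstrapping gives $\mathscr{T}_l\in C^\infty\bigl((0,\pi)\bigr)^2$ (in fact real-analytic there). For pointwise boundedness I would run the Frobenius analysis on $O_\theta$ itself: the square-integrable solution near $\theta=0$ behaves like $\theta^{|k|-\frac12}$ with $|k|-\tfrac12\geq 0$, hence is bounded there, likewise at $\theta=\pi$, and continuous in between; thus $\mathscr{T}_l\in L^\infty\bigl((0,\pi)\bigr)^2$.

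Finally, for the dependence on $\omega$ I would observe that $\widehat{O}_\theta(\omega)=\widehat{O}_\theta(\omega_0)+(\omega-\omega_0)\,a\sin\theta\left(\begin{smallmatrix}0&1\\1&0\end{smallmatrix}\right)$ differs by a bounded self-adjoint perturbation on a fixed domain, so $\{\widehat{O}_\theta(\omega)\}_{\omega\in\mathbb{R}}$ is a self-adjoint holomorphic family of type (A). By the Kato--Rellich perturbation theory each isolated eigenvalue and the associated spectral projection are holomorphic in $\omega$; simplicity of all eigenvalues forbids collisions of branches (a collision would produce a degenerate eigenvalue), so the $\xi_l(\omega)$ and the normalized eigenfunctions $\mathscr{T}_l(\,\cdot\,,\omega)$ are globally real-analytic, in particular $C^\infty$, in $\omega$, and joint smoothness in $(\theta,\omega)$ follows from the smooth dependence of solutions of \eqref{AODESYS} on the parameters $\xi,\omega$ and on initial data. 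I expect the only genuinely delicate step to be the endpoint analysis — establishing the limit-point property and the compactness of the resolvent in the presence of the $\textnormal{csc}(\theta)$ singularity, with $|k|=\tfrac12$ the borderline case that has to be checked directly — after which the statement follows from routine spectral and perturbation theory.
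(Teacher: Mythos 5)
The paper itself gives no proof of this Proposition: the surrounding text in Section~\ref{prel} explicitly states that the radial asymptotics and the angular spectral results are ``recalled'' and refers the reader to~\cite{R} (and, for the Boyer--Lindquist analogue, ultimately to the techniques in~\cite{FKSM3}) for the detailed analysis and proofs. So there is no in-paper argument to compare against.

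Taken on its own, your plan is a sound and fairly complete route to the statement, and it is the standard one for this kind of angular Chandrasekhar operator: the conjugation by $\sqrt{\sin\theta}$ absorbing the $\tfrac12\cot\theta$ terms, the reduction to a formally self-adjoint Dirac-type system $J\partial_\theta+V$ on $L^2((0,\pi),\mathrm{d}\theta)^2$, the Frobenius/limit-point analysis at $\theta=0,\pi$, essential self-adjointness via Weyl's alternative, compact resolvent for discreteness and completeness, the one-dimensionality of the $L^2$-solution space for simplicity, the $\sigma_3$-conjugation to rule out semi-boundedness, and type-(A) holomorphic dependence on $\omega$ for smoothness of $\xi_l(\omega)$ and $\mathscr{T}_l(\cdot,\omega)$. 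Two of your own ``to be checked'' points close cleanly: (i) the borderline $|k|=\tfrac12$ is harmless because even with indicial roots differing by an integer, the Frobenius second solution retains a leading $\theta^{-|k|}$ term (possibly with an additional $\theta^{|k|}\log\theta$ contribution), so it still fails to be square-integrable near the endpoint and the limit-point property persists; (ii) boundedness of the eigenfunctions follows exactly as you say, since undoing the conjugation multiplies by $(\sin\theta)^{-1/2}$ and $|k|-\tfrac12\ge 0$. One small point worth making explicit in the semi-boundedness argument: what you really use is that a self-adjoint operator with compact resolvent on the infinite-dimensional space $L^2((0,\pi),\mathrm{d}\theta)^2$ has infinitely many eigenvalues accumulating only at $\pm\infty$; combined with the $\sigma_3$-relation $\sigma_3\widehat{O}_\theta\sigma_3=-\widehat{O}_\theta+2\sqrt2\,\mu_\star a\cos\theta\,\sigma_3$, boundedness below would force boundedness above, and hence a finite accumulation point of the spectrum, which is impossible. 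With that bookkeeping made explicit the argument is complete, and the $\mathbb{Z}$-indexing $\xi_l<\xi_{l+1}$ follows.
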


\section{Hamiltonian Formulation of the Massive Dirac Equation in the Non-extreme Kerr Geometry in Horizon-penetrating Coordinates} \label{III}

\noindent For the derivation of the Hamiltonian formulation of the massive Dirac equation in the non-extreme Kerr geometry in horizon-penetrating advanced Eddington--Finkelstein-type coordinates, it is advantageous to first rewrite the system (\ref{KDE}) in the form
\begin{equation*} 
(\mathcal{R} + \mathcal{A}) \psi' = \boldsymbol{0} \, ,
\end{equation*}
where
\begin{equation}\label{MOR}
\mathcal{R} := \left(\begin{array}{cccc}
- \textnormal{i} m r & 0 & - \mathscr{D}_-& 0 \\
0 & - \textnormal{i} m r & 0 & - \mathscr{D}_+ \\
\mathscr{D}_+ & 0 & \textnormal{i} m r & 0 \\
0 & \mathscr{D}_- & 0 & \textnormal{i} m r 
\end{array}\right) 
\end{equation}
and
\begin{equation}\label{MOA}
\mathcal{A} := \left(\begin{array}{cccc}
m a \cos{(\theta)} & 0 & 0 & \overline{\mathscr{L}} \\
0 & m a \cos{(\theta)} & \mathscr{L} & 0 \\
0 & \overline{\mathscr{L}} & m a \cos{(\theta)} & 0 \\
\mathscr{L} & 0 & 0 & m a \cos{(\theta)} 
\end{array}\right)
\end{equation}
are matrix-valued differential operators with
\begin{equation*}
\begin{split}
\mathscr{D}_+ & := r_+^{- 1} \, \bigl([\Delta + 4 M r] \, \partial_{\tau} + \Delta \, \partial_{r} + 2 a \, \partial_{\phi} + r - M\bigr) \\ \\
\mathscr{D}_- & := r_+ (\partial_{\tau} - \partial_{r}) \\ \\
\mathscr{L} & := \textnormal{i} a \sin{(\theta)} \, \partial_{\tau} + \partial_{\theta} + \textnormal{i} \, \textnormal{csc}(\theta) \, \partial_{\phi} + 2^{- 1} \cot{(\theta)} \, .
\end{split}
\end{equation*}
We then separate the $\tau$-derivative and multiply the resulting equation by the inverse of the matrix
\begin{equation} \label{trdm}
\widetilde{\gamma}'^{\tau} := - \sqrt{\Sigma} \, \gamma^0 \mathscr{P}^{\dagger} \mathscr{P}^{- 1} \gamma'^{\tau} = \left(\begin{array}{cccc}
0 & 0 & - r_+ & - \textnormal{i} a \sin{(\theta)} \\
0 & 0 & \textnormal{i} a \sin{(\theta)} & - r_+^{- 1} [\Delta + 4 M r] \\
r_+^{- 1} [\Delta + 4 M r] & - \textnormal{i} a \sin{(\theta)} & 0 & 0 \\
\textnormal{i} a \sin{(\theta)} & r_+ & 0 & 0
\end{array}\right)
\end{equation}
(cf.\ Eq.\ (\ref{NSFDE})) as well as by the imaginary unit, which leads to the Schr\"odinger-type equation
\begin{equation}\label{DEHF}
\textnormal{i} \partial_{\tau} \psi' = - \textnormal{i} \, (\widetilde{\gamma}'^{\tau})^{- 1} \bigl(\mathcal{R}^{(3)} + \mathcal{A}^{(3)}\bigr) \psi' =: H \psi',
\end{equation}
where $\mathcal{R}^{(3)}$ and $\mathcal{A}^{(3)}$ contain the first-order spatial and all zero-order contributions of the operators (\ref{MOR}) and (\ref{MOA}), respectively. The Dirac Hamiltonian $H$ may be recast in the more convenient form 
\begin{equation} \label{DHCF}
H = \alpha^j \partial_j + \mathscr{V} \, , \quad j \in \{r, \theta, \phi\} \, ,
\end{equation}
with the matrix-valued coefficients 
\begin{align}
\alpha^r & := - \frac{1}{\Sigma + 2 M r} \, \left(\begin{array}{cccc}
\textnormal{i} \Delta & r_+ a \sin{(\theta)} & 0 & 0 \\
r_+^{- 1} \Delta \, a \sin{(\theta)} & - \textnormal{i} \, (\Delta + 4 M r) & 0 & 0 \\
0 & 0 & - \textnormal{i} \, (\Delta + 4 M r) & r_+^{- 1} \Delta \, a \sin{(\theta)}\\
0 & 0 & r_+ a \sin{(\theta)} & \textnormal{i} \Delta 
\end{array}\right) \label{Br}\\ \nonumber \\
\alpha^{\theta} & := - \frac{1}{\Sigma + 2 M r} \, \left(\begin{array}{cccc}
- a \sin{(\theta)} & \textnormal{i} r_+ & 0 & 0 \\
\textnormal{i} r_+^{- 1} [\Delta + 4 M r]& a \sin{(\theta)} & 0 & 0 \\
0 & 0 & - a \sin{(\theta)} & - \textnormal{i} r_+^{- 1} [\Delta + 4 M r] \\
0 & 0 & - \textnormal{i} r_+  &  a \sin{(\theta)}
\end{array}\right) \\ \nonumber \\
\alpha^{\phi} & := - \frac{1}{\Sigma + 2 M r} \, \left(\begin{array}{cccc}
\textnormal{i} a & r_+ \textnormal{csc}(\theta) & 0 & 0 \\
r_+^{- 1} \textnormal{csc}(\theta) \, (\Delta - 2 \Sigma) & - \textnormal{i} a & 0 & 0 \\
0 & 0 & - \textnormal{i} a & r_+^{- 1} \textnormal{csc}(\theta) \, (\Delta - 2 \Sigma) \\
0 & 0 & r_+ \textnormal{csc}(\theta) & \textnormal{i} a 
\end{array}\right) \label{Bphi}
\end{align}
and the potential
\begin{equation}
\begin{split}\label{PotV}
\mathscr{V} := - \frac{1}{\Sigma + 2 M r} \,
&\left(\begin{array}{cc}
\mathscr{B}_1 & \mathscr{B}_2 \\
\mathscr{B}_3 & \mathscr{B}_4
\end{array}\right),
\end{split}
\end{equation}
where the quantities $\mathscr{B}_k$, $k \in \{1, 2, 3, 4\}$, are the $(2 \times 2)$-blocks
\begin{equation} \label{B1234}
\begin{split}
\mathscr{B}_1 & := \frac{1}{2} \left(\begin{array}{cc}
2 \textnormal{i} \, (r - M) - a \cos{(\theta)} & \textnormal{i} r_+ \cot{(\theta)} \\ \\
r_+^{- 1} \bigl[2 a \sin{(\theta)} \, (r - M) + \textnormal{i} \, \cot{(\theta)} \, (\Delta + 4 M r) \bigr] & a \cos{(\theta)} 
\end{array}\right) \\ \\
\mathscr{B}_2 & := - m \, \bigl(r - \textnormal{i} a \cos{(\theta)}\bigr) \left(\begin{array}{cc} 
r_+ & \textnormal{i} a \sin{(\theta)} \\ \\
- \textnormal{i} a \sin{(\theta)} & r_+^{- 1} (\Delta + 4 M r)  
\end{array}\right) \\ \\
\mathscr{B}_3 & := - m \, \bigl(r + \textnormal{i} a \cos{(\theta)}\bigr) \left(\begin{array}{cc}
r_+^{- 1} (\Delta + 4 M r) & - \textnormal{i} a \sin{(\theta)} \\ \\
\textnormal{i} a \sin{(\theta)} & r_+ 
\end{array}\right) \\ \\
\mathscr{B}_4 & := \frac{1}{2} \left(\begin{array}{cc}
- a \cos{(\theta)} & r_+^{- 1} \bigl[2 a \, \sin{(\theta)} \, (r - M) - \textnormal{i} \, \cot{(\theta)} \, (\Delta + 4 M r)\bigr] \\ \\
- \textnormal{i} r_+ \cot{(\theta)} & 2 \textnormal{i} \, (r - M) + a \cos{(\theta)}
\end{array}\right).
\end{split}
\end{equation}

\section{The Canonical Scalar Product} \label{IV}

\noindent In order to set up a Hilbert space that contains the solutions of (\ref{DEHF})
\begin{equation*}
\mathcal{H} := \overline{\bigl(\textnormal{Sol}(H - \textnormal{i} \partial_{\tau}), \, (\, \cdot \, | \, \cdot \,)\bigr)} \quad \textnormal{for which} \quad \textnormal{Sol}(H - \textnormal{i} \partial_{\tau}) = \bigl\{\psi' \in L^2(\mathfrak{M}, S\mathfrak{M}) \, | \, (H - \textnormal{i} \partial_{\tau}) \, \psi' = \boldsymbol{0}\bigr\} \, ,
\end{equation*}
and to establish the symmetry property of the Hamiltonian $H$
\begin{equation*}
(\phi' | H \psi') = (H \phi' | \psi') \quad \textnormal{with} \quad \phi', \psi' \in \textnormal{Sol}(H - \textnormal{i} \partial_{\tau})
\end{equation*}
(or its self-adjointness), we require a suitable scalar product $(\, \cdot \, | \, \cdot \,)$. We thus work with the scalar product \cite{FKSM3}
\begin{equation} \label{GSP}
(\psi | \phi)_{|\mathfrak{N}_{\tau}} := \int_{\mathfrak{N}_{\tau}} \Sl \psi | \boldsymbol{\slashed{\nu}} \phi \Sr \, \textnormal{d}\mu_{|\mathfrak{N}_{\tau}} 
\end{equation}
defined on the space-like hypersurface $\mathfrak{N}_{\tau} := \{\tau = \textnormal{const.}, r, \theta, \phi\}$, where
\begin{equation}\label{spinsp}
\Sl \cdot | \cdot \Sr : \,\, S_{\boldsymbol{x}}\mathfrak{M} \times S_{\boldsymbol{x}}\mathfrak{M} \rightarrow \mathbb{C} \, , \quad (\psi, \phi) \mapsto \psi^{\star} \phi
\end{equation}
denotes the indefinite spin scalar product of signature $(2, 2)$, $\psi^{\star} := \psi^{\dagger} \mathscr{S}$ the adjoint Dirac spinor, $\boldsymbol{\slashed{\nu}} = \gamma^{\mu} \nu_{\mu}$ the Clifford contraction of the future-directed, time-like normal $\boldsymbol{\nu}$, and $\textnormal{d}\mu_{|\mathfrak{N}_{\tau}} = \sqrt{|\textnormal{det}(\boldsymbol{g}_{|\mathfrak{N}_{\tau}})|} \, \textnormal{d}\phi \, \textnormal{d}\theta \, \textnormal{d}r$ is the invariant measure on $\mathfrak{N}_{\tau}$, in which $\boldsymbol{g}_{|\mathfrak{N}_{\tau}}$ is the induced Riemannian metric. The matrix $\mathscr{S}$ is defined via the relation 
\begin{equation} \label{defeq}
\gamma^{\mu \dagger} := \mathscr{S} \gamma^{\mu} \mathscr{S}^{- 1} \, .
\end{equation}
We note that this scalar product is independent of the choice of the specific space-like hypersurface $\mathfrak{N}_{\tau}$. This can be easily shown by means of Gauss' theorem and current conservation. In the following, we explicitly compute the above quantities and subsequently derive a more convenient representation for the scalar product. We begin with the calculation of the matrix $\mathscr{S}$. Via (\ref{trdm}) and the spinor transformation (\ref{SpinTrafo}), we find 
\begin{equation*}
\gamma^{\mu} = - \frac{1}{\sqrt{\Sigma}} \, (\mathscr{P}^{\dagger})^{- 1} \, \gamma^0 \, \widetilde{\gamma}'^{\mu} \, \mathscr{P} 
\end{equation*}
and hence, using the defining equation (\ref{defeq}), we obtain the expression
\begin{equation*}
\mathscr{S} = \left(\begin{array}{cccc}
0 & 0 & 1 & 0 \\
0 & 0 & 0 & 1 \\
1 & 0 & 0 & 0 \\
0 & 1 & 0 & 0
\end{array}\right).
\end{equation*}
Next, we determine the normal vector field $\boldsymbol{\nu}$ by means of the conditions 
\begin{equation*}
\left\langle \boldsymbol{\nu} | \partial_{r} \right\rangle_{|\boldsymbol{g}} = \left\langle \boldsymbol{\nu} | \partial_{\theta} \right\rangle_{|\boldsymbol{g}} = \left\langle \boldsymbol{\nu} | \partial_{\phi} \right\rangle_{|\boldsymbol{g}} = 0 \,\,\,\,\,\,\,\, \textnormal{and} \,\,\,\,\,\,\, \left\langle \boldsymbol{\nu} | \boldsymbol{\nu} \right\rangle_{|\boldsymbol{g}} = 1 \, ,
\end{equation*}
where $\left\langle \, \cdot \, | \, \cdot \, \right\rangle_{\boldsymbol{g}} := \boldsymbol{g}(\,\cdot\,,\,\cdot\,) = g_{\mu \nu} \, \textnormal{d}x^\mu \otimes \textnormal{d}x^\nu(\,\cdot\,,\,\cdot\,)$ is the spacetime inner product on $\mathfrak{M}$. Accordingly, employing (\ref{Kerrmetric}) yields
\begin{equation*}
\boldsymbol{\nu} = \biggl(1 + \frac{2 M r}{\Sigma}\biggr)^{1/2} \,\partial_{\tau} - \frac{2 M r}{\Sigma} \, \biggl(1 + \frac{2 M r}{\Sigma}\biggr)^{-1/2} \, \partial_r \, .
\end{equation*}
The corresponding dual co-vector reads 
\begin{equation}\label{DNV}
\boldsymbol{\nu} = \biggl(1 + \frac{2 M r}{\Sigma}\biggr)^{-1/2} \, \textnormal{d}\tau \, .
\end{equation}
Moreover, the induced metric $\boldsymbol{g}_{|\mathfrak{N}_{\tau}}$ on the hypersurface $\mathfrak{N}_{\tau}$ is simply the restriction of (\ref{Kerrmetric}) to $\mathfrak{N}_{\tau}$ and, thus, we obtain
\begin{equation*}
\boldsymbol{g}_{|\mathfrak{N}_{\tau}} = - \biggl(1 + \frac{2 M r}{\Sigma}\biggr) \bigl(\textnormal{d}r - a \sin^2{(\theta)} \, \textnormal{d}\phi\bigr) \otimes \bigl(\textnormal{d}r - a \sin^2{(\theta)} \, \textnormal{d}\phi\bigr) - \Sigma \, \textnormal{d}\theta \otimes \textnormal{d}\theta - \Sigma \sin^2{(\theta)} \, \textnormal{d}\phi \otimes \textnormal{d}\phi \, .
\end{equation*}
The associated Jacobian determinant in the volume measure $\textnormal{d}\mu_{|\mathfrak{N}_{\tau}}$ becomes
\begin{equation}\label{JD}
\sqrt{|\textnormal{det}(\boldsymbol{g}_{|\mathfrak{N}_{\tau}})|} = \Sigma \sin{(\theta)} \biggl(1 + \frac{2 M r}{\Sigma}\biggr)^{1/2} \, .
\end{equation}
We now express the scalar product (\ref{GSP}) in terms of the primed quantities (\ref{SpinTrafo}) used in (\ref{DEHF}), and substitute (\ref{DNV}) as well as (\ref{JD}). This results in 
\begin{equation} \label{GSP2}
(\psi' | \phi')_{|\mathfrak{N}_{\tau}} = \iiint \psi'^{\dagger} \mathscr{S}' \gamma'^{\tau} \phi' \, \Sigma \sin{(\theta)} \, \textnormal{d}\phi \, \textnormal{d}\theta \, \textnormal{d}r \, .
\end{equation}
Again employing (\ref{trdm}), that is with $\gamma'^{\tau} = - \mathscr{P} (\mathscr{P}^{\dagger})^{- 1} \gamma^0 \,  \widetilde{\gamma}'^{\tau}/\sqrt{\Sigma}$, the scalar product (\ref{GSP2}) yields
\begin{equation}\label{GSP3}
\begin{split}
(\psi' | \phi')_{|\mathfrak{N}_{\tau}} & = - \iiint \psi'^{\dagger} \, \mathscr{S}' \mathscr{P} (\mathscr{P}^{\dagger})^{- 1} \gamma^0 \,  \widetilde{\gamma}'^{\tau} \phi' \, \sqrt{\Sigma} \sin{(\theta)} \, \textnormal{d}\phi \, \textnormal{d}\theta \, \textnormal{d}r \\ \\
& = - \iiint \psi'^{\dagger} \, \mathscr{P} \mathscr{P}^{\dagger} \mathscr{S}' \mathscr{P} (\mathscr{P}^{\dagger})^{- 1} \gamma^0 \,  \widetilde{\gamma}'^{\tau} \phi' \, \sin{(\theta)} \, \textnormal{d}\phi \, \textnormal{d}\theta \, \textnormal{d}r \\ \\
& = - \iiint \psi'^{\dagger} \, \mathscr{P} \mathscr{S} (\mathscr{P}^{\dagger})^{- 1} \gamma^0 \,  \widetilde{\gamma}'^{\tau} \phi' \, \sin{(\theta)} \, \textnormal{d}\phi \, \textnormal{d}\theta \, \textnormal{d}r \\ \\
& = - \iiint \psi'^{\dagger} \, \mathscr{S} \mathscr{P}^{\dagger} (\mathscr{P}^{\dagger})^{- 1} \gamma^0 \,  \widetilde{\gamma}'^{\tau} \phi' \, \sin{(\theta)} \, \textnormal{d}\phi \, \textnormal{d}\theta \, \textnormal{d}r \\ \\
& = \iiint \psi'^{\dagger} \, \Gamma^{\tau} \phi' \, \sin{(\theta)} \, \textnormal{d}\phi \, \textnormal{d}\theta \, \textnormal{d}r \, ,
\end{split}
\end{equation}
where
\begin{equation}\label{NGAM}
\Gamma^{\tau} := - \mathscr{S} \gamma^0 \, \widetilde{\gamma}'^{\tau} = \left(\begin{array}{cccc}
r_+^{- 1} [\Delta + 4 M r] & - \textnormal{i} a \sin{(\theta)} & 0 & 0 \\
\textnormal{i} a \sin{(\theta)} & r_+ & 0 & 0 \\
0 & 0 & r_+ & \textnormal{i} a \sin{(\theta)} \\
0 & 0 & - \textnormal{i} a \sin{(\theta)} & r_+^{- 1} [\Delta + 4 M r]
\end{array}\right).
\end{equation}
We point out that in the above derivation, we have first applied the relation $\sqrt{\Sigma} \, \1_{\mathbb{C}^4} = \mathscr{P} \mathscr{P}^{\dagger}$, then the transformation law for the matrix $\mathscr{S}'$, namely $\mathscr{S} = \mathscr{P}^{\dagger} \mathscr{S}' \mathscr{P}$, and finally we have used the fact that both $\mathscr{S}$ and the product $\mathscr{P} \mathscr{S}$ are self-adjoint, which leads to $\mathscr{P} \mathscr{S} = \mathscr{S} \mathscr{P}^{\dagger}$. Besides, the integration limits are suppressed for ease of notation. The eigenvalues $\lambda_1, \lambda_2$ of the matrix (\ref{NGAM}) are positive 
\begin{equation*}
\begin{split}
& \lambda_1 = \frac{1}{2} \Biggl(r_+ + \frac{\Delta + 4 M r}{r_+} + \sqrt{\biggl(r_+ - \frac{\Delta + 4 M r}{r_+}\biggr)^2 + 4 a^2 \sin^2{(\theta)}} \,\, \Biggr) > 0 \\ \\
& \lambda_2 = \frac{1}{2} \Biggl(r_+ + \frac{\Delta + 4 M r}{r_+} - \sqrt{\biggl(r_+ + \frac{\Delta + 4 M r}{r_+}\biggr)^2 - 4 (\Sigma + 2 M r)} \,\, \Biggr) > 0 
\end{split}
\end{equation*}
and with algebraic multiplicities $\mu_A(\lambda_1) = \mu_A(\lambda_2) = 2$, demonstrating that (\ref{GSP3}) is indeed positive-definite. The symmetry property of the Dirac Hamiltonian (\ref{DHCF}) with respect to this scalar product is explicitly proven in Appendix \ref{appA}.

\section{Essential Self-adjointness of the Dirac Hamiltonian} \label{V}

\noindent In this section, we show that the Dirac Hamiltonian in the non-extreme Kerr geometry in horizon-penetrating advanced Eddington--Finkelstein-type coordinates is essentially self-adjoint using the results obtained in \cite{FRö} (we recently learned that in \cite{RauchTaylor} related results were found with different methods). Having an essentially self-adjoint Hamiltonian is necessary for the derivation of the integral representation of the Dirac propagator presented in the subsequent section, as it is based on the spectral theorem for unbounded, self-adjoint operators. The proof of the essential self-adjointness involves the technical difficulty that in the Kerr geometry the Dirac Hamiltonian is only almost everywhere elliptic, and hence not uniformly elliptic. More precisely, it fails to be elliptic at the event and the Cauchy horizon. This can be easily seen from the evaluation of the determinant of the principal symbol of the Hamiltonian (\ref{DHCF})
\begin{equation} \label{PrinSym}
P(r, \theta; \boldsymbol{\xi}) = \alpha^j(r, \theta) \, \xi_ j \, ,
\end{equation}
where $\boldsymbol{\xi} \in T^{\star} \mathfrak{N}_{\tau}$. In more detail, we first rewrite the matrices $\alpha^j$ in terms of the original Dirac matrices $\gamma^j$
\begin{equation*}
\alpha^j = - \textnormal{i} \, (\widetilde{\gamma}'^{\tau})^{- 1} \, \widetilde{\gamma}'^j = - \textnormal{i} \, (\gamma'^{\tau})^{- 1} \gamma'^j = - \textnormal{i} \, \mathscr{P} (\gamma^{\tau})^{- 1} \gamma^j \mathscr{P}^{-1} \, .
\end{equation*}
Substituting this expression into the principal symbol (\ref{PrinSym}) and computing the determinant yields
\begin{equation*}
\textnormal{det}\bigl(P(r, \theta; \boldsymbol{\xi})\bigr) = \frac{\textnormal{det}(\gamma^j \xi_j)}{\textnormal{det}(\gamma^{\tau})} \, .
\end{equation*}
Using the relations 
\begin{equation*}
(\gamma^{\tau})^2 = g^{\tau \tau} \1_{S_{\boldsymbol{x}}\mathfrak{M}} \quad \textnormal{and} \quad \gamma^i \xi_i \, \gamma^j \xi_j = g^{i j} \xi_i \, \xi_j \, \1_{S_{\boldsymbol{x}}\mathfrak{M}} \, ,
\end{equation*}
we obtain
\begin{equation} \label{detps}
\textnormal{det}\bigl(P(r, \theta; \boldsymbol{\xi})\bigr) = \biggl(\frac{g^{i j} \xi_i \,  \xi_j}{g^{\tau \tau}}\biggr)^2 .
\end{equation}
The Hamiltonian fails to be elliptic if its principal symbol is not invertible, that is, if the determinant (\ref{detps}) vanishes. This is the case for 
\begin{equation} \label{PSCOND}
g^{i j} \xi_i \xi_j = 0 \quad \textnormal{with} \quad \boldsymbol{\xi} \not= \boldsymbol{0} \, , 
\end{equation}
where the quantities $g^{i j}$ are the components of the inverse of the spacetime metric (\ref{Kerrmetric})  
\begin{equation*}
\begin{split}
\boldsymbol{g} & = \frac{1}{\Sigma} \, \bigl([\Sigma + 2 M r] \, \partial_{\tau} \otimes \partial_{\tau} - 2 M r \, (\partial_{\tau} \otimes \partial_{r} + \partial_{r} \otimes \partial_{\tau}) - \Delta \, \partial_{r} \otimes \partial_{r} \\ 
& \hspace{1.0cm} - a \, (\partial_{r} \otimes \partial_{\phi} + \partial_{\phi} \otimes \partial_{r}) - \partial_{\theta} \otimes \partial_{\theta} - \textnormal{csc}^2(\theta) \, \partial_{\phi} \otimes \partial_{\phi}\bigr) \, . 
\end{split}
\end{equation*}
Analyzing condition (\ref{PSCOND}), we find that the Hamiltonian is not elliptic at the event and the Cauchy horizon. We point out that by using the intrinsic Dirac Hamiltonian on the space-like hypersurface $\mathfrak{N}_{\tau}$, ellipticity would be conserved even at the horizons, which can be inferred from the analog condition
\begin{equation*} 
g^{i j}_{|\mathfrak{N}_{\tau}} \xi_i \xi_j = 0 \quad \textnormal{with} \quad \boldsymbol{\xi} \not= \boldsymbol{0} \, , 
\end{equation*}
where the $g^{i j}_{|\mathfrak{N}_{\tau}}$ denote the components of the inverse of the associated induced Riemannian metric 
\begin{equation*}
\boldsymbol{g}_{|\mathfrak{N}_{\tau}} = - \frac{1}{\Sigma} \Biggl(\frac{\bigl(r^2 + a^2\bigr)^2 - \Delta \, a^2 \sin^2{(\theta)}}{\Sigma + 2 M r} \, \partial_{r} \otimes \partial_{r} + a \, [\partial_{r} \otimes \partial_{\phi} + \partial_{\phi} \otimes \partial_{r}] + \partial_{\theta} \otimes \partial_{\theta} + \textnormal{csc}^2(\theta) \, \partial_{\phi} \otimes \partial_{\phi}\Biggr) \, .
\end{equation*}
However, as we work with the Hamiltonian obtained from the Dirac operator in the full Kerr spacetime, ellipticity is broken. Therefore, we cannot employ standard techniques and results from elliptic theory in order to verify the essential self-adjointness of the Dirac Hamiltonian. Instead, we apply the results derived in \cite{FRö}, where near-boundary elliptic methods are combined with results from the theory of symmetric hyperbolic systems (see, e.g., \cite{taylor1, John, Bartnik, Chernoff}). In the following, we state the geometrical and functional analytic settings for the formulation of the Cauchy problem for the massive Dirac equation in the non-extreme Kerr geometry in horizon-penetrating coordinates in Hamiltonian form, which is used as a technical tool in the proof of the essential self-adjointness of the Dirac Hamiltonian. 

%
%
\begin{figure}[t] 
\begin{center} 
\includegraphics[scale=0.65]{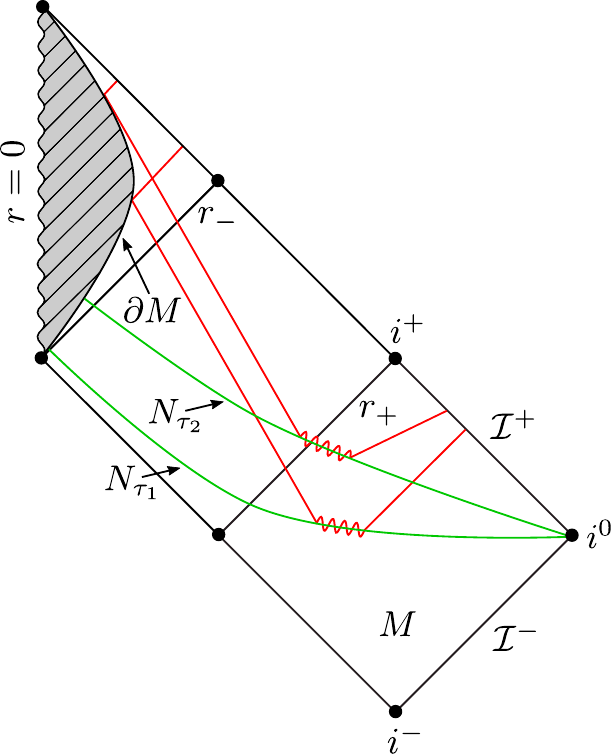}
\caption{\small{\label{PIC4} Carter--Penrose diagram of the region $M$ of the non-extreme Kerr geometry with constant-$\tau$ hypersurfaces $N_{\tau_1}$ and $N_{\tau_2}$ cut-off at the boundary $\partial M$. A radial Dirichlet-type boundary condition imposed on $\partial M$ leads to a reflection of Dirac particles away from the singularity without affecting their dynamics outside the Cauchy horizon. This is represented by Cauchy data propagated in $\tau$-direction.}}
\end{center}
\end{figure}
%
%

We let $(\mathfrak{M}, \boldsymbol{g})$ be the non-extreme Kerr geometry with the metric (\ref{Kerrmetric}) in horizon-penetrating advanced Eddington--Finkelstein-type coordinates (\ref{AEFSTC}) and consider the subset 
\begin{equation*}
M := \{\tau, r > r_0, \theta, \phi\} \subset \mathfrak{M} \, , \quad \textnormal{where} \quad r_0 < r_- \, .
\end{equation*}
Furthermore, we introduce the time-like inner boundary 
\begin{equation*}
\partial M := \{\tau, r = r_0, \theta, \phi\} \quad \textnormal{of} \quad \mathfrak{M}
\end{equation*}
and the family of space-like hypersurfaces 
\begin{equation*}
N = (N_{\tau})_{\tau \in \mathbb{R}} \, , \quad \textnormal{where} \quad N_{\tau} := \{\tau = \textnormal{const.}, r > r_0, \theta, \phi\} \, ,
\end{equation*}
with boundaries 
\begin{equation*}
\partial N_{\tau} := \partial M \cap N_{\tau} \simeq S^2
\end{equation*}
(see FIG.\ \ref{PIC4}). These hypersurfaces constitute a foliation of $M$. Moreover, near $\partial M$, we define a locally time-like Killing vector field \cite{BON}
\begin{equation} \label{KKVF}
K := \partial_{\tau} + \beta_0 \, \partial_{\phi} \quad \textnormal{with} \quad \beta_0 = \beta_0(r_0) \in \mathbb{R} \backslash \{0\} \, ,
\end{equation}
where $\partial_{\tau}$ and $\partial_{\phi}$ are the Killing fields describing the stationarity and axisymmetry of the Kerr geometry, respectively. We note that the specific proof of existence of unique, global solutions of the Cauchy problem for the massive Dirac equation in Hamiltonian form for a general class of mixed initial-boundary value problems on Lorentzian manifolds presented in \cite{FRö} makes essential use of a Killing field $K = \partial_t$ that is tangential to and time-like on the inner boundary $\partial M$ and represented by a coordinate system describing an observer who is co-moving along the associated flow lines. This Killing field may also be space-like or null in $M\backslash\partial M$. A direct computation shows that the Killing field $K = \partial_{\tau}$ in advanced Eddington--Finkelstein-type coordinates is not everywhere time-like on $\partial M$. To be more precise, the condition for $\partial_{\tau}$ not being time-like reads
\begin{equation*}
\boldsymbol{g}(\partial_{\tau}, \partial_{\tau}) = 1 - \frac{2 M r}{\Sigma} \leq 0 \, .
\end{equation*}
This inequality is solved by
\begin{equation*}
M - \sqrt{M^2 - a^2 \cos^2{(\theta)}} \leq r \leq M + \sqrt{M^2 - a^2 \cos^2{(\theta)}} \, ,  
\end{equation*}
which is the ergosphere region. But taking $K$ as the linear combination (\ref{KKVF}), it turns out to be a Killing field that satisfies all the above assumptions. To make the connection between (\ref{KKVF}) and the Killing field $K = \partial_t$, we use the coordinate transformation 
\begin{equation*}
\mathbb{R} \times \mathbb{R}_{> 0} \times [0, \pi] \times [0, 2 \pi) \rightarrow \mathbb{R} \times \mathbb{R}_{> 0} \times [0, \pi] \times [0, 2 \pi) \, , \,\,\,\,\,\,\,\, (\tau, r, \theta, \phi) \mapsto (t, r, \theta, \Phi)
\end{equation*}
with  
\begin{equation} \label{tTrafo}
t = \tau \quad \textnormal{and} \quad \Phi = \phi - \beta_0 \tau \, .
\end{equation}
This transformation can be easily derived from the condition
\begin{equation*}
K = \partial_t =  \frac{\partial \tau}{\partial t} \, \partial_{\tau} + \frac{\partial \phi}{\partial t} \, \partial_{\phi} = \partial_{\tau} + \beta_0 \, \partial_{\phi} \, .
\end{equation*}
Evaluation of the gradient 
\begin{equation*}
\boldsymbol{\nabla} t = g^{\mu \nu} (\partial_{\mu} t) \, \partial_{\nu} = g^{t \nu} \, \partial_{\nu}
\end{equation*}
and subsequently
\begin{equation*}
\boldsymbol{g}(\partial_t, \boldsymbol{\nabla} t) = g^t_t = 1 > 0 \quad \textnormal{as well as} \quad \boldsymbol{g}(\boldsymbol{\nabla} t, \boldsymbol{\nabla} t) = g^{t t} = 1 + \frac{2 M r}{\Sigma} > 0
\end{equation*}
demonstrates that $\boldsymbol{\nabla} t$ is future-pointing and time-like and, hence, that the coordinate $t$ is a time function as is the original time coordinate $\tau$ \cite{R}. Due to the specific form of the transformation (\ref{tTrafo}), we find that the induced metric $\boldsymbol{g}_{| t = \textnormal{const.}}$ on the level sets of $t$ is identical to the induced metric 
\begin{equation*} 
\begin{split}
\boldsymbol{g}_{|\tau = \textnormal{const.}} = \boldsymbol{g}_{|\mathfrak{N}_{\tau}} & = - \biggl(1 + \frac{2 M r}{\Sigma}\biggr) \textnormal{d}r \otimes \textnormal{d}r + a \sin^2{(\theta)} \biggl(1 + \frac{2 M r}{\Sigma}\biggr) \bigl[\textnormal{d}r \otimes \textnormal{d}\phi + \textnormal{d}\phi \otimes \textnormal{d}r\bigr] \\ \\ 
& \hspace{0.4cm} - \Sigma \, \textnormal{d}\theta \otimes \textnormal{d}\theta - \sin^2{(\theta)} \biggl[\Sigma + a^2 \sin^2{(\theta)} \biggl(1 + \frac{2 M r}{\Sigma}\biggr)\biggr] \textnormal{d}\phi \otimes \textnormal{d}\phi
\end{split}
\end{equation*}
on the level sets of $\tau$ for the advanced Eddington--Finkelstein-type coordinates. As a consequence, all the results obtained for the co-moving coordinate system in \cite{FRö} will also hold true for the advanced Eddington--Finkelstein-type coordinates. 

Next, in addition to these geometric structures, we introduce the spin bundle $SM$ of $M$ with fibers $S_{\boldsymbol{x}}M \simeq \mathbb{C}^4$, where $\boldsymbol{x} \in M$. We may then consider the Dirac Hamiltonian given in (\ref{DHCF})
\begin{equation} \label{DIRH2}
H = \alpha^j \partial_j + \mathscr{V} \quad \textnormal{on} \quad N_{\tau} \, ,
\end{equation}
which is a symmetric operator with respect to the scalar product specified in (\ref{GSP3})
\begin{equation} \label{CSP}
(\psi' | \phi')_{|N_{\tau}} = \int_{r_0}^{\infty} \int_{S^2} \psi'^{\dagger} \, \Gamma^{\tau} \phi' \, \sin{(\theta)} \, \textnormal{d}\phi \, \textnormal{d}\theta \, \textnormal{d}r \, ,
\end{equation}
where the Dirac $4$-spinors $\psi', \phi' \in S_{\boldsymbol{x}}M$ and the matrix $\Gamma^{\tau}$ is defined in (\ref{NGAM}). We note that for $H$ being symmetric with respect to (\ref{CSP}), we have to impose the radial Dirichlet-type boundary condition (see Appendix \ref{appA})
\begin{equation*}
\bigl(\boldsymbol{\slashed{n}} - \textnormal{i} \mathscr{P} (\mathscr{P}^{\dagger})^{- 1}\bigr) \, \psi'_{|\partial M} = \boldsymbol{0}  \, ,
\end{equation*}
in which $\boldsymbol{n}$ is the inner normal to $\partial M$ and $\mathscr{P}$ is determined by (\ref{spinortr}). This boundary condition has the effect that Dirac particles are reflected at $\partial M$ away from the singularity such that, without changing their dynamics outside the Cauchy horizon, we obtain a unitary time evolution. The specific domain of the Hamiltonian reads
\begin{equation*}
\textnormal{Dom}(H) = \bigl\{\psi' \in C_0^{\infty}(N_{\tau}, SM) \,\, \big| \,\, \bigl(\boldsymbol{\slashed{n}} - \textnormal{i} \mathscr{P} (\mathscr{P}^{\dagger})^{- 1}\bigr) \psi'_{|\partial N_{\tau}} = \boldsymbol{0}\bigr\} \, .
\end{equation*}
In this setting, we find a unique, global solution of the Cauchy problem for the massive Dirac equation in Hamiltonian form in the class $C_{\textnormal{sc}}^{\infty}(M, SM)$.
\begin{Lemma}
The Cauchy problem for the massive Dirac equation in the non-extreme Kerr geometry in horizon-penetrating advanced Eddington--Finkelstein-type coordinates
\begin{equation*}
\left\{ \begin{array}{l} \hspace{0.05cm} \textnormal{i} \partial_{\tau} \psi' = H \psi' \\ [0.6em]
\psi'_{| \tau = 0} =: \psi'_0 \, \in \, C_0^{\infty}(N_{\tau = 0}, SM) \end{array} \right.
\end{equation*}
with the radial Dirichlet-type boundary condition at $\partial M$ given by
\begin{equation} \label{DTBCKERRGEOM}
\bigl(\boldsymbol{\slashed{n}} - \textnormal{i} \mathscr{P} (\mathscr{P}^{\dagger})^{- 1}\bigr) \, \psi'_{|\partial M} = \boldsymbol{0} \, ,
\end{equation}
where the initial data $\psi'_0$ is smooth, compactly supported outside, across, and inside the event horizon, up to the Cauchy horizon, and is compatible with the boundary condition, i.e.,
\begin{equation*}
\bigl(\boldsymbol{\slashed{n}} - \textnormal{i} \mathscr{P} (\mathscr{P}^{\dagger})^{- 1}\bigr) \, (H^p \psi'_0)_{|\partial N_{\tau}} = \boldsymbol{0} \quad \forall \quad p \in \mathbb{N}_0 \, ,
\end{equation*}
has a unique, global solution $\psi'$ in the class of smooth functions with spatially compact support $C^{\infty}_{\textnormal{sc}}(M, SM)$. Evaluating this solution at subsequent times $\tau$ and $\tau'$ gives rise to a unique unitary propagator 
\begin{equation*}
U^{\tau', \, \tau}: \, C_0^{\infty}(N_{\tau}, SM) \rightarrow C_0^{\infty}(N_{\tau'}, SM) \, .
\end{equation*} 
\end{Lemma}
\noindent The proof of this lemma is shown in detail for a more general class of non-uniformly elliptic mixed initial-boundary value problems for the Dirac equation in Hamiltonian form on Lorentzian manifolds with dimension $d \geq 3$ in \cite{FRö}. The existence of a unique, global solution $\psi' \in C^{\infty}_{\textnormal{sc}}(M, SM)$ is imperative for the specific proof of the essential self-adjointness of the Dirac Hamiltonian presented in the same work. Below, we state the result for the Hamiltonian in the non-extreme Kerr geometry in horizon-penetrating coordinates (\ref{DIRH2}).
\begin{Thm} \label{ThmIV2}
The massive Dirac Hamiltonian $H$ in the non-extreme Kerr geometry in horizon-penetrating advanced Eddington--Finkelstein-type coordinates with domain of definition 
\begin{equation*}
\textnormal{Dom}(H) = \bigl\{\psi' \in C_0^{\infty}(N_{\tau}, SM) \,\, \big| \,\, \bigl(\boldsymbol{\slashed{n}} - \textnormal{i} \mathscr{P} (\mathscr{P}^{\dagger})^{- 1}\bigr) \, (H^p \psi')_{|\partial N_{\tau}} = \boldsymbol{0} \,\,\,\, \forall \,\,\,\, p \in \mathbb{N}_0\bigr\}
\end{equation*}
is essentially self-adjoint. 
\end{Thm}

\section{Resolvent of the Dirac Hamiltonian and Integral Spectral Representation of the Dirac Propagator} \label{VI}

\noindent We may now construct an integral spectral representation of the Dirac propagator that yields the dynamics of massive Dirac particles outside, across, and inside the event horizon, up to the Cauchy horizon. More precisely, we derive an explicit expression for the spectral measure $\textnormal{d}P_{\omega}$ of the essentially self-adjoint Dirac Hamiltonian $H$ defined in (\ref{DIRH2}) with the domain specified in Theorem \ref{ThmIV2}, which arises in the formal spectral decomposition of the Dirac propagator 
\begin{equation} \label{sdr}
\psi'(\tau, r, \theta, \phi) = e^{- \textnormal{i} \tau H} \psi'_0(r, \theta, \phi) = \int_{\mathbb{R}} e^{- \textnormal{i} \omega \tau} \psi'_0(r, \theta, \phi) \, \textnormal{d}P_{\omega} \, ,
\end{equation}
where $\psi'_0$ is smooth, spatially compact initial data. To this end, we employ Stone's formula and, thus, express the spectral measure in terms of the resolvent $\textnormal{Res}(H) = (H - \omega_{\textnormal{c}})^{-1}$ of the Hamiltonian. As the spectrum of the Hamiltonian $\sigma(H) \subseteq \mathbb{R}$ is on the real line, this resolvent exists for all $\omega_{\textnormal{c}} \in \mathbb{C}\backslash\mathbb{R}$ with real part $\textnormal{Re}{(\omega_{\textnormal{c}})} = \omega \in \sigma(H)$ and is given uniquely. In the computation of the resolvent, we make use of quantities obtained in the analysis of Chandrasekhar's separation of variables, namely the angular projector onto a finite-dimensional, invariant eigenspace of the angular operator (\ref{angop}) and the Green's matrix of the radial ODE system (\ref{RADODESYS}).
\begin{Thm}\label{Prop}
The massive Dirac propagator in the non-extreme Kerr geometry in horizon-penetrating advanced Eddington--Finkelstein-type coordinates can be expressed via the integral spectral representation 
\begin{equation*}
\psi'(\tau, r, \theta, \phi) = \frac{1}{2 \pi \textnormal{i}} \, \sum_{k \in \mathbb{Z}} \, e^{- \textnormal{i} k \phi} \int_{\mathbb{R}} e^{- \textnormal{i} \omega \tau} \, \lim_{\epsilon \searrow 0} \, \bigl[(H_k - \omega - \textnormal{i} \epsilon)^{- 1} - (H_k - \omega + \textnormal{i} \epsilon)^{- 1}\bigr](r, \theta; r', \theta') \, \psi'_{0, k}(r', \theta') \, \textnormal{d}\omega \, ,
\end{equation*} 
where $\psi'_{0, k} \in C_0^{\infty}\bigl((r_0, \infty) \times [0, \pi], SM\bigr)$ is the initial data for fixed $k$-modes and $(H_k - \omega \mp \textnormal{i} \epsilon)^{- 1}$ are the resolvents of the Dirac Hamiltonian for fixed $k$-modes $H_k$ on the upper and lower complex half-planes. The resolvents are unique and of the form 
\begin{equation*}
\begin{split}
(H_k - \omega \mp \textnormal{i} \epsilon)^{- 1}(r, \theta; r', \theta') \, \psi'_{0, k}(r', \theta') & = - \sum_{l \in \mathbb{Z}} \int_{- 1}^1 Q_{l}(\theta; \theta') \int_{r_0}^{\infty} \mathscr{C} \left(\begin{array}{cc}
G(r; r')_{k, l, \omega \pm \textnormal{i} \epsilon} & \boldsymbol{0}_{\mathbb{C}^2} \\ 
\boldsymbol{0}_{\mathbb{C}^2} & G(r; r')_{k, l, \omega \pm \textnormal{i} \epsilon}
\end{array}\right) \\ \\ 
& \hspace{3.9cm} \times \mathscr{E}(r', \theta') \, \psi'_{0, k}(r', \theta') \, \textnormal{d}r' \textnormal{d}\bigl(\cos{(\theta')}\bigr)  
\end{split}
\end{equation*} 
with $Q_{l}(\, \cdot \, , \cdot \,)$ being the integral kernel of the spectral projector onto a finite-dimensional, invariant eigenspace of the angular operator (\ref{angop}) that corresponds to the spectral parameter $\xi_l$, $G(r; r')_{k, l, \omega \pm \textnormal{i} \epsilon}$ the two-dimensional Green's matrix of the radial first-order ODE system (\ref{RADODESYS}), and 
\begin{equation*}
\mathscr{C} = \left(\begin{array}{cccc}
1 & 0 & 0 & 0 \\ 
0 & 0 & 0 & 1 \\
0 & 1 & 0 & 0 \\
0 & 0 & 1 & 0 
\end{array}\right), \,\,\,\,\,
\mathscr{E}(r', \theta') = - 
\left(\begin{array}{cccc}
\textnormal{i} \bigl(\Delta(r') + 4 M r'\bigr) & r_+ a \sin{(\theta')} & 0 & 0 \\ 
0 & 0 & - \textnormal{i} r_+ & a \sin{(\theta')} \\
0 & 0 & r_+ a \sin{(\theta')} & \textnormal{i} \bigl(\Delta(r') + 4 M r'\bigr) \\
a \sin{(\theta')} & - \textnormal{i} r_+ & 0 & 0 
\end{array}\right).
\end{equation*} 
\end{Thm}

\begin{proof}
We first compute the resolvents $(H_k - \omega \mp \textnormal{i} \epsilon)^{- 1}$ of the Dirac Hamiltonian for fixed $k$-modes $H_k$, where $\omega \in \mathbb{R}$ and $\epsilon > 0$ is sufficiently small so that it can be considered as a slightly non-self-adjoint perturbation. For this purpose, we begin by substituting Chandrasekhar's mode ansatz with complex-valued frequencies	
\begin{equation*}
\psi'(\tau, r, \theta, \phi) =  e^{- \textnormal{i}(\omega_{\epsilon} \tau + k \phi)} \, \Psi(r, \theta) \,\,\,\,\,\,\,\, \textnormal{in which} \,\,\,\,\,\,\,\, \Psi \in L^2\bigl((r_0, \infty) \times [0, \pi], SM\bigr)
\end{equation*}
into the Dirac equation (\ref{DEHF}) restricted to $M$, yielding
\begin{equation}\label{SepHDE1} 
(H_k - \omega_{\epsilon}) \Psi = \boldsymbol{0} \, ,
\end{equation}
where $H_k := \alpha^r \partial_r + \alpha^{\theta} \partial_{\theta} - \textnormal{i} k \, \alpha^{\phi} + \mathscr{V}$ with the Dirac matrices $\alpha^j$, $j \in \{r, \theta, \phi\}$, and the potential $\mathscr{V}$ given in (\ref{Br})-(\ref{PotV}). We remark that we introduced the abbreviation $\omega_{\epsilon} \in \mathbb{C}$ with $\textnormal{Re}{(\omega_{\epsilon})} = \omega$ and $\textnormal{Im}{(\omega_{\epsilon})} \in \{- \epsilon, \epsilon\}$ in order to cover the resolvents in both the upper and lower complex half-planes simultaneously. Next, we define the spectral projector 
\begin{equation*} 
Q_{l} \Psi := \int_{- 1}^1 Q_{l}(\theta; \theta') \, \Psi(r, \theta') \, \textnormal{d}\bigl(\cos{(\theta')}\bigr)
\end{equation*}
onto the finite-dimensional, invariant eigenspace of the matrix-valued angular operator (\ref{angop}), which results from Chandrasekhar's separation of variables, corresponding to the spectral parameter $\xi_l$ with $l \in \mathbb{Z}$. This spectral projector is idempotent
\begin{equation*}
Q_{l}^n = Q_{l} \quad \textnormal{for all} \quad n \in \mathbb{N} \, ,
\end{equation*}
and the family of spectral projectors $(Q_l)_{l \in \mathbb{Z}}$ is complete
\begin{equation} \label{compconstr}
\sum_{l \in \mathbb{Z}} Q_{l} = \1 \, .
\end{equation}
We may therefore express the angular operator (\ref{angop}) by means of the family  $(Q_l)_{l \in \mathbb{Z}}$ as 
\begin{equation*} 
A = \sum_{l \in \mathbb{Z}} \xi_l Q_l \, .
\end{equation*}
Applying this representation of the angular operator and the completeness constraint (\ref{compconstr}) in Eq.\ (\ref{SepHDE1}), we obtain
\begin{equation} \label{SepHDE2}
- (\Sigma + 2 M r)^{- 1} \sum_{l \in \mathbb{Z}} \mathcal{M}(\partial_r; r, \theta)_{k, l, \omega_{\epsilon}} \, Q_{l} \, \Psi = \boldsymbol{0} \, ,
\end{equation}
where
\begin{equation*}
\mathcal{M}(\partial_r; r, \theta)_{k, l, \omega_{\epsilon}} := \left(\begin{array}{cccc}
\textnormal{i} \, O_{k, \omega_{\epsilon}} & a \sin{(\theta)} \, U_{\omega_{\epsilon}} & \textnormal{i} r_+ \, S_{l} & a \sin{(\theta)} \, \overline{S}_{l} \\ \\
\displaystyle \frac{a \sin{(\theta)}}{r_+} \, O_{k, \omega_{\epsilon}} & - \displaystyle \frac{\textnormal{i} (\Delta + 4 M r)}{r_+} \, U_{\omega_{\epsilon}} & a \sin{(\theta)} \, S_{l} & - \displaystyle \frac{\textnormal{i} (\Delta + 4 M r)}{r_+} \,  \overline{S}_{l} \\ \\ 
- \displaystyle \frac{\textnormal{i} (\Delta + 4 M r)}{r_+} \,  \overline{S}_{l} & a \sin{(\theta)} \, S_{l} & - \displaystyle \frac{\textnormal{i} (\Delta + 4 M r)}{r_+} \, U_{\omega_{\epsilon}} & \displaystyle \frac{a \sin{(\theta)}}{r_+} \, O_{k, \omega_{\epsilon}} \\ \\
a \sin{(\theta)} \, \overline{S}_{l} & \textnormal{i} r_+ \, S_{l} & a \sin{(\theta)} \, U_{\omega_{\epsilon}} & \textnormal{i} \, O_{k, \omega_{\epsilon}}  
\end{array}\right)
\end{equation*}
with the purely radial differential operators $O_{k, \omega_{\epsilon}}$, $U_{\omega_{\epsilon}}$, and the function $S_{l}$ defined by
\begin{equation*}
\begin{split}
O_{k, \omega_{\epsilon}} & := \Delta \, \partial_r + r - M - \textnormal{i} \omega_{\epsilon} (\Delta + 4 M r) - 2 \textnormal{i} a k \\ \\
U_{\omega_{\epsilon}} & := r_+ (\partial_r + \textnormal{i} \omega_{\epsilon}) \\ \\
S_{l} & := \xi_l + \textnormal{i} m r \, .
\end{split}
\end{equation*}
In the following, we show that the computation of the resolvent of the operator $\mathcal{M}(\partial_r; r, \theta)_{k, l, \omega_{\epsilon}}$ in (\ref{SepHDE2}) can be reduced to determining the two-dimensional Green's matrix of the radial ODE system (\ref{RADODESYS}). Writing the Dirac equation (\ref{SepHDE2}) in the factorized form
\begin{equation*}
- (\Sigma + 2 M r)^{- 1} \, \mathcal{B}(r, \theta) \sum_{l \in \mathbb{Z}} \mathcal{R}(\partial_r; r)_{k, l, \omega_{\epsilon}} \, Q_{l} \, \Psi = \boldsymbol{0} \, ,
\end{equation*}
where the matrix $\mathcal{B}(r, \theta)$ and the matrix-valued radial operator $\mathcal{R}(\partial_r; r)_{k, l, \omega_{\epsilon}}$ read
\begin{equation*}
\mathcal{B}(r, \theta) := 
\left(\begin{array}{cccc}
\textnormal{i} & a \sin{(\theta)} & 0 & 0 \\ 
\displaystyle \frac{a \sin{(\theta)}}{r_+} & - \displaystyle \frac{\textnormal{i} (\Delta + 4 M r)}{r_+} & 0 & 0 \\ 
0 & 0 & - \displaystyle \frac{\textnormal{i} (\Delta + 4 M r)}{r_+} & \displaystyle \frac{a \sin{(\theta)}}{r_+} \\ 
0 & 0 & a \sin{(\theta)} & \textnormal{i}  
\end{array}\right)
\end{equation*}
and
\begin{equation*}
\mathcal{R}(\partial_r; r)_{k, l, \omega_{\epsilon}} :=
\left(\begin{array}{cccc}
O_{k, \omega_{\epsilon}} & 0 & r_+ \, S_{l} & 0 \\ 
0 & U_{\omega_{\epsilon}} & 0 & \overline{S}_{l} \\ 
\overline{S}_{l} & 0 & U_{\omega_{\epsilon}} & 0 \\ 
0 & r_+ \, S_{l} & 0 & O_{k, \omega_{\epsilon}}  
\end{array}\right),
\end{equation*}
we can easily bring it into the block diagonal representation
\begin{equation} \label{SepHDE4}
- \mathscr{E}^{- 1}(r, \theta) \, \sum_{l \in \mathbb{Z}} \, \left(\begin{array}{cc}
\mathcal{R}^{2 \times 2}(\partial_r; r)_{k, l, \omega_{\epsilon}} & \boldsymbol{0}_{\mathbb{C}^2} \\ 
\boldsymbol{0}_{\mathbb{C}^2} & \mathcal{R}^{2 \times 2}(\partial_r; r)_{k, l, \omega_{\epsilon}}
\end{array}\right) \mathscr{C}^{- 1} \, Q_{l} \, \Psi = \boldsymbol{0} \, ,
\end{equation}
in which 
\begin{equation} \label{4t4mvro}
\left(\begin{array}{cc}
\mathcal{R}^{2 \times 2}(\partial_r; r)_{k, l, \omega_{\epsilon}} & \boldsymbol{0}_{\mathbb{C}^2} \\ 
\boldsymbol{0}_{\mathbb{C}^2} & \mathcal{R}^{2 \times 2}(\partial_r; r)_{k, l \omega_{\epsilon}}
\end{array}\right) = \mathscr{C}^{- 1} \, \mathcal{R}(\partial_r; r)_{k, l, \omega_{\epsilon}} \, \mathscr{C} 
\end{equation}
with
\begin{equation} \label{rad2t2op}
\mathcal{R}^{2 \times 2}(\partial_r; r)_{k, l, \omega_{\epsilon}} := 
\left(\begin{array}{cc}
O_{k, \omega_{\epsilon}} & r_+ \, S_{l} \\ 
\overline{S}_{l} & U_{\omega_{\epsilon}} 
\end{array}\right),
\end{equation}
and the matrices $\mathscr{E}^{- 1}(r, \theta)$ and $\mathscr{C}$ are defined by 
\begin{equation*}
\mathscr{E}^{- 1}(r, \theta) := (\Sigma + 2 M r)^{- 1} \, \mathcal{B}(r, \theta) \, \mathscr{C} 
\end{equation*}
and 
\begin{equation*}
\mathscr{C} := \left(\begin{array}{cccc}
1 & 0 & 0 & 0 \\ 
0 & 0 & 0 & 1 \\
0 & 1 & 0 & 0 \\
0 & 0 & 1 & 0 
\end{array}\right).
\end{equation*}
From the specific form of (\ref{SepHDE4}), it is obvious that the key quantity in the determination of the resolvent of $\mathcal{M}(\partial_r; r, \theta)_{k, l, \omega_{\epsilon}}$, and thus of the resolvent $(H_k - \omega_{\epsilon})^{- 1}$, is the solution $G(r; r')_{k, l, \omega_{\epsilon}}$ of the distributional equation
\begin{equation} \label{Green}
\mathcal{R}^{2 \times 2}(\partial_r; r)_{k, l, \omega_{\epsilon}} \, G(r; r')_{k, l, \omega_{\epsilon}} = \delta(r - r') \, \1_{\mathbb{C}^2} \, .
\end{equation}
We point out that (\ref{rad2t2op}) is identical to the operator (\ref{RADODESYS}), but for a complex-valued frequency $\omega$ with imaginary part $\textnormal{Im}(\omega) \in \{- \epsilon, \epsilon\}$. Hence, $G(r; r')_{k, l, \omega_{\epsilon}}$ corresponds to the Green's matrix of the radial ODE system obtained via Chandrasekhar's separation of variables. However, in the case of a complex-valued frequency, the solution of the radial ODE system has an additional dampening contribution guaranteeing that $\Psi(r, \theta)$ is in $L^2\bigl((r_0, \infty) \times [0, \pi], SM\bigr)$, which is in contrast to the original case with $\omega \in \mathbb{R}$. In order to solve the distributional equation (\ref{Green}), we first introduce the vector-valued functions 
\begin{equation*}
\Phi_1(r; r') = \left(\begin{array}{c}
\Phi_{1, 1}(r; r') \\ 
\Phi_{1, 2}(r; r')
\end{array}\right)
\quad \textnormal{and} \quad 
\Phi_2(r; r') = \left(\begin{array}{c}
\Phi_{2, 1}(r; r') \\ 
\Phi_{2, 2}(r; r') 
\end{array}\right)
\end{equation*}
that 
\begin{center}
\begin{itemize}
\item for $r \not= r'$ are linearly independent solutions of the homogeneous equation 
\begin{equation*}
\mathcal{R}^{2 \times 2}(\partial_r; r)_{k, l, \omega_{\epsilon}} \, \Phi(r; r') = \boldsymbol{0} \, ,
\end{equation*}
\item have jump discontinuities at $r = r'$,
\item satisfy the Dirichlet-type boundary condition (\ref{DTBCKERRGEOM}) at $r = r_0$\,,
\item and are square-integrable, that is
\begin{equation*}
\big\|\Phi_{1/2}(r; r')\big\|^2_2 = \int_{r_0}^{\infty} \big\|\Phi_{1/2}(r; r')\big\|^2 \, \textnormal{d}r < \infty \, .
\end{equation*}
\end{itemize}
\end{center}
These functions are specified in Appendix \ref{appB}. It turns out that their $r'$-dependence can be chosen in such a way that it is solely contained in Heaviside step functions $\Theta$. For clarity, these Heaviside step functions are explicitly stated in what follows, which makes it possible to consider $\Phi_1$ and $\Phi_2$ as functions of only the variable $r$. From the first and the last of the above properties as well as from Lemma \ref{L1} and Lemma \ref{L2a3} (but with a complex-valued frequency $\omega_{\epsilon} \in \{\omega + \textnormal{i} \epsilon, \omega - \textnormal{i} \epsilon\}$ and with the substitution $\sqrt{\omega^2 - m^2} \rightarrow \sqrt{|\omega_{\epsilon}|^2 - m^2}$), we can moreover infer that they have the specific asymptotics 
\begin{align*}
& \Phi_{1/2}(r) \sim  e^{\textnormal{i} \phi_+(r_{\star}(r))} \left(\begin{array}{cc} \displaystyle \frac{c_{1, \infty}}{\sqrt{\Delta}} \\ c_{2, \infty} \end{array}\right)
& & \textnormal{for} \quad r \rightarrow \infty \quad \textnormal{and} \quad \begin{cases} \textnormal{Im}{(\omega_{\epsilon})} < 0 \quad \textnormal{if} \quad |\omega_{\epsilon}| \geq m \\ \textnormal{Re}{(\omega_{\epsilon})} \geq 0 \, \quad \hspace{-0.05cm} \textnormal{if} \quad |\omega_{\epsilon}| < m \end{cases} \\ \nonumber \\
& \Phi_{1/2}(r) \sim  e^{- \textnormal{i} \phi_-(r_{\star}(r))} \left(\begin{array}{cc} \displaystyle \frac{c_{3, \infty}}{\sqrt{\Delta}} \\ c_{4, \infty} \end{array}\right) & & \textnormal{for} \quad r \rightarrow \infty \quad \textnormal{and} \quad \begin{cases} \textnormal{Im}{(\omega_{\epsilon})} > 0 \quad \textnormal{if} \quad |\omega_{\epsilon}| \geq m \\ \textnormal{Re}{(\omega_{\epsilon})} < 0 \, \quad \hspace{-0.05cm} \textnormal{if} \quad |\omega_{\epsilon}| < m \end{cases} \\ \nonumber \\
& \Phi_{1/2}(r) \sim \left(\begin{array}{cc}
\displaystyle \frac{c_{1, r_\pm}}{\sqrt{|\Delta|}} \, e^{2 \textnormal{i} \bigl(\omega_{\epsilon} + k \Omega^{(\pm)}_{\textnormal{Kerr}}\bigr) r_{\star}(r)} \\ 
c_{2, r_\pm}
\end{array}\right)
& & \textnormal{for} \quad \begin{cases} r \rightarrow r_+ \quad \textnormal{and} \quad \textnormal{Im}{(\omega_{\epsilon})} < 0 \\
r \rightarrow r_- \quad \textnormal{and} \quad \textnormal{Im}{(\omega_{\epsilon})} > 0 \end{cases} \\ \nonumber \\
& \Phi_{1/2}(r) \sim c_{3, r_\pm} \left(\begin{array}{cc} 0 \\ 1 \end{array}\right) & & \textnormal{for} \quad \begin{cases} r \rightarrow r_+ \quad \textnormal{and} \quad \textnormal{Im}{(\omega_{\epsilon})} > 0 \\
r \rightarrow r_- \quad \textnormal{and} \quad \textnormal{Im}{(\omega_{\epsilon})} < 0 \, , \end{cases} 
\end{align*}
where $c_{n, \infty}$ and $c_{m, r_{\pm}}$, with $n \in \{1, 2, 3, 4\}$ as well as $m \in \{1, 2, 3\}$, are scalar constants. We then use the ansatz 
\begin{equation} \label{Green2}
G(r; r')_{k, l, \omega_{\epsilon}} = \begin{cases}
\Theta(r - r') \, \Phi_1(r) \, P_1(r') + \Theta(r' - r) \, \Phi_2(r) \, P_2(r') & \,\, \textnormal{for} \,\,\,\, r_+ < r' < \infty \,\,\,\, \textnormal{and} \,\,\,\, r_0 \leq r' \leq r_- \\ 
\Theta(r - r') \, \Phi_1(r) \, P_1(r') + \Theta(r - r') \, \Phi_2(r) \, P_2(r') & \,\, \textnormal{for} \,\,\,\, r_- < r' \leq r_+ 
\end{cases}
\end{equation}
for the radial Green's matrix in case 
\begin{equation*}
|\omega_{\epsilon}| \geq m \,\,\,\,\, \textnormal{and} \,\,\,\,\, \textnormal{Im}{(\omega_{\epsilon})} < 0 \quad \textnormal{or} \quad |\omega_{\epsilon}| < m \,\,\,\,\, \textnormal{and} \,\,\,\,\, \textnormal{Re}{(\omega_{\epsilon})} \geq 0 \, ,
\end{equation*}
whereas in case 
\begin{equation*}
|\omega_{\epsilon}| \geq m \,\,\,\,\, \textnormal{and} \,\,\,\,\, \textnormal{Im}(\omega_{\epsilon}) > 0 \quad \textnormal{or} \quad |\omega_{\epsilon}| < m \,\,\,\,\, \textnormal{and} \,\,\,\,\, \textnormal{Re}(\omega_{\epsilon}) < 0 \, ,
\end{equation*}
we employ the ansatz
\begin{equation} \label{Green2b}
G(r; r')_{k, l, \omega_{\epsilon}} = \begin{cases}
\Theta(r - r') \, \Phi_1(r) \, P_1(r') + \Theta(r' - r) \, \Phi_2(r) \, P_2(r') & \,\, \textnormal{for} \,\,\,\, r_+ < r' < \infty \,\,\,\, \textnormal{and} \,\,\,\, r_0 \leq r' \leq r_- \\ 
\Theta(r' - r) \, \Phi_1(r) \, P_1(r') + \Theta(r' - r) \, \Phi_2(r) \, P_2(r') & \,\, \textnormal{for} \,\,\,\, r_- < r' \leq r_+ \, ,
\end{cases} 
\end{equation}
in which $P_1$ and $P_2$ are unknowns yet to be determined. Applying the radial operator (\ref{rad2t2op}) to (\ref{Green2}) and (\ref{Green2b}), we obtain
\begin{equation*}
\mathcal{R}^{2 \times 2}(\partial_r; r)_{k, l, \omega_{\epsilon}} \, G(r; r')_{k, l, \omega_{\epsilon}} = 
\left(\begin{array}{cc}
\Delta & 0 \\ 
0 & r_+ 
\end{array}\right) 
\delta(r - r') \begin{cases}
\bigl[\hspace{0.275cm} \Phi_1(r') P_1(r') \mp \Phi_2(r') P_2(r') \bigr] & \textnormal{for} \,\,\,\,  (\ref{Green2}) \\ 
\bigl[\pm \Phi_1(r') P_1(r') - \Phi_2(r') P_2(r') \bigr] & \textnormal{for} \,\,\,\, (\ref{Green2b}) \, .
\end{cases} 
\end{equation*}
Comparing these equations with (\ref{Green}) yields the two systems
\begin{equation*}
\left(\begin{array}{cc}
\Delta^{- 1} & 0 \\ 
0 & r_+^{- 1} 
\end{array}\right) = 
\begin{cases}
\hspace{0.275cm} \Phi_1(r') P_1(r') \mp \Phi_2(r') P_2(r') & \textnormal{for} \,\,\,\,  (\ref{Green2}) \\ 
\pm \Phi_1(r') P_1(r') - \Phi_2(r') P_2(r') & \textnormal{for} \,\,\,\, (\ref{Green2b}) \, .
\end{cases} 
\end{equation*}
Their solutions $P_{1/2}(r')$ read 
\begin{equation*} \label{GSol1}
P_{1, 1}(r') = \frac{\Phi_{2, 2}(r')}{\Delta(r') \, W(r')} \, , \,\,\,\, P_{1, 2}(r') = - \frac{\Phi_{2, 1}(r')}{r_+ W(r')} \, , \,\,\,\, P_{2, 1}(r') = \pm \frac{\Phi_{1, 2}(r')}{\Delta(r') \, W(r')} \, , \,\,\,\, P_{2, 2}(r') = \mp \frac{\Phi_{1, 1}(r')}{r_+ W(r')} 
\end{equation*}
for the ansatz (\ref{Green2}) and 
\begin{equation*} \label{GSol2}
P_{1, 1}(r') = \pm \frac{\Phi_{2, 2}(r')}{\Delta(r') \, W(r')} \, , \,\,\,\, P_{1, 2}(r') = \mp \frac{\Phi_{2, 1}(r')}{r_+ W(r')} \, , \,\,\,\, P_{2, 1}(r') = \frac{\Phi_{1, 2}(r')}{\Delta(r') \, W(r')} \, , \,\,\,\, P_{2, 2}(r') = - \frac{\Phi_{1, 1}(r')}{r_+ W(r')} 
\end{equation*}
for ansatz (\ref{Green2b}), where 
\begin{equation*}
W(r') = W(\Phi_1, \Phi_2)(r') := \Phi_{1, 1}(r') \, \Phi_{2, 2}(r') - \Phi_{1, 2}(r') \, \Phi_{2, 1}(r')
\end{equation*}
is the Wronskian. Substituting these expressions into (\ref{Green2}) and (\ref{Green2b}), respectively, leads in case $r_+ < r' < \infty$ or $r_0 \leq r' \leq r_-$ to the Green's matrix  
\begin{equation} \label{GM1}
\begin{split}
G(r; r')_{k, l, \omega_{\epsilon}} & = \frac{1}{W(r')} \left[\Theta(r - r') \left(\begin{array}{cc}
\displaystyle \frac{\Phi_{1, 1}(r) \Phi_{2, 2}(r')}{\Delta(r')} & \displaystyle - \frac{\Phi_{1, 1}(r) \Phi_{2, 1}(r')}{r_+} \\ \\
\displaystyle \frac{\Phi_{1, 2}(r) \Phi_{2, 2}(r')}{\Delta(r')} & \displaystyle - \frac{\Phi_{1, 2}(r) \Phi_{2, 1}(r')}{r_+} 
\end{array}\right)_{k, l, \omega_{\epsilon}} \right. \\ \\
& \hspace{1.9cm} \left. + \Theta(r' - r)
\left(\begin{array}{cc}
\displaystyle \frac{\Phi_{2, 1}(r) \Phi_{1, 2}(r')}{\Delta(r')} & \displaystyle - \frac{\Phi_{2, 1}(r) \Phi_{1, 1}(r')}{r_+} \\ \\
\displaystyle \frac{\Phi_{2, 2}(r) \Phi_{1, 2}(r')}{\Delta(r')} & \displaystyle - \frac{\Phi_{2, 2}(r) \Phi_{1, 1}(r')}{r_+} 
\end{array}\right)_{k, l, \omega_{\epsilon}} \, \right] 
\end{split}
\end{equation}
for both (\ref{Green2}) and (\ref{Green2b}), whereas in case $r_- < r' \leq r_+$ it leads to the Green's matrices
\begin{equation} \label{GM2}
\begin{split}
G(r; r')_{k, l, \omega_{\epsilon}} & = \frac{1}{W(r')} \left(\begin{array}{cc}
\displaystyle \frac{\Phi_{1, 1}(r) \Phi_{2, 2}(r') - \Phi_{2, 1}(r) \Phi_{1, 2}(r')}{\Delta(r')} & 
\displaystyle  \frac{\Phi_{2, 1}(r) \Phi_{1, 1}(r') - \Phi_{1, 1}(r) \Phi_{2, 1}(r')}{r_+} \\ \\
\displaystyle \frac{\Phi_{1, 2}(r) \Phi_{2, 2}(r') - \Phi_{2, 2}(r) \Phi_{1, 2}(r')}{\Delta(r')} & 
\displaystyle  \frac{\Phi_{2, 2}(r) \Phi_{1, 1}(r') - \Phi_{1, 2}(r) \Phi_{2, 1}(r')}{r_+}  
\end{array}\right)_{k, l, \omega_{\epsilon}} \\ \\
& \hspace{0.5cm} \times
\begin{cases}
\hspace{0.275cm} \Theta(r - r') & \textnormal{for} \,\,\,\,  (\ref{Green2}) \\ 
- \Theta(r' - r) & \textnormal{for} \,\,\,\, (\ref{Green2b}) \, .
\end{cases} 
\end{split}
\end{equation}
Subsequently, we may directly read off the resolvent of the Dirac Hamiltonian for fixed $k$-modes from the block-diagonalized representation (\ref{SepHDE4}). We thus find
\begin{equation} \label{resolvent}
(H_k - \omega_{\epsilon})^{- 1} \Psi = - \sum_{l \in \mathbb{Z}} Q_l \int_{r_0}^{\infty} \mathscr{C}  
\left(\begin{array}{cc}
G(r; r')_{k, l, \omega_{\epsilon}} & \boldsymbol{0}_{\mathbb{C}^2} \\ 
\boldsymbol{0}_{\mathbb{C}^2} & G(r; r')_{k, l, \omega_{\epsilon}}
\end{array}\right) \mathscr{E}(r', \theta) \, \Psi(r', \theta) \, \textnormal{d}r' \, ,
\end{equation} 
where the Green's matrix $G(r; r')_{k, l, \omega_{\epsilon}}$ is given in (\ref{GM1}) and (\ref{GM2}). To show that this expression is actually the desired resolvent, we verify the identity 
\begin{equation} \label{resolentid}
(H_k - \omega_{\epsilon}) (H_k - \omega_{\epsilon})^{- 1} \Psi = \Psi \, .
\end{equation}
Accordingly, applying the operator in (\ref{SepHDE4}) to (\ref{resolvent}), we obtain in a first step
\begin{equation*}
\begin{split}
& (H_k - \omega_{\epsilon}) (H_k - \omega_{\epsilon})^{- 1} \Psi =  \mathscr{E}^{- 1}(r, \theta) \sum_{l \in \mathbb{Z}} \left(\begin{array}{cc}
\mathcal{R}^{2 \times 2}(\partial_r; r)_{k, l, \omega_{\epsilon}} & \boldsymbol{0}_{\mathbb{C}^2} \\ 
\boldsymbol{0}_{\mathbb{C}^2} & \mathcal{R}^{2 \times 2}(\partial_r; r)_{k, l, \omega_{\epsilon}}
\end{array}\right) \mathscr{C}^{- 1} \int_{- 1}^1 Q_{l}(\theta; \theta') \Biggl\{\sum_{m \in \mathbb{Z}} \Biggr.\\ \\
& \Biggl. \int_{-1}^{1} Q_{m}(\theta'; \theta'') \int_{r_0}^{\infty} \, \mathscr{C} 
\left(\begin{array}{cc}
G(r; r')_{k, m, \omega_{\epsilon}} & \boldsymbol{0}_{\mathbb{C}^2} \\ 
\boldsymbol{0}_{\mathbb{C}^2} & G(r; r')_{k, m, \omega_{\epsilon}}
\end{array}\right) \mathscr{E}(r', \theta'') \, \Psi(r', \theta'') \, \textnormal{d}r' \, \textnormal{d}\bigl(\cos{(\theta'')}\bigr)\Biggr\} \textnormal{d}\bigl(\cos{(\theta')}\bigr) \, .
\end{split}
\end{equation*} 
Moving the integral kernel of the spectral projector $Q_l(\theta; \theta')$ into the $\theta''$-integral and taking into account that the spectral projectors are idempotent, i.e., their integral kernels satisfy the relation
\begin{equation*}
Q_{l}(\theta; \theta') \, Q_{m}(\theta'; \theta'') = \delta_{l m} \, \delta\bigl(\cos{(\theta)} - \cos{(\theta')}\bigr) \, Q_{m}(\theta; \theta'') \, ,
\end{equation*} 
we infer, after evaluating the $\theta'$-integral and the sum over all integers $m$, that
\begin{equation*}
\begin{split}
& (H_k - \omega_{\epsilon}) (H_k - \omega_{\epsilon})^{- 1} \Psi = \mathscr{E}^{- 1}(r, \theta) \sum_{l \in \mathbb{Z}} \left(\begin{array}{cc}
\mathcal{R}^{2 \times 2}(\partial_r; r)_{k, l, \omega_{\epsilon}} & \boldsymbol{0}_{\mathbb{C}^2} \\ 
\boldsymbol{0}_{\mathbb{C}^2} & \mathcal{R}^{2 \times 2}(\partial_r; r)_{k, l, \omega_{\epsilon}}
\end{array}\right) \mathscr{C}^{- 1} \\ \\
& \times \int_{-1}^{1} Q_{l}(\theta; \theta'') \int_{r_0}^{\infty} \, \mathscr{C} 
\left(\begin{array}{cc}
G(r; r')_{k, l, \omega_{\epsilon}} & \boldsymbol{0}_{\mathbb{C}^2} \\ 
\boldsymbol{0}_{\mathbb{C}^2} & G(r; r')_{k, l, \omega_{\epsilon}}
\end{array}\right) \mathscr{E}(r', \theta'') \, \Psi(r', \theta'') \, \textnormal{d}r' \, \textnormal{d}\bigl(\cos{(\theta'')}\bigr) \, .
\end{split}
\end{equation*} 
Next, we can also move the constant matrix $\mathscr{C}^{- 1}$ as well as the matrix-valued radial operator (\ref{4t4mvro}) into the $\theta''$- and the $r'$-integral. Employing (\ref{Green}) yields
\begin{equation*}
(H_k - \omega_{\epsilon}) (H_k - \omega_{\epsilon})^{- 1} \Psi = \mathscr{E}^{- 1}(r, \theta) \sum_{l \in \mathbb{Z}} Q_{l} \int_{r_0}^{\infty} \delta(r - r') \, \mathscr{E}(r', \theta) \, \Psi(r', \theta) \, \textnormal{d}r' \, .
\end{equation*} 
Solving the integral with respect to the variable $r'$ and substituting the completeness constraint for the spectral projectors (\ref{compconstr}), we immediately obtain the identity (\ref{resolentid}). 

Having established the explicit form of the resolvent $(H_k - \omega_{\epsilon})^{- 1}$ in (\ref{resolvent}), we continue deriving the integral spectral representation of the Dirac propagator. To this end, we express the Dirac spinor $\psi'$ at time $\tau$ in terms of the propagator $U^{\tau, 0} = e^{- \textnormal{i} \tau H}$ applied to smooth, spatially compact initial data $\psi'_0$ at time $\tau = 0$ and expand the initial data in terms of $k$-modes
\begin{equation} \label{proptoid}
\psi' = e^{- \textnormal{i} \tau H} \, \psi'_0 = \sum_{k \in \mathbb{Z}} \, e^{- \textnormal{i} k \phi} \, e^{- \textnormal{i} \tau H_k} \, \psi'_{0, k} \, .
\end{equation} 
We furthermore introduce the spectral projector of the Dirac Hamiltonian for fixed $k$-modes $H_k$ onto the interval $I \subset \mathbb{R}$
\begin{equation*}
P_I(H_k) := \chi_I(H_k) \, ,
\end{equation*}
where $\chi_I$ denotes the characteristic function
\begin{equation*}
\chi_I(H_k) := 
\begin{cases}
1 & \textnormal{for} \,\,\,\, \omega \in I \\ 
0 & \textnormal{for} \,\,\,\,  \omega \notin I 
\end{cases} 
\end{equation*}
with $\omega \in \sigma(H_k)$. Then, by making use of the identity relation
\begin{equation*}
P_{\, (- \infty, \infty)}(H_k) = \1 \, ,
\end{equation*}
we write (\ref{proptoid}) as 
\begin{equation*} 
\psi' =  \sum_{k \in \mathbb{Z}} \, e^{- \textnormal{i} k \phi} \, e^{- \textnormal{i} \tau H_k} \lim_{a \rightarrow \infty} P_{\, (- a, a)}(H_k) \psi'_{0, k} = \frac{1}{2} \, \sum_{k \in \mathbb{Z}} \, e^{- \textnormal{i} k \phi} \lim_{a \rightarrow \infty} e^{- \textnormal{i} \tau H_k} \bigl[P_{\, (- a, a)}(H_k) + P_{\, [- a, a]}(H_k)\bigr] \psi'_{0, k} \, .
\end{equation*} 
Employing Stone's formula for the spectral projector of an unbounded, self-adjoint operator \cite{ReedSimon}, which in our framework reads 
\begin{equation*}
e^{- \textnormal{i} \tau H_k} \bigl[P_{\, (- a, a)}(H_k) + P_{\, [- a, a]}(H_k)\bigr] \psi'_{0, k} = \lim_{\epsilon \searrow 0} \frac{1}{\pi \textnormal{i}} \, \int_{- a}^{a} e^{- \textnormal{i} \omega \tau} \bigl[(H_k - \omega - \textnormal{i} \epsilon)^{- 1} - (H_k - \omega + \textnormal{i} \epsilon)^{- 1}\bigr] \psi'_{0, k} \, \textnormal{d}\omega \, ,
\end{equation*} 
we obtain
\begin{equation*} 
\psi' = \frac{1}{2 \pi \textnormal{i}} \, \sum_{k \in \mathbb{Z}} \, e^{- \textnormal{i} k \phi} \lim_{a \rightarrow \infty} \, \lim_{\epsilon \searrow 0} \, \int_{- a}^a e^{- \textnormal{i} \omega \tau} \bigl[(H_k - \omega - \textnormal{i} \epsilon)^{- 1} - (H_k - \omega + \textnormal{i} \epsilon)^{- 1}\bigr] \psi'_{0, k} \, \textnormal{d}\omega \, ,
\end{equation*}
where the resolvents are given by (\ref{resolvent}). Finally, since the fundamental solutions that occur in the resolvents are bounded for all $\epsilon > 0$ and all $\omega \in \mathbb{R}$ as shown in Appendix \ref{appB}, we can apply Lebesgue's dominated convergence theorem and commute the $\epsilon$-limit and the integral with respect to $\omega$, yielding
\begin{equation} \label{DirProp} 
\psi'(\tau, r, \theta, \phi) = \frac{1}{2 \pi \textnormal{i}} \, \sum_{k \in \mathbb{Z}} \, e^{- \textnormal{i} k \phi} \int_{\mathbb{R}} e^{- \textnormal{i} \omega \tau} \lim_{\epsilon \searrow 0} \, \bigl[(H_k - \omega - \textnormal{i} \epsilon)^{- 1} - (H_k - \omega + \textnormal{i} \epsilon)^{- 1}\bigr](r, \theta; r', \theta') \, \psi'_{0, k}(r', \theta') \, \textnormal{d}\omega \, .
\end{equation}
We note that by comparing this expression to formula (\ref{sdr}), we may directly identify the spectral measure $\textnormal{d}P_{\omega}$ of the Dirac Hamiltonian $H$.
\end{proof}

\noindent This integral spectral representation can be further simplified, on the one hand, by performing the limit $r_0 \nearrow r_-$ and, on the other hand, by computing the difference of the two resolvents $(H_k - \omega - \textnormal{i} \epsilon)^{- 1}$ and $(H_k - \omega + \textnormal{i} \epsilon)^{- 1}$ for $\epsilon \searrow 0$. In the following, we explicitly work out the case $|\omega_{\epsilon}| \geq m$. The case $|\omega_{\epsilon}| < m$ may be treated similarly. As the fundamental solutions $\Phi_1(r; r')$ and $\Phi_2(r; r')$, which constitute the radial Green's matrix $G(r; r')_{k, l, \omega_{\epsilon}}$ and therefore the resolvent $(H_k - \omega_{\epsilon})^{- 1}$, are given piecewise for the domains $r_+ < r' < \infty$, $r_- < r' \leq r_+$, and $r_0 \leq r' \leq r_-$ (see the second part of Appendix \ref{appB}), we begin by splitting the $r'$-integral in the difference of resolvents in the limit $\epsilon \searrow 0$ into the three associated contributions
\begin{equation} \label{simp1}
\begin{split}
& \lim_{\epsilon \searrow 0} \, \bigl[(H_k - \omega - \textnormal{i} \epsilon)^{- 1} - (H_k - \omega + \textnormal{i} \epsilon)^{- 1}\bigr] \psi'_{0, k} = \lim_{\epsilon \searrow 0} \, \sum_{l \in \mathbb{Z}} \, Q_l \biggl(\int_{r_0}^{r_-} + \int_{r_-}^{r_+} + \int_{r_+}^{\infty}\biggr) \mathscr{C} \\ \\
& \times 
\left(\begin{array}{cc}
G(r; r')_{k, l, \omega - \textnormal{i} \epsilon} - G(r; r')_{k, l, \omega + \textnormal{i} \epsilon} & \boldsymbol{0}_{\mathbb{C}^2} \\ 
\boldsymbol{0}_{\mathbb{C}^2} & G(r; r')_{k, l, \omega - \textnormal{i} \epsilon} - G(r; r')_{k, l, \omega + \textnormal{i} \epsilon}
\end{array}\right) \mathscr{E}(r', \theta) \, \psi'_{0, k}(r', \theta) \, \textnormal{d}r' \, .
\end{split}
\end{equation}
Because the integrands in (\ref{simp1}), and hence the $l$th summand, are bounded for all values of $\epsilon$, $r'$, and $\theta$ (see the first part of Appendix \ref{appB} and keeping in mind that the initial data for fixed $k$-modes $\psi'_{0, k}$ has spatially compact support), we can again employ Lebesgue's dominated convergence theorem, which allows us to commute the limit $\epsilon \searrow 0$ with the sum over the integers $l$, the spectral projector $Q_l$, and the integrals with respect to $r'$. Then, applying the limit $r_0 \nearrow r_-$ to the integral spectral representation (\ref{DirProp}) and commuting this limit with the sum over the integers $k$, the integral with respect to $\omega$, and last the sum over the integers $l$ as well as the spectral projector $Q_l$ (in the difference of resolvents) using the same reasoning as before, we obtain the expression
\begin{equation} \label{lldr}
\begin{split}
& \lim_{r_0 \nearrow r_-} \, \lim_{\epsilon \searrow 0} \, \bigl[(H_k - \omega - \textnormal{i} \epsilon)^{- 1} - (H_k - \omega + \textnormal{i} \epsilon)^{- 1}\bigr] \psi'_{0, k} = \sum_{l \in \mathbb{Z}} \, Q_l \biggl(\int_{r_-}^{r_+} + \int_{r_+}^{\infty}\biggr) \mathscr{C} \times \\ \\
& \left(\begin{array}{cc}
\displaystyle \lim_{\epsilon \searrow 0} \, \bigl[G(r; r')_{k, l, \omega - \textnormal{i} \epsilon} - G(r; r')_{k, l, \omega + \textnormal{i} \epsilon}\bigr] & \boldsymbol{0}_{\mathbb{C}^2} \\ 
\boldsymbol{0}_{\mathbb{C}^2} & \displaystyle \lim_{\epsilon \searrow 0} \, \bigl[G(r; r')_{k, l, \omega - \textnormal{i} \epsilon} - G(r; r')_{k, l, \omega + \textnormal{i} \epsilon}\bigr]
\end{array}\right) \hspace{-0.1cm} \mathscr{E}(r', \theta) \, \psi'_{0, k}(r', \theta) \, \textnormal{d}r' \, .
\end{split}
\end{equation}
In order to compute the limit $\epsilon \searrow 0$ of the difference of the radial Green's matrices $G(r; r')_{k, l, \omega - \textnormal{i} \epsilon}$ and $G(r; r')_{k, l, \omega + \textnormal{i} \epsilon}$ in the domain $r_+ < r' < \infty$, we introduce the auxiliary functions (see the second part of Appendix \ref{appB})
\begin{equation*}
\chi_1(r) := \lim_{\epsilon \searrow 0} \widecheck{\Phi}^{(\infty)}(r) \quad \textnormal{and} \quad \chi_2(r) := \lim_{\epsilon \searrow 0} \widehat{\Phi}^{(\infty)}(r) \, ,
\end{equation*}
and write the $\epsilon$-limits of the fundamental radial solutions $\Phi_1$ and $\Phi_2$ as
\begin{equation} \label{frs1}
\begin{split}
\lim_{\epsilon \searrow 0} \Phi_1 & = \chi_1 \quad \textnormal{and} \quad \lim_{\epsilon \searrow 0} \Phi_2 = \alpha \chi_1 + \beta \chi_2 \quad \textnormal{for} \quad \textnormal{Im}{(\omega_{\epsilon})} > 0 \\ 
\lim_{\epsilon \searrow 0} \Phi_1 & = \chi_2 \quad \textnormal{and} \quad \lim_{\epsilon \searrow 0} \Phi_2 = \gamma \chi_1 + \delta \chi_2 \hspace{0.45cm} \textnormal{for} \quad \textnormal{Im}{(\omega_{\epsilon})} < 0 \, ,
\end{split}
\end{equation}
where $\alpha$, $\beta$, $\gamma$, and $\delta$ are constants. The corresponding Wronskian yields
\begin{equation} \label{Wr1}
\lim_{\epsilon \searrow 0} W(\Phi_1, \Phi_2) = \begin{cases}
\hspace{0.26cm} \beta \, W(\chi_1, \chi_2) & \textnormal{for} \quad  \textnormal{Im}{(\omega_{\epsilon})} > 0 \\ 
- \gamma \, W(\chi_1, \chi_2) & \textnormal{for} \quad \textnormal{Im}{(\omega_{\epsilon})} < 0 \, .
\end{cases} 
\end{equation}
Substitution of (\ref{frs1}) and (\ref{Wr1}) into (\ref{GM1}) results in
\begin{equation} \label{DGM1}
\begin{split}
\lim_{\epsilon \searrow 0} \, \bigl[G(r; r')_{k, l, \omega - \textnormal{i} \epsilon} &- G(r; r')_{k, l, \omega + \textnormal{i} \epsilon}\bigr]_{|r_+ < r' < \infty} \\ 
& = \frac{1}{W(\chi_1, \chi_2)(r')} \sum_{u, v = 1}^2 T_{u, v} \left(\begin{array}{ccc}
\displaystyle - \frac{\chi_{u, 1}(r) \, \chi_{v, 2}(r')}{\Delta(r')} & & \displaystyle \frac{\chi_{u, 1}(r) \, \chi_{v, 1}(r')}{r_+} \\ \\
\displaystyle - \frac{\chi_{u, 2}(r) \, \chi_{v, 2}(r')}{\Delta(r')} & & \displaystyle \frac{\chi_{u, 2}(r) \, \chi_{v, 1}(r')}{r_+}
\end{array}\right)_{k, l, \omega}   
\end{split}
\end{equation}
with the coefficients 
\begin{equation*}
T_{1, 1} = \frac{\alpha}{\beta} \, , \,\,\,\,\,\, T_{1, 2} = T_{2, 1} = 1 \, , \,\,\,\,\, \textnormal{and} \,\,\,\,\,\, T_{2, 2} = \frac{\delta}{\gamma} \, .
\end{equation*}
For the domain $r_- < r' \leq r_+$ on the other hand, we define the auxiliary functions 
\begin{equation*}
\widetilde{\chi}_1(r) := \Theta(r_+ - r) \lim_{\epsilon \searrow 0} \widehat{\Phi}^{(-)}(r) + \Theta(r - r_+) \lim_{\epsilon \searrow 0} \widehat{\Phi}^{(\infty)}(r) \quad \textnormal{and} \quad \widetilde{\chi}_2(r) := \Theta(r - r_-) \lim_{\epsilon \searrow 0} \widecheck{\Phi}^{(-)}(r) \, ,
\end{equation*}
and express the $\epsilon$-limits of the fundamental radial solutions by
\begin{equation} \label{frs2}
\begin{split}
\lim_{\epsilon \searrow 0} \Phi_1 & = \alpha' \widetilde{\chi}_1 + \beta' \widetilde{\chi}_2 \quad \textnormal{and} \quad \lim_{\epsilon \searrow 0} \Phi_2 = \widetilde{\chi}_2 \quad \textnormal{for} \quad \textnormal{Im}{(\omega_{\epsilon})} > 0 \\ 
\lim_{\epsilon \searrow 0} \Phi_1 & = \widetilde{\chi}_1 \quad \textnormal{and} \quad \lim_{\epsilon \searrow 0} \Phi_2 = \gamma' \widetilde{\chi}_1 + \delta' \widetilde{\chi}_2 \hspace{0.44cm} \textnormal{for} \quad \textnormal{Im}{(\omega_{\epsilon})} < 0 \, ,
\end{split}
\end{equation}
in which $\alpha'$, $\beta'$, $\gamma'$, and $\delta'$ are also constants. In this case, the Wronskian becomes
\begin{equation} \label{Wr2}
\lim_{\epsilon \searrow 0} W(\Phi_1, \Phi_2) = \begin{cases}
\alpha' \, W(\widetilde{\chi}_1, \widetilde{\chi}_2) & \textnormal{for} \quad  \textnormal{Im}{(\omega_{\epsilon})} > 0 \\ 
\delta' \, W(\widetilde{\chi}_1, \widetilde{\chi}_2) & \textnormal{for} \quad \textnormal{Im}{(\omega_{\epsilon})} < 0 \, .
\end{cases} 
\end{equation}
Using (\ref{frs2}) and (\ref{Wr2}) in (\ref{GM2}) to calculate the difference of Green's matrices for $\epsilon \searrow 0$ leads to
\begin{equation} \label{DGM2}
\begin{split}
\lim_{\epsilon \searrow 0} \, \bigl[G(r; r')_{k, l, \omega - \textnormal{i} \epsilon} &- G(r; r')_{k, l, \omega + \textnormal{i} \epsilon}\bigr]_{|r_- < r' \leq r_+} \\
& = \frac{1}{W(\widetilde{\chi}_1, \widetilde{\chi}_2)(r')} \sum_{u, v = 1}^2 \widetilde{T}_{u, v} \left(\begin{array}{ccc}
\displaystyle \frac{\widetilde{\chi}_{u, 1}(r) \, \widetilde{\chi}_{v, 2}(r')}{\Delta(r')} & & \displaystyle - \frac{\widetilde{\chi}_{u, 1}(r) \, \widetilde{\chi}_{v, 1}(r')}{r_+} \\ \\
\displaystyle \frac{\widetilde{\chi}_{u, 2}(r) \, \widetilde{\chi}_{v, 2}(r')}{\Delta(r')} & & \displaystyle - \frac{\widetilde{\chi}_{u, 2}(r) \, \widetilde{\chi}_{v, 1}(r')}{r_+}
\end{array}\right)_{k, l, \omega}
\end{split}
\end{equation}
with the coefficients 
\begin{equation*}
\widetilde{T}_{1, 1} = \widetilde{T}_{2, 2} = 0 \quad \textnormal{and} \quad \widetilde{T}_{1, 2} = - \widetilde{T}_{2, 1} = 1 \, .
\end{equation*}
Abbreviating (\ref{DGM1}) and (\ref{DGM2}) by $\mathcal{G}(r; r_+ < r' < \infty)_{k, l, \omega}$ and $\mathcal{G}(r; r_- < r' \leq r_+)_{k, l, \omega}$, respectively, and inserting these quantities into (\ref{lldr}), the Dirac propagator (\ref{DirProp}) yields 
\begin{equation*}
\begin{split}
\psi'(\tau, r, \theta, \phi) & = \frac{1}{2 \pi \textnormal{i}} \, \sum_{k, l \in \mathbb{Z}} \, e^{- \textnormal{i} k \phi} \int_{\mathbb{R}} e^{- \textnormal{i} \omega \tau} \, Q_{l} \, \mathscr{C} \biggl[\int_{r_-}^{r_+} \left(\begin{array}{cc}
\mathcal{G}(r; r_- < r' \leq r_+)_{k, l, \omega} & \boldsymbol{0}_{\mathbb{C}^2} \\ 
\boldsymbol{0}_{\mathbb{C}^2} & \mathcal{G}(r; r_- < r' \leq r_+)_{k, l, \omega}
\end{array}\right) \biggr. \\ \\
& \hspace{0.4cm} \biggl. + \int_{r_+}^{\infty} \left(\begin{array}{cc}
\mathcal{G}(r; r_+ < r' < \infty)_{k, l, \omega} & \boldsymbol{0}_{\mathbb{C}^2} \\ 
\boldsymbol{0}_{\mathbb{C}^2} & \mathcal{G}(r; r_+ < r' < \infty)_{k, l, \omega}
\end{array}\right)\biggr] \mathscr{E}(r', \theta) \, \psi'_{0, k}(r', \theta) \, \textnormal{d}r' \, \textnormal{d}\omega \\ \\
& = \frac{1}{2 \pi \textnormal{i}} \, \sum_{k, l \in \mathbb{Z}} \, e^{- \textnormal{i} k \phi} \int_{\mathbb{R}} e^{- \textnormal{i} \omega \tau} \, Q_{l} \, \mathscr{C} \, \biggl(\1_{\mathbb{C}^2} \otimes \, \biggl[\int_{r_-}^{r_+} \mathcal{G}(r; r_- < r' \leq r_+)_{k, l, \omega} \biggr. \biggr.\\ \\
& \hspace{0.4cm} \biggl.\biggl. + \int_{r_+}^{\infty} \mathcal{G}(r; r_+ < r' < \infty)_{k, l, \omega}\biggr]\biggr) \, \mathscr{E}(r', \theta) \, \psi'_{0, k}(r', \theta) \, \textnormal{d}r' \, \textnormal{d}\omega \, . 
\end{split}
\end{equation*}
Given in this form, the generalized, horizon-penetrating integral spectral representation of the massive Dirac propagator in the non-extreme Kerr geometry resembles the one restricted to the region outside the event horizon derived in \cite{FKSM3}.

\vspace{0.2cm}

\section*{Acknowledgments}
\noindent The authors are grateful to Niky Kamran, Guillaume Idelon--Riton, and Simone Murro for useful discussions and comments. This work was supported by the DFG research grant ``Dirac Waves in the Kerr Geometry: Integral Representations, Mass Oscillation Property and the Hawking Effect.''

\vspace{0.5cm}

\begin{appendix}

\section{Symmetry of the Dirac Hamiltonian and Dirichlet-type Boundary Condition} \label{appA}

\noindent In this appendix, we show the symmetry of the Dirac Hamiltonian $H$ with respect to the canonical scalar product $(\, \cdot \, | \, \cdot \, )_{| \mathfrak{N}_{\tau}}$ on the space-like hypersurface $\mathfrak{N}_{\tau}$ by direct computation. Furthermore, we introduce and discuss the relevant radial Dirichlet-type boundary condition imposed on the Dirac spinors.

\begin{Thm}
The Dirac Hamiltonian (\ref{DHCF}) is symmetric with respect to the scalar product (\ref{GSP3}). 
\end{Thm}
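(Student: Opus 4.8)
\noindent The plan is to verify the defining property directly, i.e.\ to show that $(\psi'\,|\,H\phi')_{|\mathfrak{N}_\tau}=(H\psi'\,|\,\phi')_{|\mathfrak{N}_\tau}$ for all smooth, compactly supported Dirac $4$-spinors $\psi',\phi'$, and to obtain this from conservation of the Dirac current rather than by expanding the explicit matrices \eqref{Br}--\eqref{Bphi}, \eqref{PotV}, \eqref{B1234}. (The direct route is also available: integrating the first-order term of $H=\alpha^j\partial_j+\mathscr{V}$ by parts in \eqref{GSP3}---the boundary terms vanishing by compact support, by periodicity in $\phi$, and, at the poles $\theta\in\{0,\pi\}$, thanks to the $\sin(\theta)$-weight---reduces the claim to the two pointwise matrix identities $(\Gamma^\tau\alpha^j)^\dagger=-\Gamma^\tau\alpha^j$ and $\Gamma^\tau\mathscr{V}-\mathscr{V}^\dagger\Gamma^\tau=\tfrac{1}{\sin(\theta)}\,\partial_j\!\bigl(\sin(\theta)\,\Gamma^\tau\alpha^j\bigr)$ for $j\in\{r,\theta,\phi\}$; the first is immediate, since $\Gamma^\mu:=-\mathscr{S}\gamma^0\widetilde{\gamma}'^\mu=\mathscr{S}\mathscr{P}^\dagger\gamma^\mu\mathscr{P}^\dagger$ is Hermitian for every $\mu$---using $\mathscr{P}\mathscr{P}^\dagger=\sqrt{\Sigma}\,\1_{\mathbb{C}^4}$, $\mathscr{P}\mathscr{S}=\mathscr{S}\mathscr{P}^\dagger$ and \eqref{defeq}---while \eqref{DEHF} gives $\alpha^j=-\textnormal{i}\,(\widetilde{\gamma}'^\tau)^{-1}\widetilde{\gamma}'^j$, so $\Gamma^\tau\alpha^j=-\textnormal{i}\,\Gamma^j$ is anti-Hermitian; the second identity is the delicate one and is, as noted below, the zeroth-order content of current conservation.)

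For the current-conservation argument I would work with the covariant Dirac operator $\mathcal{D}:=\gamma'^\mu\bigl[\nabla_\mu+\mathscr{P}\,\partial_\mu(\mathscr{P}^{-1})\bigr]+\textnormal{i}m$ occurring inside \eqref{NSFDE}, and recall that for $J^\mu[\psi',\phi']:=\Sl\,\psi'\,|\,\gamma'^\mu\phi'\,\Sr$ one has the pointwise identity
\begin{equation*}
\nabla_\mu J^\mu \;=\; \Sl\,\mathcal{D}\psi'\,|\,\phi'\,\Sr+\Sl\,\psi'\,|\,\mathcal{D}\phi'\,\Sr ,
\end{equation*}
valid for \emph{arbitrary} smooth spinors, not only for solutions. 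This follows from covariant constancy of the Dirac matrices, from compatibility of $\boldsymbol{\nabla}$ with the spin scalar product \eqref{spinsp}, and from the symmetry $\Sl\,\gamma'^\mu\psi'\,|\,\phi'\,\Sr=\Sl\,\psi'\,|\,\gamma'^\mu\phi'\,\Sr$ (equivalently \eqref{defeq}), the $\textnormal{i}m$-contributions cancelling because $\Sl\,\textnormal{i}m\,\psi'\,|\,\phi'\,\Sr=-\Sl\,\psi'\,|\,\textnormal{i}m\,\phi'\,\Sr$.

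Next I would use stationarity: since the metric \eqref{Kerrmetric}, the Carter tetrad \eqref{CT}, the spin coefficients \eqref{SC} and $\mathscr{P}$ carry no $\tau$-dependence, neither do $\gamma'^\mu$, $\widetilde{\gamma}'^\tau$, $\Gamma^\tau$ or $H$, and $\sqrt{|\det\boldsymbol{g}|}=\Sigma\sin(\theta)$ (which follows from \eqref{DNV}--\eqref{JD}). Evaluating the identity above on a $\tau$-independent pair $(\psi',\phi')$ kills the $\partial_\tau$-contributions and leaves $\tfrac{1}{\Sigma\sin(\theta)}\,\partial_j\!\bigl(\Sigma\sin(\theta)\,J^j\bigr)=\Sl\,\mathcal{D}_{\textnormal{sp}}\psi'\,|\,\phi'\,\Sr+\Sl\,\psi'\,|\,\mathcal{D}_{\textnormal{sp}}\phi'\,\Sr$ with $\mathcal{D}_{\textnormal{sp}}:=\mathcal{D}-\gamma'^\tau\partial_\tau$. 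Integrating over $\mathfrak{N}_\tau$ in $\textnormal{d}r\,\textnormal{d}\theta\,\textnormal{d}\phi$, the left-hand side vanishes by the divergence theorem (compact support in $r$, periodicity in $\phi$, and $\Sigma\sin(\theta)\,J^\theta\to0$ at the poles), hence $\int_{\mathfrak{N}_\tau}\Sigma\sin(\theta)\,\bigl(\Sl\,\mathcal{D}_{\textnormal{sp}}\psi'\,|\,\phi'\,\Sr+\Sl\,\psi'\,|\,\mathcal{D}_{\textnormal{sp}}\phi'\,\Sr\bigr)\,\textnormal{d}r\,\textnormal{d}\theta\,\textnormal{d}\phi=0$. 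By the construction of \eqref{DEHF} (all left-multiplications involved being invertible), $\mathcal{D}\psi'=0$ is equivalent to $\textnormal{i}\partial_\tau\psi'=H\psi'$, so $\mathcal{D}_{\textnormal{sp}}=\textnormal{i}\,\gamma'^\tau H$ as operators. Inserting this, using once more the symmetry of $\gamma'^\tau$ with respect to \eqref{spinsp}, and noting from \eqref{DNV}--\eqref{JD} that $\boldsymbol{\slashed{\nu}}\,\textnormal{d}\mu_{|\mathfrak{N}_\tau}=\Sigma\sin(\theta)\,\gamma'^\tau\,\textnormal{d}r\,\textnormal{d}\theta\,\textnormal{d}\phi$---so that $(\psi'\,|\,\phi')_{|\mathfrak{N}_\tau}=\int\Sigma\sin(\theta)\,\Sl\,\psi'\,|\,\gamma'^\tau\phi'\,\Sr\,\textnormal{d}r\,\textnormal{d}\theta\,\textnormal{d}\phi$, which is exactly \eqref{GSP3}---the two summands become $-\textnormal{i}\,(H\psi'\,|\,\phi')_{|\mathfrak{N}_\tau}$ and $+\textnormal{i}\,(\psi'\,|\,H\phi')_{|\mathfrak{N}_\tau}$; their vanishing sum is the claimed symmetry.

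I expect the main obstacle to be this final bookkeeping---tracking the lapse factor $(1+2Mr/\Sigma)^{\pm1/2}$ and the volume weights so that $\boldsymbol{\slashed{\nu}}\,\textnormal{d}\mu_{|\mathfrak{N}_\tau}$ collapses to $\Sigma\sin(\theta)\,\gamma'^\tau\,\textnormal{d}r\,\textnormal{d}\theta\,\textnormal{d}\phi$, checking that no boundary term survives at the rotation axis (where some entries of \eqref{CT}--\eqref{SC} and of $\mathscr{V}$ carry $\csc(\theta)$ and $\cot(\theta)$ factors), and confirming that the current-conservation identity is insensitive to the gauge term $\mathscr{P}\,\partial_\mu(\mathscr{P}^{-1})$. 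In the direct matrix approach the same difficulty reappears as the zeroth-order identity $\Gamma^\tau\mathscr{V}-\mathscr{V}^\dagger\Gamma^\tau=\tfrac{1}{\sin(\theta)}\,\partial_j\!\bigl(\sin(\theta)\,\Gamma^\tau\alpha^j\bigr)$, which holds only because the spin coefficients \eqref{SC} together with the $\mathscr{P}$-transformation reproduce $\partial_j\!\bigl(\sin(\theta)\,\widetilde{\gamma}'^j\bigr)$ exactly.
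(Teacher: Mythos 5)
Your current-conservation argument is correct, and it is a genuinely different route from the paper's. The paper proceeds by direct matrix computation: it splits $\mathscr{V}=\mathscr{V}_0+\mathscr{V}_{\mu_\star}$ into mass-independent and mass-dependent parts, verifies $\Gamma^\tau\mathscr{V}_0=-\mathscr{V}_0^\dagger\Gamma^\tau$ and $\Gamma^\tau\mathscr{V}_{\mu_\star}=\mathscr{V}_{\mu_\star}^\dagger\Gamma^\tau$ directly from \eqref{B1234} and \eqref{NGAM}, integrates the first-order term by parts, disposes of the radial boundary terms via the Dirichlet-type condition, and then establishes by explicit calculation the identity $\partial_j(\Gamma^\tau)\alpha^j+\Gamma^\tau\partial_j(\alpha^j)+\cot(\theta)\,\Gamma^\tau\alpha^\theta=-2\mathscr{V}_0^\dagger\Gamma^\tau$ together with the anti-Hermiticity of each $\Gamma^\tau\alpha^j$. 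Your approach instead derives symmetry as the infinitesimal expression of current conservation: the pointwise identity $\nabla_\mu J^\mu=\Sl\mathcal{D}\psi'\,|\,\phi'\Sr+\Sl\psi'\,|\,\mathcal{D}\phi'\Sr$, stationarity to kill the $\tau$-flux, the divergence theorem on $\mathfrak{N}_\tau$, and the operator identity $\mathcal{D}_{\textnormal{sp}}=\textnormal{i}\,\gamma'^\tau H$ combined with the lapse bookkeeping $\boldsymbol{\slashed{\nu}}\,\textnormal{d}\mu_{|\mathfrak{N}_\tau}=\Sigma\sin(\theta)\,\gamma'^\tau\,\textnormal{d}r\,\textnormal{d}\theta\,\textnormal{d}\phi$. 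The payoff of your route is conceptual economy: no explicit entries of $\alpha^j$ or $\mathscr{V}$ are ever touched, and the identity the paper checks by hand is seen to be forced by covariant constancy of the Dirac matrices and spin metric. The cost is that the transfer of the divergence identity to the $\mathscr{P}$-gauged frame (with non-unitary $\mathscr{P}$) and the exact cancellation of the lapse factors must be argued carefully; you correctly flag both. The paper's approach buys a fully self-contained, low-tech verification whose side product---the split $\mathscr{V}=\mathscr{V}_0+\mathscr{V}_{\mu_\star}$ and the relations \eqref{relprop}---is reused downstream. Two remarks on your sketch of the direct route: your reformulation $\Gamma^\mu=\mathscr{S}\mathscr{P}^\dagger\gamma^\mu\mathscr{P}^\dagger$ is a clean and correct way to see Hermiticity of $\Gamma^\mu$ (and hence anti-Hermiticity of $\Gamma^\tau\alpha^j=-\textnormal{i}\,\Gamma^j$) without inspecting matrix entries; and the zeroth-order identity you isolate, $\Gamma^\tau\mathscr{V}-\mathscr{V}^\dagger\Gamma^\tau=\tfrac{1}{\sin(\theta)}\,\partial_j\!\bigl(\sin(\theta)\,\Gamma^\tau\alpha^j\bigr)$, is exactly equivalent to the paper's displayed relation once combined with \eqref{relprop}, so the two direct routes coincide at the level of what has to be checked. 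Finally, note that the paper also treats the radial boundary term and the inner Dirichlet condition at $r=r_0$ within this proof, which your compact-support argument sidesteps; this is not needed for symmetry per se, but the paper introduces it here because it is required for Theorem~\ref{ThmIV2}.
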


\begin{proof}
To establish the symmetry, namely that 
\begin{equation*}
(\psi' | H \phi')_{|\mathfrak{N}_{\tau}} = (H \psi' | \phi')_{|\mathfrak{N}_{\tau}} \, ,
\end{equation*}
we begin by splitting the potential $\mathscr{V}$ given in (\ref{PotV}) into mass-independent and mass-dependent parts
\begin{equation*}
\mathscr{V} = \mathscr{V}_0 + \mathscr{V}_m \, ,
\end{equation*}
where
\begin{equation*}
\mathscr{V}_0 := - \frac{1}{\Sigma + 2 M r}
\left(\begin{array}{cc}
\mathscr{B}_1 & \boldsymbol{0}_{\mathbb{C}^2} \\
\boldsymbol{0}_{\mathbb{C}^2} & \mathscr{B}_4
\end{array}\right)
\quad \textnormal{and} \quad 
\mathscr{V}_m := - \frac{1}{\Sigma + 2 M r}
\left(\begin{array}{cc}
\boldsymbol{0}_{\mathbb{C}^2} & \mathscr{B}_2 \\
\mathscr{B}_3 & \boldsymbol{0}_{\mathbb{C}^2}
\end{array}\right) .
\end{equation*}
The $(2 \times 2)$-blocks $\mathscr{B}_k$, with $k \in \{1, 2, 3, 4\}$, are specified in (\ref{B1234}). This procedure bears the advantage of obtaining anti-self-adjoint and self-adjoint matrices  
\begin{equation}\label{relprop}
\Gamma^{\tau} \mathscr{V}_0 = - \mathscr{V}_0^{\dagger} \Gamma^{\tau} \quad \textnormal{and} \quad \Gamma^{\tau} \mathscr{V}_m = \mathscr{V}_m^{\dagger} \Gamma^{\tau} \, ,
\end{equation}
for which $\Gamma^{\tau} = \Gamma^{\tau \dagger}$ is defined in (\ref{NGAM}). We may then write
\begin{equation*}
\begin{split}
(\psi' | H \phi')_{|\mathfrak{N}_{\tau}} & = \iiint \psi'^{\dagger} \, \Gamma^{\tau} H \, \phi' \sin{(\theta)} \,  \textnormal{d}\phi \, \textnormal{d}\theta \, \textnormal{d}r \\ \\
& = \iiint \psi'^{\dagger} \, \Gamma^{\tau} \alpha^j \partial_j(\phi') \sin{(\theta)} \, \textnormal{d}\phi \, \textnormal{d}\theta \, \textnormal{d}r + \iiint \psi'^{\dagger} \, \Gamma^{\tau} \mathscr{V}_0 \, \phi' \sin{(\theta)} \, \textnormal{d}\phi \, \textnormal{d}\theta \, \textnormal{d}r \\ \\
& \hspace{0.4cm} + \iiint \psi'^{\dagger} \, \Gamma^{\tau} \mathscr{V}_m \, \phi' \sin{(\theta)} \, \textnormal{d}\phi \, \textnormal{d}\theta \, \textnormal{d}r \, . 
\end{split}
\end{equation*}
Integration by parts of the first triple integral in the second line and substitution of the relations (\ref{relprop}) in the remaining two triple integrals results in 
\begin{equation}\label{PRSYM}
\begin{split}
(\psi' | H \phi')_{|\mathfrak{N}_{\tau}} & = - \iiint \partial_j \bigl(\psi'^{\dagger} \, \Gamma^{\tau} \alpha^j  \sin{(\theta)}\bigr) \phi' \, \textnormal{d}\phi \, \textnormal{d}\theta \, \textnormal{d}r - \iiint \psi'^{\dagger} \,  \mathscr{V}_0^{\dagger} \, \Gamma^{\tau} \phi' \sin{(\theta)} \, \textnormal{d}\phi \, \textnormal{d}\theta \, \textnormal{d}r \\ \\
& \hspace{0.4cm} + \iiint \psi'^{\dagger} \, \mathscr{V}_m^{\dagger} \, \Gamma^{\tau} \phi' \sin{(\theta)} \, \textnormal{d}\phi \, \textnormal{d}\theta \, \textnormal{d}r \\ \\
& = - \iiint \partial_j (\psi'^{\dagger}) \, \Gamma^{\tau} \alpha^j \phi' \sin{(\theta)} \, \textnormal{d}\phi \, \textnormal{d}\theta \, \textnormal{d}r \\ \\
& \hspace{0.4cm} - \iiint \psi'^{\dagger} \bigl[\partial_j (\Gamma^{\tau}) \, \alpha^j + \Gamma^{\tau} \partial_j (\alpha^j) + \Gamma^{\tau} \alpha^{\theta} \cot{(\theta)}\bigr] \phi' \sin{(\theta)} \, \textnormal{d}\phi \, \textnormal{d}\theta \, \textnormal{d}r \\ \\
& \hspace{0.4cm} - \iiint (\mathscr{V}_0 \, \psi')^{\dagger} \, \Gamma^{\tau} \phi' \sin{(\theta)} \, \textnormal{d}\phi \, \textnormal{d}\theta \, \textnormal{d}r + \iiint (\mathscr{V}_m \, \psi')^{\dagger} \, \Gamma^{\tau} \phi' \sin{(\theta)} \, \textnormal{d}\phi \, \textnormal{d}\theta \, \textnormal{d}r \, . 
\end{split}
\end{equation}
We remark that in the integration by parts, the angular derivatives do not give rise to boundary terms because the two-dimensional submanifold $S^2$ in $\mathfrak{N}_{\tau} \simeq \mathbb{R}_{> 0} \times S^2$ is compact without boundary. The radial derivative, on the other hand, yields a boundary term that vanishes as we impose an appropriate Dirichlet-type boundary condition on the Dirac spinors. More precisely, since the computation of the matrix $\Gamma^{\tau} \alpha^{r}$ leads to the expression
\begin{equation*}
\Gamma^{\tau} \alpha^{r} = \textnormal{i} \, \textnormal{diag} \biggl(- \frac{\Delta}{r_+}, r_+, r_+, - \frac{\Delta}{r_+}\biggr) \, ,
\end{equation*}
the radial boundary term becomes
\begin{equation*}
\iint_{S^2} \psi'^{\dagger} \, \Gamma^{\tau} \alpha^r \phi' \sin{(\theta)} \, \textnormal{d}\phi \, \textnormal{d}\theta \, \Big|^{r_2}_{r_1} = \textnormal{i} r_+ \iint_{S^2} \biggl(- \frac{\Delta}{r_+^2} \, \overline{\psi'}_1 \phi'_1 + \overline{\psi'}_2 \phi'_2 + \overline{\psi'}_3 \phi'_3 - \frac{\Delta}{r_+^2} \, \overline{\psi'}_4 \phi'_4\biggr) \sin{(\theta)} \, \textnormal{d}\phi \, \textnormal{d}\theta \, \Big|^{r_2}_{r_1} \, .
\end{equation*}
In order for this term to vanish, we impose the radial Dirichlet-type boundary condition
\begin{equation} \label{DTBCWI}
\sum_{i = 1}^2 \, (- 1)^i \biggl(- \frac{\Delta}{r_+^2} \, \overline{\psi'}_1 \phi'_1 + \overline{\psi'}_2 \phi'_2 + \overline{\psi'}_3 \phi'_3 - \frac{\Delta}{r_+^2} \, \overline{\psi'}_4 \phi'_4\biggr)_{|r = r_i} = 0 \, .
\end{equation}
In the present work, we consider only Dirac spinors with support from a specific time-like inner boundary at $r = r_0 < r_-$ beyond the Cauchy horizon up to infinity, that is
\begin{equation*}
\textnormal{supp} \, \phi' = (r_0, \infty) \times S^2 \, .
\end{equation*}
Moreover, we require the Dirac spinors to be in $L^2\bigl((r_0, \infty) \times S^2, SM\bigr)$, implying proper decay at infinity. Taking this into account, the radial boundary condition (\ref{DTBCWI}) reduces to a condition for the time-like inner boundary at $r = r_0$
\begin{equation} \label{DTBCWI2}
\biggl(- \frac{\Delta}{r_+^2} \, \overline{\psi'}_1 \phi'_1 + \overline{\psi'}_2 \phi'_2 + \overline{\psi'}_3 \phi'_3 - \frac{\Delta}{r_+^2} \, \overline{\psi'}_4 \phi'_4\biggr)_{|r = r_0} = 0 \, ,
\end{equation}
which can be brought into a more suitable form as follows. By means of the spin scalar product (\ref{spinsp}) and the relation $\mathscr{S}' \gamma'^r = \textnormal{i} \, \Gamma^{\tau} \alpha^{r}/\Sigma$, we may represent (\ref{DTBCWI2}) as
\begin{equation} \label{DirichletTBC2}
\Sl \psi' | \gamma'^r \phi' \Sr_{|\{\tau\} \times \{r_0\} \times S^2} = 0 \, .
\end{equation}
Now, introducing $\boldsymbol{n}$ as the unit normal to the hypersurfaces $\{\tau\} \times S^2$, we can write (\ref{DirichletTBC2}) in the form 
\begin{equation} \label{RadDBCFIN}
\Sl \psi' | \boldsymbol{\slashed{n}} \, \phi' \Sr_{|\{\tau\} \times \{r_0\} \times S^2} = 0 \quad \Leftarrow \quad (\boldsymbol{\slashed{n}} - \textnormal{i} \, \mathcal{H}) \, \psi'_{|\{\tau\} \times \{r_0\} \times S^2} = \boldsymbol{0} \, ,
\end{equation}
where the slash again denotes Clifford multiplication and $\mathcal{H}$ is an arbitrary matrix with the property $\mathcal{H} = (\mathscr{S}')^{- 1} \mathcal{H}^{\dagger} \mathscr{S}'$. The implication can be easily verified via the calculation
\begin{equation*} 
\begin{split}
\Sl \psi' | \boldsymbol{\slashed{n}} \, \phi' \Sr_{|\{\tau\} \times \{r_0\} \times S^2} & = \Sl \boldsymbol{\slashed{n}} \, \psi' | \phi' \Sr_{|\{\tau\} \times \{r_0\} \times S^2} = 
\Sl \textnormal{i} \, \mathcal{H} \, \psi' | \phi' \Sr_{|\{\tau\} \times \{r_0\} \times S^2} \\ 
& = - \Sl \psi' | \, \textnormal{i} \, \mathcal{H} \, \phi' \Sr_{|\{\tau\} \times \{r_0\} \times S^2} = - \Sl \psi' | \boldsymbol{\slashed{n}} \, \phi' \Sr_{|\{\tau\} \times \{r_0\} \times S^2} \, .
\end{split}
\end{equation*}
To guarantee compatibility of the boundary condition on the right hand side of (\ref{RadDBCFIN}) with a potential product structure of the Dirac $4$-spinors, in which the dependences on the variables $\tau, r, \theta$, and $\phi$ are separated (such as in Chandrasekhar's separation ansatz (\ref{Chandrasepans})), we choose 
\begin{equation*}
\mathcal{H} = \mathscr{P} (\mathscr{P}^{\dagger})^{- 1} \, ,
\end{equation*}
where $\mathscr{P}$ is defined in (\ref{spinortr}). We note in passing that this Dirichlet-type boundary condition is a so-called MIT-type boundary condition for Dirac fields \cite{CJJTW} that describes a perfect reflection of Dirac particles at the respective boundary surface. Continuing the proof of symmetry, the explicit computation of the square bracket in the fourth line of (\ref{PRSYM}) yields the result 
\begin{equation*} 
\partial_j (\Gamma^{\tau}) \, \alpha^j + \Gamma^{\tau} \partial_j (\alpha^j) + \Gamma^{\tau} \alpha^{\theta} \cot{(\theta)} = - 2 \mathscr{V}_0^{\dagger} \Gamma^{\tau} \, .
\end{equation*}
Besides, all three matrix products $\Gamma^{\tau} \alpha^j$, with $j \in \{r, \theta, \phi\}$, are anti-self-adjoint
\begin{equation*} 
\Gamma^{\tau} \alpha^j = - \alpha^{j \dagger} \Gamma^{\tau \dagger} = - \alpha^{j \dagger} \Gamma^{\tau} \, .
\end{equation*}
Therefore, we immediately find that
\begin{equation*}
\begin{split}
(\psi' | H \phi')_{|\mathfrak{N}_{\tau}} & = \iiint \partial_j (\psi'^{\dagger}) \, \alpha^{j \dagger} \, \Gamma^{\tau} \phi' \sin{(\theta)} \, \textnormal{d}\phi \, \textnormal{d}\theta \, \textnormal{d}r + 2 \iiint (\mathscr{V}_0 \psi')^{\dagger} \, \Gamma^{\tau} \phi' \sin{(\theta)} \, \textnormal{d}\phi \, \textnormal{d}\theta \, \textnormal{d}r \\ \\
& \hspace{0.4cm} - \iiint (\mathscr{V}_0 \psi')^{\dagger} \, \Gamma^{\tau} \phi' \sin{(\theta)} \, \textnormal{d}\phi \, \textnormal{d}\theta \, \textnormal{d}r + \iiint (\mathscr{V}_m \psi')^{\dagger} \, \Gamma^{\tau} \phi' \sin{(\theta)} \, \textnormal{d}\phi \, \textnormal{d}\theta \, \textnormal{d}r \\ \\
& = \iiint (\alpha^{j} \partial_j \psi')^{\dagger} \, \Gamma^{\tau} \phi' \sin{(\theta)} \, \textnormal{d}\phi \, \textnormal{d}\theta \, \textnormal{d}r + \iiint (\mathscr{V}_0 \psi')^{\dagger} \, \Gamma^{\tau} \phi' \sin{(\theta)} \, \textnormal{d}\phi \, \textnormal{d}\theta \, \textnormal{d}r \\ \\
& \hspace{0.4cm} + \iiint (\mathscr{V}_m \psi')^{\dagger} \, \Gamma^{\tau} \phi' \sin{(\theta)} \, \textnormal{d}\phi \, \textnormal{d}\theta \, \textnormal{d}r \\ \\
& = \iiint (H \psi')^{\dagger} \, \Gamma^{\tau} \phi' \sin{(\theta)} \, \textnormal{d}\phi \, \textnormal{d}\theta \, \textnormal{d}r = (H \psi' | \phi')_{|\mathfrak{N}_{\tau}} \, . 
\end{split}
\end{equation*}
\end{proof}

\section{Fundamental Solutions for the Construction of the Radial Green's Matrix} \label{appB}

\noindent In order to determine the fundamental solutions $\Phi_1(r; r')$ and $\Phi_2(r; r')$ of the radial system (\ref{RADODESYS}) with complex-valued frequencies, which are used for the construction of the Green's matrix defined via equation (\ref{Green}), we first study certain aspects of the associated Jost-type solutions \cite{ReedSimon3, alfaro+regge}. In more detail, we derive radial Jost-type equations that yield solutions with asymptotic behaviors near infinity, the event horizon, and the Cauchy horizon similar to those given in Lemmas \ref{L1} and \ref{L2a3}, and briefly discuss the existence, uniqueness, and boundedness of these solutions. Since we apply Lebesgue's dominated convergence theorem to simplify our integral spectral representation of the Dirac propagator, the latter aspect becomes also relevant for the commutation of specific limits, sums, and integrals. For the derivation of the Jost-type equations, we rewrite the radial first-order system (\ref{RADODESYS}) as two second-order scalar equations. In terms of the Regge--Wheeler coordinate $r_{\star}$ and the function $\widetilde{\mathscr{R}} = (\widetilde{\mathscr{R}}_+, \widetilde{\mathscr{R}}_-)^{\textnormal{T}} = \bigl(\sqrt{|\Delta|} \, \mathscr{R}_+, r_+ \, \mathscr{R}_-\bigr)^{\textnormal{T}}$, these read
\begin{equation} \label{radsystransf}
\bigl[\partial_{r_{\star} r_{\star}} + \mathfrak{J}_{\pm \xi, k, \omega}(r) \, \partial_{r_{\star}} + \mathfrak{K}^{\pm}_{\xi, k, \omega}(r)\bigr] \widetilde{\mathscr{R}}_{\pm} = 0 \, ,
\end{equation}
where
\begin{equation*} 
\begin{split}
\mathfrak{J}_{\xi, k, \omega}(r) & := \frac{1}{r^2 +a^2} \biggl[r - \frac{M (3 r^2 - a^2)}{r^2 + a^2} - 4 \textnormal{i} \omega M r - 2 \textnormal{i} k a - \frac{\textnormal{i} m \Delta}{\xi + \textnormal{i} m r}\biggl] \\ \\
\mathfrak{K}^+_{\xi, k, \omega}(r) & := \frac{\Delta}{(r^2 +a^2)^2} \biggl[\bigl(\textnormal{i} \omega \, [\Delta + 4 M r] + 2 \textnormal{i} k a\bigr)\biggl(- \textnormal{i} \omega + \frac{r - M}{\Delta} + \frac{\textnormal{i} m}{\xi + \textnormal{i} m r}\biggr) - 2 \textnormal{i} \omega \, (r + M) - m^2 r^2 - \xi^2\biggr] \\ \\
\mathfrak{K}^-_{\xi, k, \omega}(r) & := \frac{\Delta}{(r^2 +a^2)^2} \biggl[\textnormal{i} \omega \biggl(r - M - 4 \textnormal{i} \omega M r - 2 \textnormal{i} k a + \frac{\textnormal{i} m \Delta}{\xi - \textnormal{i} m r}\biggr) + \omega^2 \Delta - m^2 r^2 - \xi^2\biggr] \, .
\end{split}
\end{equation*}
Employing the ansatzes
\begin{equation*}
\widetilde{\mathscr{R}}_{\pm}(r_{\star}) = \exp{\biggl(- \frac{1}{2} \int \mathfrak{K}^{\pm}_{\xi, k, \omega}(r) \, \textnormal{d}r_{\star}\biggr)} \, \mathscr{Y}_{\pm}(r_{\star}) \, ,
\end{equation*}
we may transform (\ref{radsystransf}) into the Schr\"odinger-type equations
\begin{equation} \label{schrödinger}
\bigl[\partial_{r_{\star} r_{\star}} + \mathscr{V}^{\pm}_{\xi, k, \omega}(r)\bigr] \mathscr{Y}_{\pm} = 0
\end{equation}
with the potentials
\begin{equation*}
\mathscr{V}^{\pm}_{\xi, k, \omega}(r) := \mathfrak{K}^{\pm}_{\xi, k, \omega}(r) - \frac{\mathfrak{J}^2_{\pm \xi, k, \omega}(r)}{4} - \frac{\partial_{r_{\star}} \mathfrak{J}_{\pm \xi, k, \omega}(r)}{2} \, .
\end{equation*}
To obtain Jost-type equations with boundary conditions prescribed at infinity, we split these potentials into an asymptotic contribution effective at infinity and otherwise regular contributions
\begin{equation} \label{potsplit}
\mathscr{V}^{\pm}_{\xi, k, \omega} = \mathscr{V}_{\infty} + \mathscr{V}^{\pm}_{\textnormal{reg.}} \, ,
\end{equation}
where the asymptotic contribution is given by the expression
\begin{equation} \label{asymppot}
\mathscr{V}_{\infty} = \mathscr{V}_{\infty}(r_{\star}) := \omega^2 - m^2 + \frac{2 M m^2}{r_{\star}} 
\end{equation}
and the regular contributions are on the order of $\mathscr{V}^{\pm}_{\textnormal{reg.}} = \mathcal{O}\bigl(1/r_{\star}^2\bigr)$ satisfying the condition 
\begin{equation*}
\int_{r_{\star}}^{\infty} \big|\mathscr{V}^{\pm}_{\textnormal{reg.}}(y)\big| \, \textnormal{d}y < \infty \quad \textnormal{for all} \quad r_{\star} > 0 \, .
\end{equation*}
We remark that the asymptotic potential (\ref{asymppot}) corresponds to the equation
\begin{equation} \label{asympste}
\bigl[\partial_{r_{\star} r_{\star}} + \mathscr{V}_{\infty}(r_{\star})\bigr] \mathscr{Y}_{\infty} = 0 \, ,
\end{equation}
which has the solution \cite{WhittakerWatson}
\begin{equation*}
\mathscr{Y}_{\infty} = \mathcal{Z}_1 W_{- \alpha, \frac{1}{2}}\Bigl(2 \textnormal{i} \, \textnormal{sign}(\omega) \sqrt{\omega^2 - m^2} \, r_{\star}\Bigr) + \mathcal{Z}_2 W_{+ \alpha, \frac{1}{2}}\Bigl(- 2 \textnormal{i} \, \textnormal{sign}(\omega) \sqrt{\omega^2 - m^2} \, r_{\star}\Bigr) \, ,
\end{equation*}
where $W_{\pm \alpha, \frac{1}{2}}(\, \cdot \,)$ are Whittaker functions with $\alpha := \textnormal{i} \, \textnormal{sign}(\omega) \, M m^2/\sqrt{\omega^2 - m^2}$ and $\mathcal{Z}_{1/2}$ denote constants. The asymptotics of this solution at infinity reads in case $|\omega| \geq m$
\begin{equation} \label{asymsol1}
\begin{split}
\mathscr{Y}_{\infty} & \sim \mathcal{Z}'_1 \exp{\biggl(\textnormal{i} \, \textnormal{sign}(\omega) \biggl[\sqrt{\omega^2 - m^2} \, r_{\star} + \frac{M m^2}{\sqrt{\omega^2 - m^2}} \ln{(r_{\star})}\biggr]\biggr)} \\ \\
& \hspace{0.4cm} + \mathcal{Z}'_2 \exp{\biggl(- \textnormal{i} \, \textnormal{sign}(\omega) \biggl[\sqrt{\omega^2 - m^2} \, r_{\star} + \frac{M m^2}{\sqrt{\omega^2 - m^2}} \ln{(r_{\star})}\biggr]\biggr)} \, ,
\end{split}
\end{equation}
whereas for $|\omega| < m$ it yields
\begin{equation*} 
\begin{split}
\mathscr{Y}_{\infty} & \sim \mathcal{Z}'_1 \exp{\biggl(\textnormal{sign}(\omega) \biggl[\sqrt{m^2 - \omega^2} \, r_{\star} + \frac{M m^2}{\sqrt{m^2 - \omega^2}} \ln{(r_{\star})}\biggr]\biggr)} \\ \\
& \hspace{0.4cm} + \mathcal{Z}'_2 \exp{\biggl(- \textnormal{sign}(\omega) \biggl[\sqrt{m^2 - \omega^2} \, r_{\star} + \frac{M m^2}{\sqrt{m^2 - \omega^2}} \ln{(r_{\star})}\biggr]\biggr)} 
\end{split} 
\end{equation*}
with $\mathcal{Z}'_{1/2}$ also being constants (cf.\ Lemma \ref{L1}). In the following, we restrict our attention to the case $|\omega| \geq m$. The case $|\omega| < m$ may be treated accordingly. As in the usual study of Jost equations and their solutions, we complexify the Schr\"odinger-type equations (\ref{schrödinger}) via the analytic continuation $\omega \rightarrow \omega_{\textnormal{c}} \in \mathbb{C}$ of the frequency. Then, by means of the above splittings of the potentials (\ref{potsplit}) and the specific form of the asymptotic solution (\ref{asymsol1}), we can write the Jost-type equation representation of (\ref{schrödinger}) as
\begin{equation} \label{Josteq}
\begin{split}
\mathscr{Y}_{\pm}(r_{\star}) & = \exp{\biggl(\textnormal{i} \, \textnormal{sign}\bigl(\textnormal{Im}(\omega_{\textnormal{c}})\bigr) \, \textnormal{sign}(\omega_{\textnormal{c}}) \biggl[\sqrt{|\omega_{\textnormal{c}}|^2 - m^2} \,\, r_{\star} + \frac{M m^2}{\sqrt{|\omega_{\textnormal{c}}|^2 - m^2}} \ln{(r_{\star})}\biggr]\biggr)}	\\ \\
& \hspace{0.4cm} + \int_{r_{\star}}^{\infty} \frac{\sin{\bigl(\sqrt{\mathscr{V}_{\infty}(y)} \, [r_{\star} - y]\bigr)}}{\sqrt{\mathscr{V}_{\infty}(y)}} \, \mathscr{V}^{\pm}_{\textnormal{reg.}}(y) \, \mathscr{Y}_{\pm}(y) \, \textnormal{d}y \, .
\end{split}
\end{equation}
We note that the proper complexification of the asymptotic Schr\"odinger-type equation (\ref{asympste}), which is in accordance with the particular representation (\ref{asymsol1}) of the asymptotic solutions containing signum functions, is obtained by first rewriting the potential $\mathscr{V}_{\infty}$ defined in (\ref{asymppot}) in the form 
\begin{equation*} 
\mathscr{V}_{\infty} = \textnormal{sign}^2(\omega) \biggl(|\omega|^2 - m^2 + \frac{2 M m^2}{r_{\star}}\biggr)
\end{equation*}
and subsequently extending the frequency $\omega$ to complex values. This is relevant for the derivation of the exponential term in (\ref{Josteq}). Applying the series ansatzes
\begin{equation*}
\mathscr{Y}_{\pm}(r_{\star}) = \sum_{n = 0}^{\infty} \mathscr{Y}_{\pm, n}(r_{\star}) \, ,
\end{equation*}
where the zeroth-order terms are given by
\begin{equation*}
\mathscr{Y}_{\pm, 0}(r_{\star}) = \exp{\biggl(\textnormal{i} \, \textnormal{sign}\bigl(\textnormal{Im}(\omega_{\textnormal{c}})\bigr) \, \textnormal{sign}(\omega_{\textnormal{c}}) \biggl[\sqrt{|\omega_{\textnormal{c}}|^2 - m^2} \, r_{\star} + \frac{M m^2}{\sqrt{|\omega_{\textnormal{c}}|^2 - m^2}} \ln{(r_{\star})}\biggr]\biggr)} \, ,
\end{equation*}
in (\ref{Josteq}), we find the recurrence relations
\begin{equation*}
\mathscr{Y}_{\pm, n}(r_{\star}) = \int_{r_{\star}}^{\infty} \frac{\sin{\bigl(\sqrt{\mathscr{V}_{\infty}(y)} \, [r_{\star} - y]\bigr)}}{\sqrt{\mathscr{V}_{\infty}(y)}} \, \mathscr{V}^{\pm}_{\textnormal{reg.}}(y) \, \mathscr{Y}_{\pm, n - 1}(r_{\star}) \, \textnormal{d}y \quad \textnormal{for} \quad n \geq 1 \, .
\end{equation*}
In the theorem below, we discuss the relevant points pertaining to the existence, uniqueness, and boundedness of such solutions for the case $\textnormal{Im}{(\omega_{\textnormal{c}})} < 0$. Detailed proofs are worked out explicitly in, e.g., \cite{ReedSimon3, Kronthaler, FS, FKSY5}. The results and proofs for the case $\textnormal{Im}{(\omega_{\textnormal{c}})} > 0$ are in essence identical.
\begin{Thm} \label{TheoremJost}
For each $\omega_{\textnormal{c}} \in \mathbb{C}$ with $\omega_{\textnormal{c}} \not= 0$ and $\textnormal{Im}{(\omega_{\textnormal{c}})} < 0$, the Jost-type equations (\ref{Josteq}) have unique solutions $\mathscr{Y}_{\pm}(r_{\star})$ obeying
\begin{equation*}
\lim_{r_{\star} \rightarrow \infty} \bigg|\exp{\biggl(\textnormal{i} \, \textnormal{sign}(\omega_{\textnormal{c}}) \biggl[\sqrt{|\omega_{\textnormal{c}}|^2 - m^2} \, r_{\star} + \frac{M m^2}{\sqrt{|\omega_{\textnormal{c}}|^2 - m^2}} \ln{(r_{\star})}\biggr]\biggr)} \mathscr{Y}_{\pm}(r_{\star})\bigg| < \infty \, .
\end{equation*}
These solutions are moreover continuously differentiable in $r_{\star}$ on the interval $(0, \infty)$ with
\begin{equation*}
\lim_{r_{\star} \rightarrow \infty} \biggl[\exp{\biggl(\textnormal{i} \, \textnormal{sign}(\omega_{\textnormal{c}}) \biggl[\sqrt{|\omega_{\textnormal{c}}|^2 - m^2} \, r_{\star} + \frac{M m^2}{\sqrt{|\omega_{\textnormal{c}}|^2 - m^2}} \ln{(r_{\star})}\biggr]\biggr)} \mathscr{Y}_{\pm}(r_{\star})\biggr] = 1
\end{equation*}
and
\begin{equation*}
\lim_{r_{\star} \rightarrow \infty} \biggl[\exp{\biggl(\textnormal{i} \, \textnormal{sign}(\omega_{\textnormal{c}}) \biggl[\sqrt{|\omega_{\textnormal{c}}|^2 - m^2} \, r_{\star} + \frac{M m^2}{\sqrt{|\omega_{\textnormal{c}}|^2 - m^2}} \ln{(r_{\star})}\biggr]\biggr)} \partial_{r_{\star}}\mathscr{Y}_{\pm}(r_{\star})\biggr] = - \textnormal{i} \, \textnormal{sign}(\omega_{\textnormal{c}}) \sqrt{|\omega_{\textnormal{c}}|^2 - m^2} \, .
\end{equation*}
For each fixed value of $r_{\star}$, $\mathscr{Y}_{\pm}(r_{\star})$ and $\partial_{r_{\star}}\mathscr{Y}_{\pm}(r_{\star})$ are functions that are analytic in $\{\omega_{\textnormal{c}} \, | \, \textnormal{Im}{(\omega_{\textnormal{c}})} < 0\}$, continuous in $\{\omega_{\textnormal{c}} \, | \, \omega_{\textnormal{c}}  \not= 0 \,\,\, \textnormal{and} \,\,\, \textnormal{Im}{(\omega_{\textnormal{c}})} < 0\}$, and satisfy the bound
\begin{equation*}
\begin{split}
& \bigg|\mathscr{Y}_{\pm}(r_{\star}) - \exp{\biggl(- \textnormal{i} \, \textnormal{sign}(\omega_{\textnormal{c}}) \biggl[\sqrt{|\omega_{\textnormal{c}}|^2 - m^2} \, r_{\star} + \frac{M m^2}{\sqrt{|\omega_{\textnormal{c}}|^2 - m^2}} \ln{(r_{\star})}\biggr]\biggr)}\bigg| \\ \\
& \hspace{3.58cm}\leq \exp{\Bigl(\textnormal{Im}\bigl(\sqrt{\mathscr{V}_{\infty}(r_{\star})} \,\, \bigr) \, r_{\star}\Bigr)} \left|\exp{\bigl(\mathfrak{Q}^{\pm}(r_{\star})\bigr)} - 1\right| 
\end{split}
\end{equation*}
as well as
\begin{equation*}
\begin{split}
& \bigg|\partial_{r_{\star}} \mathscr{Y}_{\pm}(r_{\star}) + \exp{\biggl(- \textnormal{i} \, \textnormal{sign}(\omega_{\textnormal{c}}) \biggl[\sqrt{|\omega_{\textnormal{c}}|^2 - m^2} \, r_{\star} + \frac{M m^2}{\sqrt{|\omega_{\textnormal{c}}|^2 - m^2}} \ln{(r_{\star})}\biggr]\biggr)} \bigg. \\ \\
& \bigg. \times \, \textnormal{i} \, \textnormal{sign}(\omega_{\textnormal{c}}) \biggl(\sqrt{|\omega_{\textnormal{c}}|^2 - m^2} + \frac{M m^2}{\sqrt{|\omega_{\textnormal{c}}|^2 - m^2} \, r_{\star}}\biggr)\bigg| \leq \exp{\Bigl(\textnormal{Im}\bigl(\sqrt{\mathscr{V}_{\infty}(r_{\star})} \,\, \bigr) \, r_{\star} + \mathfrak{Q}^{\pm}(r_{\star})\Bigr)} \int_{r_{\star}}^{\infty} \big|\mathscr{V}^{\pm}_{\textnormal{reg.}}(y)\big| \, \textnormal{d}y \, ,
\end{split}
\end{equation*}
where
\begin{equation*}
\mathfrak{Q}^{\pm}(r_{\star}) := \int_{r_{\star}}^{\infty} \frac{4 y \, \big|\mathscr{V}^{\pm}_{\textnormal{reg.}}(y)\big|}{1 + y \, \big|\sqrt{\mathscr{V}_{\infty}(y)}\big|} \, \exp{\Bigl(\bigl[\textnormal{Im}\bigl(\sqrt{\mathscr{V}_{\infty}(y)} \,\, \bigr) + \big|\textnormal{Im}\bigl(\sqrt{\mathscr{V}_{\infty}(y)} \,\, \bigr)\big|\bigr] y\Bigr)} \, \textnormal{d}y \, .
\end{equation*}
\end{Thm}
\noindent 
It remains to determine the Jost-type equations with boundary conditions prescribed at the event horizon and at the Cauchy horizon. This can be done using a similar approach as in the above case. For details, we again refer to \cite{FS, FKSY5}.

We now specify the fundamental solutions $\Phi_1(r; r')$ and $\Phi_2(r; r')$ of the radial system (\ref{RADODESYS}) with complex-valued frequencies. Due to the high degree of complexity of this system, explicit analytical expressions for its fundamental solutions are not known. Thus, we describe them in terms of suitable asymptotic expansions. To this end, we define, on the one hand, auxiliary functions that have the proper decay at infinity (cf.\ Lemma \ref{L1})
\begin{equation*}
\begin{split}
& \widehat{\Phi}^{(\infty)}(r) := |\Delta|^{- 1/2} \biggl[d_{1, \infty} \, e^{\textnormal{i} \phi_+(r_{\star}(r))} \left(\begin{array}{cc} 1 \\ 0 \end{array}\right) + \mathcal{O}\biggl(\frac{1}{r_{\star}(r)}\biggr)\biggr] \hspace{0.48cm} \textnormal{for} \,\,\,\,\,\,\,\,  \begin{cases} \textnormal{Im}{(\omega_{\epsilon})} < 0 \,\,\,\,\, \textnormal{if} \,\,\,\,\, |\omega_{\epsilon}| \geq m \\ \textnormal{Re}{(\omega_{\epsilon})} \geq 0 \,\,\,\,\,\, \hspace{-0.045cm} \textnormal{if} \,\,\,\,\, |\omega_{\epsilon}| < m \end{cases} \\ \\
& \widecheck{\Phi}^{(\infty)}(r) := d_{2, \infty} \, e^{- \textnormal{i} \phi_-(r_{\star}(r))} \left(\begin{array}{cc} 0 \\ 1 \end{array}\right) + \mathcal{O}\biggl(\frac{1}{r_{\star}(r)}\biggr) \hspace{1.775cm} \textnormal{for} \,\,\,\,\,\,\,\, \begin{cases} \textnormal{Im}{(\omega_{\epsilon})} > 0 \,\,\,\,\, \textnormal{if} \,\,\,\,\, |\omega_{\epsilon}| \geq m \\ \textnormal{Re}{(\omega_{\epsilon})} < 0 \,\,\,\,\,\, \hspace{-0.045cm} \textnormal{if} \,\,\,\,\, |\omega_{\epsilon}| < m \, , \end{cases} 
\end{split}
\end{equation*}
where the quantities $d_{1/2, \infty}$ are scalar constants and the functions $\phi_{\pm}$ are given in (\ref{asypha}), but  with frequencies $\omega_{\epsilon} \in \{\omega + \textnormal{i} \epsilon, \omega - \textnormal{i} \epsilon\}$, for which $\epsilon > 0$, and with the substitution $\sqrt{\omega^2 - m^2} \rightarrow \sqrt{|\omega_{\epsilon}|^2 - m^2}$. On the other hand, we use auxiliary functions that are finite at the event and the Cauchy horizon and further comply with the associated asymptotics (cf.\ Lemma \ref{L2a3}) 
\begin{equation*}
\begin{split}
& \widehat{\Phi}^{(+)}(r) := |\Delta|^{- 1/2} \biggl[d_{1, r_+} \, e^{2 \textnormal{i} \bigl(\omega_{\epsilon} + k \Omega^{(+)}_{\textnormal{Kerr}}\bigr) r_{\star}(r)} \left(\begin{array}{cc} 1 \\ 0 \end{array}\right) + \mathcal{O}\bigl(e^{q r_{\star}(r)}\bigr)\biggr]  \hspace{0.6cm} \textnormal{for} \,\,\,\,\,\,\,\,  \textnormal{Im}{(\omega_{\epsilon})} < 0 \\ \\
& \widecheck{\Phi}^{(+)}(r) :=  d_{2, r_+} \left(\begin{array}{cc} 0 \\ 1 \end{array}\right) + \mathcal{O}\bigl(e^{q r_{\star}(r)}\bigr) \hspace{4.94cm} \textnormal{for} \,\,\,\,\,\,\,\,  \textnormal{Im}{(\omega_{\epsilon})} > 0 \\ \\
& \widehat{\Phi}^{(-)}(r) := d_{1, r_-} \left(\begin{array}{cc} 0 \\ 1 \end{array}\right) + \mathcal{O}\bigl(e^{- q r_{\star}(r)}\bigr) \hspace{4.715cm} \textnormal{for} \,\,\,\,\,\,\,\, \textnormal{Im}{(\omega_{\epsilon})} < 0 \\ \\
& \widecheck{\Phi}^{(-)}(r) := |\Delta|^{- 1/2} \biggl[d_{2, r_-} \, e^{2 \textnormal{i} \bigl(\omega_{\epsilon} + k \Omega^{(-)}_{\textnormal{Kerr}}\bigr) r_{\star}(r)} \left(\begin{array}{cc} 1 \\ 0 \end{array}\right) + \mathcal{O}\bigl(e^{- q r_{\star}(r)}\bigr)\biggr] \hspace{0.41cm} \textnormal{for} \,\,\,\,\,\,\,\, \textnormal{Im}{(\omega_{\epsilon})} > 0 \ ,
\end{split}
\end{equation*}
where $d_{1/2, r_{\pm}}$ are scalar constants as well. To clarify the notation, we point out that the superscripts $(\infty)$, $(+)$, and $(-)$ designate asymptotic expansions at infinity, the event horizon, and the Cauchy horizon, respectively. Besides, we remark that the existence and uniqueness of fundamental solutions of the radial system (\ref{RADODESYS}) with these particular asymptotic expansions follows from the above study of the radial Jost-type equations and, moreover, that these asymptotic expansions ensure that the fundamental solutions are square-integrable. For example, as the Regge--Wheeler coordinate $r_{\star}$ tends to minus infinity at the event horizon, the exponential factor in the auxiliary function $\widehat{\Phi}^{(+)}$ tends to zero because $\textnormal{Im}{(\omega_{\epsilon})} < 0$. However, this exponential factor would not be square-integrable if $\textnormal{Im}{(\omega_{\epsilon})} > 0$. Last, we introduce an auxiliary function that satisfies the Dirichlet-type boundary condition (\ref{DTBCKERRGEOM}) at $r = r_0$ 
\begin{equation*}
\Phi_{\partial M}(r) := \Phi_{\partial M}^{(1)}(r) \left(\begin{array}{cc} 1 \\ \textnormal{i} \sqrt{|\Delta|} \, / \, r_+ \end{array}\right) ,
\end{equation*}
with $\Phi_{\partial M}^{(1)}$ denoting its first component. Then, in case 
\begin{equation*}
|\omega_{\epsilon}| \geq m \,\,\,\,\, \textnormal{and} \,\,\,\,\, \textnormal{Im}{(\omega_{\epsilon})} < 0 \quad \textnormal{or} \quad |\omega_{\epsilon}| < m \, , \,\,\,\,\, \textnormal{Im}{(\omega_{\epsilon})} < 0 \, , \,\,\,\,\, \textnormal{and}  \,\,\,\,\, \textnormal{Re}{(\omega_{\epsilon})} \geq 0 \, ,
\end{equation*} 
the fundamental radial solutions $\Phi_1$ and $\Phi_2$ read \newline
\begin{equation} \label{FundSols1}
\begin{split}
& \Phi_1(r; r_+ < r' < \infty) = \Theta(r - r') \, \widehat{\Phi}^{(\infty)}(r) \\  
& \Phi_2(r; r_+ < r' < \infty) = \Theta(r' - r) \, \Theta(r - r_+) \, \widehat{\Phi}^{(+)}(r) \\ \\
& \Phi_1(r; r_- < r' \leq r_+) = \Theta(r - r') \bigl[\Theta(r_+ - r) \, \widehat{\Phi}^{(-)}(r) + \Theta(r - r_+) \, \widehat{\Phi}^{(\infty)}(r)\bigr] \\ 
& \Phi_2(r; r_- < r' \leq r_+) = \Theta(r - r') \, \Theta(r_+ - r) \, \widehat{\Phi}^{(+)}(r) \\ \\
& \Phi_1(r; r_0 \leq r' \leq r_-) = \Theta(r - r') \bigl[\Theta(r_+ - r) \, \widehat{\Phi}^{(-)}(r) + \Theta(r - r_+) \, \widehat{\Phi}^{(\infty)}(r)\bigr] \\ 
& \Phi_2(r; r_0 \leq r' \leq r_-) = \Theta(r' - r) \, \Phi_{\partial M}(r) \, ,
\end{split}
\end{equation}
whereas for
\begin{equation*}
|\omega_{\epsilon}| \geq m \,\,\,\,\, \textnormal{and} \,\,\,\,\, \textnormal{Im}(\omega_{\epsilon}) > 0 \quad \textnormal{or} \quad |\omega_{\epsilon}| < m \, , \,\,\,\,\, \textnormal{Im}(\omega_{\epsilon}) > 0 \, , \,\,\,\,\, \textnormal{and} \,\,\,\,\, \textnormal{Re}(\omega_{\epsilon}) < 0 \, ,
\end{equation*}
they are given by
\begin{equation}\label{FundSols2}
\begin{split}
& \Phi_1(r; r_+ < r' < \infty) = \Theta(r - r') \, \widecheck{\Phi}^{(\infty)}(r) \\ 
& \Phi_2(r; r_+ < r' < \infty) = \Theta(r' - r) \bigl[\Theta(r - r_-) \, \widecheck{\Phi}^{(+)}(r) + \Theta(r_- - r) \, \Phi_{\partial M}(r)\bigr] \\ \\
& \Phi_1(r; r_- < r' \leq r_+) = \Theta(r' - r) \bigl[\Theta(r - r_-) \, \widecheck{\Phi}^{(+)}(r) + \Theta(r_- - r) \, \Phi_{\partial M}(r)\bigr] \\ 
& \Phi_2(r; r_- < r' \leq r_+) = \Theta(r' - r) \, \Theta(r - r_-) \, \widecheck{\Phi}^{(-)}(r) \\ \\
& \Phi_1(r; r_0 \leq r' \leq r_-) = \Theta(r - r') \, \Theta(r_- - r) \, \widecheck{\Phi}^{(-)}(r) \\ 
& \Phi_2(r; r_0 \leq r' \leq r_-) = \Theta(r' - r) \, \Phi_{\partial M}(r) \, .
\end{split}
\end{equation}
In the remaining cases 
\begin{equation*}
|\omega_{\epsilon}| < m \, , \,\,\,\,\, \textnormal{Im}(\omega_{\epsilon}) < 0 \, , \,\,\,\,\, \textnormal{and} \,\,\,\,\, \textnormal{Re}(\omega_{\epsilon}) < 0 \quad \textnormal{or} \quad |\omega_{\epsilon}| < m \, , \,\,\,\,\, \textnormal{Im}{(\omega_{\epsilon})} > 0 \, , \,\,\,\,\, \textnormal{and} \,\,\,\,\, \textnormal{Re}{(\omega_{\epsilon})} \geq 0 \, ,
\end{equation*}
we also obtain the fundamental solutions (\ref{FundSols1}) or (\ref{FundSols2}), respectively, but with the auxiliary functions $\widehat{\Phi}^{(\infty)}$ and $\widecheck{\Phi}^{(\infty)}$ interchanged. A case-by-case analysis shows that these solutions are uniquely determined by the conditions and asymptotics listed in the proof of Theorem \ref{Prop}. 

\end{appendix}

\vspace{0.5cm}



\begin{thebibliography}{99}

\bibitem{alfaro+regge}
V.~de Alfaro and T.~Regge, ``Potential scattering,'' North-Holland Publishing Company (1965).

\bibitem{Bartnik}
R.\ A.~Bartnik and P.\ T.~Chru{\'s}ciel, ``Boundary value problems for Dirac-type equations,'' arXiv:math/ 0307278 [math.DG], Journal f\"ur die Reine und Angewandte Mathematik {\bf 579}, 13 (2005).

\bibitem{BL}
R.\ H.~Boyer and R.\ W.~Lindquist, ``Maximal analytic extension of the Kerr metric,'' Journal of Mathematical Physics {\bf 8}, 265 (1967).
    
\bibitem{Carter}
B.~Carter, ``Black hole equilibrium states,'' in Black holes/Les astres occlus, Ecole d'\'et\'e Phys. Th\'eor., Les Houches (1972).
  
\bibitem{ChandraBook} 
S.~Chandrasekhar, ``The mathematical theory of black holes,'' Oxford University Press (1983). 

\bibitem{Chernoff} 
P.\ R.~Chernoff, ``Essential self-adjointness of powers of generators of hyperbolic equations,'' Journal of Functional Analysis {\bf 12}, 401 (1973). 

\bibitem{CJJTW}
A.~Chodos, R.\ L.~Jaffe, K.~Johnson, C.\ B.~Thorn, and V.\ F.~Weisskopf, ``New extended model of hadrons,'' Physical Review D {\bf 9}, 3471 (1974). 

\bibitem{Edd}
A.\ S.~Eddington, ``A comparison of Whitehead's and Einstein's formulae,'' Nature {\bf 113}, 192 (1924).

\bibitem{Fink}
D.~Finkelstein, ``Past-future asymmetry of the gravitational field of a point particle,'' Physical Review {\bf 110}, 965 (1958).	

\bibitem{FKSM2}
F.~Finster, N.~Kamran, J.~Smoller, and S.\ T.~Yau, ``Decay rates and probability estimates for massive Dirac particles in the Kerr--Newman black hole geometry,'' arXiv:gr-qc/0107094, Communications in Mathematical Physics {\bf 230}, 201 (2002). 

\bibitem{FKSM3}
F.~Finster, N.~Kamran, J.~Smoller, and S.\ T.~Yau, ``The long-time dynamics of Dirac particles in the Kerr--Newman black hole geometry,'' arXiv:gr-qc/0005088, Advances in Theoretical and Mathematical Physics {\bf 7}, 25 (2003).

\bibitem{FKSM4}
F.~Finster, N.~Kamran, J.~Smoller, and S.\ T.~Yau, ``An integral spectral representation of the propagator for the wave equation in the Kerr geometry,'' arXiv:gr-qc/0310024, Communications in Mathematical Physics {\bf 260}, 257 (2005).    
        
\bibitem{FKSY5}
F.~Finster, N.~Kamran, J.~Smoller, and S.\ T.~Yau, ``Decay of solutions of the wave equation in the Kerr geometry,'' arXiv:gr-qc/0504047, Communications in Mathematical Physics {\bf 264}, 465 (2006).    

\bibitem{FS}
F.~Finster and J.~Smoller, ``Decay of solutions of the Teukolsky equation for higher spin in the Schwarzschild geometry,'' arXiv:gr-qc/0607046, Advances in Theoretical and Mathematical Physics {\bf 13}, 71 (2009).         
        
\bibitem{FRö}
F.~Finster and C.~R\"oken, ``Self-adjointness of the Dirac Hamiltonian for a class of non-uniformly elliptic boundary value problems,'' arXiv:1512.00761 [math-ph], Annals of Mathematical Sciences and Applications {\bf 1}, 301 (2016).

\bibitem{John} 
F.~John, ``Partial differential equations,'' Springer-Verlag (1991).

\bibitem{Kronthaler}
J.~Kronthaler, ``The Cauchy problem for the wave equation in the Schwarzschild geometry,'' arXiv:gr-qc/0601131, Journal of Mathematical Physics {\bf 47}, id. 042501 (2006). 

\bibitem{BON}
B.~O'Neill, ``The geometry of Kerr black holes,'' Dover Publications (2014).

\bibitem{RauchTaylor} 
J.~Rauch and M.~Taylor, ``Essential self-adjointness of powers of generators of hyperbolic mixed problems,'' Journal of Functional Analysis {\bf 12}, 491 (1973).

\bibitem{ReedSimon}
M.~Reed and B.~Simon, ``Methods of modern mathematical physics I: functional analysis,'' Academic Press (1980). 

\bibitem{ReedSimon3}
M.~Reed and B.~Simon, ``Methods of modern mathematical physics III: scattering theory,'' Academic Press (1979). 

\bibitem{R}
C.~R\"oken, ``The massive Dirac equation in Kerr geometry: separability in Eddington--Finkelstein-type coordinates and asymptotics,'' arXiv:1506.08038 [gr-qc], General Relativity and Gravitation {\bf 49}, 39 (2017).

\bibitem{taylor1} 
M.~Taylor, ``Partial differential equations I,'' Springer-Verlag (1996).

\bibitem{Taylor} 
M.~Taylor, ``Partial differential equations III,'' Springer-Verlag (1997).

\bibitem{WhittakerWatson} 
E.\ T.~Whittaker and G.\ N.~Watson, ``A course of modern analysis,'' Cambridge University Press (1927).

\end{thebibliography}
\end{document}